\newtheorem{thm}{Theorem}[subsection]
\newtheorem{prop}[thm]{Proposition}
\newtheorem{cor}[thm]{Corollary}
\newtheorem{lemma}[thm]{Lemma}
\newtheorem{defi}[thm]{Definition}
\newenvironment{de}{\begin{defi} \rm}{\end{defi}}
\newtheorem{exam}[thm]{Example}
\newenvironment{ex}{\begin{exam} \rm}{\end{exam}}
\newtheorem{remark}[thm]{Remark}
\newtheorem*{lemma-no-num}{Lemma}
\newcommand{\lex}{\operatorname{lex}}
\newcommand{\N}{\mathbb{N}}
\definecolor{color0}{RGB}{127, 0, 255}
\definecolor{color1}{RGB}{128, 128, 128}
\definecolor{color2}{RGB}{255, 0, 127}
\definecolor{color3}{RGB}{255, 0, 255}
\title{An explicit algorithm for normal forms in small overlap monoids}
\author{James D. Mitchell and Maria Tsalakou}
\date{\today}
\begin{document}

\maketitle
\begin{abstract}
We describe a practical algorithm for computing normal forms for semigroups and monoids with finite presentations satisfying so-called small overlap conditions. 
Small overlap conditions are natural conditions on the relations in a presentation, 
which were introduced by J. H. Remmers and subsequently studied extensively by M. Kambites. Presentations satisfying these conditions are ubiquitous; Kambites showed that a randomly chosen finite presentation satisfies the $C(4)$ condition with probability tending to 1 as the sum of the lengths of relation words tends to infinity. Kambites also showed that several key problems for finitely presented semigroups and monoids are tractable in $C(4)$ monoids: the word problem is solvable in $O(\min\{|u|, |v|\})$ time in the size of the input words $u$ and $v$; the uniform word problem for $\langle A|R\rangle$ is solvable in $O(N ^ 2 \min\{|u|, |v|\})$ where $N$ is the sum of the lengths of the words in $R$;
 and a normal form for any given word $u$ can be found in $O(|u|)$ time.  Although Kambites' algorithm for solving the word problem in $C(4)$ monoids is highly practical, it appears that the coefficients in the linear time algorithm for computing normal forms are too large in practice. 
 
 In this paper, we present an algorithm for computing normal forms in $C(4)$ monoids that has time complexity $O(|u| ^ 2)$ for input word $u$, but where the coefficients are sufficiently small to allow for practical computation. Additionally, we show that the uniform word problem for small overlap monoids can be solved in $O(N \min\{|u|, |v|\})$ time.

\end{abstract}

\section{Introduction}

In this paper we present an explicit algorithm for computing normal forms of words in so-called small overlap monoids.
The problem of finding normal forms for finitely presented monoids, semigroups, and groups, is classical, and is widely studied in the literature; some classical examples can be found in~\cite{Gilman, sims_1994}, and more recently in~\cite{overlaps2}.  In a monoid $M$ defined by a presentation $\left\langle A \, | \,R \right\rangle$, the \textit{word problem} asks if given $u, v\in A ^ *$, does there exist an algorithm deciding whether or not $u$ and $v$ represent the same element of $M$ (i.e. an algorithm which outputs ``yes'' if $(u, v)$ belongs to the least congruence on $A ^ *$ containing $R$ and ``no'' otherwise)? The word problem is said to be \textit{decidable} if such an algorithm exists, and \textit{undecidable} if it does not.
If an algorithm for computing normal forms for a monoid presentation  $\mathcal{P} = \left\langle A \, | \,R \right\rangle$ is available, then the word problem in the monoid defined by $\mathcal{P}$ is decidable by computing normal forms of $u$ and $v$,  
and checking if these two words coincide. In 1947, Markov~\cite{markov} and Post~\cite{post_1947} independently proved that the word problem for monoids is undecidable, in general, and, as such, the problem of finding normal forms for arbitrary finitely presented monoids is also undecidable.

Although undecidable in general, there are many special cases where it is possible to determine the structure of a finitely presented monoid, and to solve the word problem, and there are several well-known algorithms for doing this. Of course, because the word problem is undecidable in general, none of these algorithms can solve the word problem for all finitely presented monoids.
The Todd-Coxeter Algorithm~\cite{todd-coxeter} terminates if and only if the input finite presentation defines a finite monoid. If it does terminate, the output of the Todd-Coxeter Algorithm is the (left or right) Cayley graph of the monoid defined by the input presentation. The word problem is thus solved by following the paths in the Cayley graph starting at the identity and labelled by any $u, v\in A ^ *$, and checking whether these paths end at the same node. The Knuth-Bendix Algorithm~\cite{knuth-bendix}, can terminate even when the monoid $M$ defined by the input (again finite) presentation is infinite. The output of the Knuth-Bendix Algorithm is a finite noetherian complete rewriting system defining the same monoid $M$ as the input presentation. The normal forms of such a rewriting system are just the words to which no rewriting rule can be applied, and a normal form can be obtained for any word, by arbitrarily applying relations in the rewriting system until no further relations apply. 
This permits the word problem to be solved in monoids where the Knuth-Bendix Algorithm terminates via the computation of normal forms. 

In this paper we are concerned with a class of finitely presented monoids, introduced by Remmers~\cite{Remmers}, and studied further by Kambites~\cite{overlaps1}, ~\cite{overlaps2}, and~\cite{kambites2011}. 

If $\mathcal{P} = \left\langle A \, | \,R \right\rangle$ is a monoid presentation, then we will refer to the left or right hand side of any pair $(u, v) \in R$ as a \textit{relation word}. 
A word $w\in A ^*$ is said to be a \textit{piece} of $\mathcal{P}$ if $w$ is a factor of at least two distinct relation words, or $w$ occurs more than once as a factor of a single relation word (possibly overlapping). Note that if a relation word $u$ appears as one side of more than one relation in the presentation, then $u$ is not considered a piece.
A monoid presentation $\mathcal{P}$ is said to satisfy the condition $C(n)$,  $n \in \mathbb{N}$, 
if the minimum number of pieces in any factorisation of a relation word is at 
 least $n$. If no relation word in $\mathcal{P}$ equals the empty word, then $\mathcal{P}$ satisfies $C(1)$. If no relation word can be written as a product of pieces, then we say that $\mathcal{P}$ satisfies $C(n)$ for all $n\in \N$. 
If a presentation satisfies $C(n)$, then it also satisfies $C(k)$ for every $k\in \N$ such that $1\leq k < n$.

For example, the presentation $\mathcal{P}=\left\langle a, b, c \, | \,abc = cba \right\rangle$ satisfies $C(3)$. The set of pieces is $P = \{\varepsilon, a, b, c\}$ and each relation word can be written as a product of exactly 3 pieces. Hence $\mathcal{P}$ does not satisfy $C(4)$. Similarly, for $\mathcal{P}=\left\langle a, b, c \, | \,acba = a^2bc \right\rangle$ the set of pieces is $P = \{\varepsilon, a, b, c\}$ and $\mathcal{P}$ satisfies $C(4)$ but not $C(5)$. If $\mathcal{P}=\left\langle a, b, c, d\, | \,acba = a^2bc, acba = db^3d \right\rangle$, the set of pieces is $P = \{\varepsilon, a, b, c, d, b^2\}$ and $\mathcal{P}$ satisfies $C(4)$ but not $C(5)$, since the relation words $acba$ and $a^2bc$ can be written as the product of 4 pieces. For the presentation $\mathcal{P}=\left\langle a, b, c, d \, | \,a^2bc = a^2bd \right\rangle$ the set of pieces is $P=\{\varepsilon, a, b, a^2, a^2b\}$ and none of the relation words can be written as a product of pieces since neither $c$ nor $d$ are pieces. 

If a finite monoid presentation satisfies the condition $C(4)$, then we will refer to the monoid defined by the presentation as a \textit{small overlap monoid}. Remmers initiated the study of $C(3)$ monoids in the paper~\cite{Remmers}; see also~\cite[Chapter 5]{higgins}.
If a monoid presentation $\langle A \mid R\rangle$ satisfies $C(3)$, then the number of words in any class of $R ^ {\#}$ is finite, and so the monoid defined by the presentation is infinite; see~\cite[Corollary 5.2.16]{higgins}. The word problem is solvable in $C(3)$ monoids but the algorithm described in \cite[Theorem 5.2.15]{higgins} has exponential complexity.
Groups with similar combinatorial conditions have also been studied and such groups are called \textit{small cancellation groups}. Small cancellation groups have decidable word problem; see~\cite[Chapter 5]{lyndon} for further details. 

In~\cite{Kambites2011aa}, Kambites' showed that the probability that a randomly chosen finite monoid presentation is $C(4)$ tends to $1$ as the length of the presentation tends to infinity; and the rate of convergence appears to be rather high; see \cref{tab:converge}. Hence, in some sense, algorithms for small overlap monoids are widely applicable. In Kambites~\cite{kambites2011}, an explicit algorithm (\textbf{WpPrefix}) is presented for solving the word problem for finitely presented   monoids satisfying $C(4)$. If $u, v\in A ^ *$, then, provided that certain properties of the presentation are known already, Kambites' Algorithm requires $O(\min\{|u|, |v|\})$ time. In Kambites~\cite{overlaps2}, among many other results, it is shown that there exists a linear time algorithm for computing normal forms in $C(4)$ monoids, given a preprocessing step that requires polynomial time in the size of the alphabet and the maximum length of a relation word. 
\begin{table}[]
    \centering
    \begin{tabular}[t]{cc}
    \begin{tabular}{l|l|l|l}
        $n$  & $C(4)$ monoids & monoids      &ratio \\ \hline
        1  & 0              & 1            & 0.0 \\
        2  & 0              & 14           & 0.0\\
        3  & 0              & 76           & 0.0\\
        4  & 0              & 344          & 0.0 \\
        5  & 0              & 1,456        & 0.0\\
        6  & 0              & 5,984        & 0.0\\
        7  & 2              & 24,256       & 0.000082 \\
        8  & 26             & 97,664       & 0.000266\\
        9  & 760            & 391,936      & 0.001939 \\
        10 & 17,382         & 1,570,304    & 0.011069 \\
        11 & 217,458        & 6,286,336    & 0.034592\\
        12 & 1,994,874      & 25,155,584   & 0.079301\\
        13 & 14,633,098     & 100,642,816  & 0.145396 \\\hline
        14 & -              & 4.026122e+08 & 0.186342\\
        15 & -              & 1.610531e+09 & 0.280811\\
        16 & -              & 6.442287e+09 & 0.374679\\
        17 & -              & 2.576948e+10 & 0.473369\\
        18 & -              & 1.030786e+11 & 0.594068\\
        19 & -              & 4.123155e+11 & 0.681053\\
        20 & -              & 1.649265e+12 & 0.732404\\
        21 & -              & 6.597065e+12 & 0.801495\\
        22 & -              & 2.638827e+13 & 0.843976\\
        23 & -              & 1.055531e+14 & 0.884619\\
        24 & -              & 4.222124e+14 & 0.929988\\
        25 & -              & 1.688850e+15 & 0.941666\\
     \end{tabular}            
     & 
    \begin{tabular}{l|l|l|l}
        $n$  & $C(4)$ monoids & monoids      & ratio \\ \hline
        26 & -              & 6.755399e+15 & 0.963477\\
        27 & -              & 2.702160e+16 & 0.970104\\
        28 & -              & 1.080864e+17 & 0.977796\\
        29 & -              & 4.323456e+17 & 0.990216\\
        30 & -              & 1.729382e+18 & 0.989878\\
        31 & -              & 6.917529e+18 & 0.994861\\
        32 & -              & 2.767012e+19 & 0.995684\\
        33 & -              & 1.106805e+20 & 0.996879\\
        34 & -              & 4.427219e+20 & 0.998821\\
        35 & -              & 1.770887e+21 & 0.996402\\
        36 & -              & 7.083550e+21 & 0.999423\\
        37 & -              & 2.833420e+22 & 0.999776\\
        38 & -              & 1.133368e+23 & 0.997902\\
        39 & -              & 4.533472e+23 & 0.999928\\
        40 & -              & 1.813389e+24 & 0.99953 \\
        41 & -              & 7.253555e+24 & 0.999982\\
        42 & -              & 2.901422e+25 & 0.999964\\
        43 & -              & 1.160569e+26 & 0.999986\\
        44 & -              & 4.642275e+26 & 0.99999 \\
        45 & -              & 1.856910e+27 & 0.999972\\
        46 & -              & 7.427640e+27 & 1       \\
        47 & -              & 2.971056e+28 & 1\\
        48 & -              & 1.188422e+29 & 0.999998 \\
        49 & -              & 4.753690e+29 & 1\\
        50 & -              & 1.901476e+30 & 1 \\
    \end{tabular}
    \end{tabular}
    \caption{The number of $2$-generated $1$-relation monoids with the $C(4)$ condition where the maximum length of a relation word is $n$. The values for $n\geq 14$ were obtained from a uniform sample of $1000$ pairs of words $(l, r)$ of length where $|l| = n$ and $|r| \in \{1, \ldots, n\}$.}
    \label{tab:converge}
\end{table}
The normal form algorithm from~\cite{overlaps2} is not stated explicitly in~\cite{overlaps2},
and it appears that the constants in the polynomial time preprocessing step are rather large; see~\cref{section-normal-forms} for further details. The purpose of this paper is to provide an explicit algorithm for computing normal forms in $C(4)$ monoids with sufficiently small coefficients to permit its practical use. 
If it is already known that the input presentation satisfies $C(4)$, and a certain decomposition of the relation words is known, then 
the time complexity of the algorithm we present is $O(|w| ^ 2)$ for input word $w$, and the space complexity is the sum of $|A|$ and the lengths of all of the relation words in $R$.  We will show that it is possible to show that the $C(4)$ condition holds, and that the required decomposition of the relation words can be found, in $O(N + n)$ time where $N$ is the sum of the lengths of the relation words, and $n$ is the number of relation words in \cref{section-uniform-wp}. 

In \cref{prerequisites}, we present some necessary background material, and establish some notation. In \cref{section-uniform-wp}, we show that it is possible to determine the greatest $n\in \mathbb{N}$ such that a presentation $\mathcal{P}$ satisfies $C(n)$ in linear time in the sum of the lengths of the relation words in $\mathcal{P}$ using Ukkonen's Algorithm~\cite{Ukkonen}.  
In \cref{section-normal-forms}, we discuss the normal form algorithm of Kambites from~\cite{overlaps2}. In \cref{section-possible-prefix}, we describe, prove correct, and analyse the complexity of, a subroutine that is required in the practical normal form algorithm that is the main focus of this paper. 
Finally, in \cref{section-main}
we present our normal form algorithm, prove that it is correct, and analyse its complexity. 

The algorithm for solving the word problem in $C(4)$ monoids given in~\cite{kambites2011}, and the main algorithm from the present paper, were implemented by the authors in the C++ library \textsf{libsemigroups}~\cite{libsemigroups}.

\section{Prerequisites}\label{prerequisites}
\addtocounter{subsection}{1}
In this section we provide some of the prerequisites for understanding small overlap conditions and properties of small overlap monoids.
	
Let $A$ be a non-empty set, called an \textit{alphabet}. 
A word $w$ over $A$ is a finite sequence $w=a_0a_1 \cdots a_m$, $m\geq 0$ of elements of $A$.
The set of all words (including the empty word, denoted by $\varepsilon$) over $A$ with concatenation of words is called the \textit{free monoid} on $A$ and is denoted by $A^*$.
A \textit{monoid presentation} is a pair $\left\langle A \, |\, R \right\rangle$  where $A$ is an alphabet and $R\subseteq A^* \times A^*$ is a set of \textit{relations} on $A^*$. A monoid $M$ is defined by the presentation
$\left\langle A \, | \,R \right\rangle$ if $M$ is isomorphic to $A^*/R^{\#}$ where $R^{\#} \subseteq A^* \times A^*$ is the least congruence on $A^*$ containing $R$.
A \textit{finitely presented monoid} is any monoid defined by a presentation $\left\langle A \, | R \right\rangle$ where $A$ and $R$ are finite, and such a presentation is called a \textit{finite monoid presentation}.
For the rest of the paper, $\mathcal{P}=\left\langle A\,|\,R \right\rangle $ will denote a finite monoid presentation where 
\[
R = \{(W_0, W_1), (W_2, W_3), \ldots, (W_{n - 2}, W_{n - 1})\}.
\] 

If $s, t\in A ^ *$ are such that there exist $x_i, y_i\in A ^ *$ and $(W_i, W_{i + 1})$ or $(W_{i + 1}, W_i)\in R$ with $s = x_iW_iy_i$ and $t = x_{i}W_{i + 1}y_i$, then we write $s\to t$. 
If there exists a sequence of words $s=w_0, w_1, \ldots, w_n=t$ such that $w_i \to w_{i + 1}$ for all $i\in \{0, \ldots, n-1\}$, then we write $s\stackrel{*}{\rightarrow}t$ and we refer to such a sequence as a \textit{rewrite sequence}.
It is routine to verify that $(s, t) \in R ^ {\#}$ if and only if $s\stackrel{*}{\rightarrow} t$ .

We say that a relation word $V$ is a \textit{complement} of a relation word $W$ if there are relation words $V = r_0,r_1, \ldots, r_{n-1} = W$ such that either $(r_i,r_{i+1}) \in R$ or $(r_{i+1},r_i) \in R$ for $0 \leq i \leq n-1$. We say that a complement $V$ of $W$ is a \textit{proper complement} of $W$ if $V\neq W$. The equivalence relation defined by the complements of the relation words is a subset of the congruence $R ^ \#$.
We will write $u \equiv v$ to indicate that the words $u, v \in A^*$ represent the same element of the monoid presented by $\mathcal{P}$ (i.e. that $u/R^\# = v/R^\#$).

The \textit{relation words} of the presentation $\mathcal{P}$ are $W_0, W_1, \ldots, W_{n-1}$. A word $p \in A^*$ is called a \textit{piece} if it occurs as a factor of $W_i$ and $W_j$ where $W_i\not=W_j$, or in two different places (possibly overlapping) in the same relation word in $R$.  Note that the definition allows for the case when $i\not= j$ but $W_i = W_j$. In this case, neither $W_i$ nor $W_j$ is considered a piece (unless for other reasons), because although $W_i$ is a factor of $W_j$, it is not the case that $W_i \neq W_j$. 
By convention the empty word $\varepsilon$ is always a piece.

\begin{de}[cf. \cite{overlaps1}]\label{de-c-n}
 We say that a monoid presentation satisfies the condition $C(n), \, n \in \mathbb{N}$, if no relation word can be written as the product of strictly less than $n$ pieces. 
	The condition $C(1)$ describes those presentations where no relation word is equal to the empty word.
\end{de}

Having given the definition of the condition $C(4)$, we suppose for the remainder of the paper, that our fixed presentation $\mathcal{P} = \langle A| R\rangle$ satisfies the condition $C(4)$.

The following terms are central to the algorithms for $C(4)$ in~\cite{overlaps2, kambites2011} and are used extensively throughout the current paper. 

We say that $s \in A^*$ is a \textit{possible prefix} of a word $w\in A ^ *$ if $s$ is a prefix of some word $w_0\in A ^ *$ such that $w \equiv w_0$. The \textit{maximal piece prefix} of $u$ is the longest prefix of $u$ that is also a piece; we denote the maximal piece prefix of $u$ by $X_u$.
      The \textit{maximal piece suffix} of $u$, $Z_u$, is the longest suffix of $u$ that is also a piece; denoted $Z_u$.
      The word $Y_u$ such that $u=X_uY_uZ_u$ is called the \textit{middle word} of $u$. Since $u$ is a relation word of a presentation satisfying condition $C(4)$, $u$ cannot be written as a product of three pieces, and so the middle word $Y_u$ of $u$ cannot be a piece. In particular, the only relation word containing $Y_u$ as a factor is $u$.

Using the above notation every relation word $u$ in a $C(4)$ presentation can be written as a product of the form $X_uY_uZ_u$. Assume that $\overline{u}$ is a complement of $u$. Then $\overline{u}=X_{\overline{u}}Y_{\overline{u}}Z_{\overline{u}}$. We will write $\overline{X_u}$ instead of $X_{\overline{u}}$, $\overline{Y_u}$ instead of $Y_{\overline{u}}$, and $\overline{Z_u}$ instead of $Z_{\overline{u}}$. We say that $\overline{X_u}$ is a \textit{complement} of $X_u$, $\overline{Y_u}$ is a complement of $Y_u$ and similarly for $Z_u$, $X_uY_u$ and $Y_uZ_u$.

A  prefix of $w\in A ^ *$ that admits a factorization of the form $aXY$, for $XYZ$ a relation word, $a \in A^*$ and $X$ and $Y$ the maximal piece prefix and middle word of $XYZ$ respectively, is called a \textit{relation prefix}. If $w \in A ^ *$ has relation prefixes $aXY$ and $a'X'Y'$ such that $|aXY| = |a'X'Y'|$ for some $a, a'\in A ^*$, then $a = a'$, $X = X'$, and $Y = Y'$ as a direct consequence of the $C(4)$ condition.
A relation prefix of the form $p = bX_0Y_0'X_1Y_1'\cdots X_{n-1}Y_{n-1}'X_nY_n$, $n \geq 1$ and $b\in A ^ *$, is called an \textit{overlap prefix} if it satisfies the following:
\begin{enumerate}[\rm (i)]
	\item 
	$Y_i'$ is a proper non-empty prefix of the middle word $Y_i$ of some relation word $X_iY_iZ_i$; and 
	
	\item there does not exist a factor in $p$ of the form $X_mY_m$ beginning before the end of  $b$.
\end{enumerate} 
A relation prefix $aXY$ of a word $u$ is called a \textit{clean relation prefix of $u$} if $u$ does not have a prefix of the form $aXY'X_0Y_0$, where $Y'$ is a proper, non-empty prefix of $Y$.
 An overlap prefix of $u$ that is also a clean relation prefix is called a \textit{clean overlap prefix of $u$}.
If $p$ is a piece, then the word $u$ is called \textit{$p$-active} if $pu$ has a relation prefix $aXY$ for some $a\in A^ *$ such that $|a|<|p|$.

\begin{ex}
Suppose that $$\mathcal{P}=\left\langle a,b,c,d \, | \, a^2bc = acba,\, adca = bd^2b \right\rangle.$$ 
The relation words are $W_0 = a^2bc, W_1 = acba, W_2 = adca, W_3 = bd^2b$ and the set of pieces of this presentation is $P =\{\varepsilon, a, b, c, d\}$.
Then $W_0$ is a proper complement of $W_1$, and $W_2$ is a proper complement of $W_3$. Since none of the relation words can be written as the product of less than 4 pieces, $\mathcal{P}$ is a $C(4)$ presentation. We have $X_{W_0} = a, \, Y_{W_0} = ab, \, Z_{W_0} = c, \, X_{W_1} = a, \, Y_{W_1} = cb, \, Z_{W_1} = a$, $X_{W_2} = a, \, Y_{W_2} = dc, \, Z_{W_2} = a$ and $X_{W_3} = b, \, Y_{W_3} = d^2, \, Z_{W_3} = b$.

Let $w = cba^2bd^2a$. The word $w$ has two relation prefixes: $cba^2b = cbX_{W_0}Y_{W_0}$ and $cba^2bd^2 = cba^2X_{W_3}Y_{W_3}$. The relation prefix $cbX_{W_0}Y_{W_0}$ is not clean since $w$ has a prefix of the form $cbX_{W_0}Y_{W_0}'X_{W_3}Y_{W_3}$, where $Y_{W_0}' = a$. In addition, $cbX_{W_0}Y_{W_0}'X_{W_3}Y_{W_3}$ is an overlap prefix since there is no factor of the form $X_{W_i}Y_{W_i}$ beginning before the end of $cb$.
Let $p = a$. The word $w$ is $p$-active since $pw$ has the relation prefix $X_{W_1}Y_{W_1}$ and clearly $|\varepsilon| < |p|$. 
\end{ex}

The following results describe some properties of presentations satisfying the $C(4)$ condition mentioned in \cite{overlaps1} as weak cancellativity properties.

\begin{prop}[Proposition 1 in \cite{overlaps1}]
	Let $w$ be a word in $A^*$ and $aX_0Y_0'X_1Y_1'\dots X_nY_n$ be an overlap prefix of $w$. Then there is no relation word contained in this prefix except possibly $X_nY_n$, in case $Z_n=\varepsilon$.
\end{prop}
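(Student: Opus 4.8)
The plan is to argue by contradiction. Suppose that some relation word $v = X_v Y_v Z_v$ occurs as a factor of the overlap prefix $p = b X_1 Y_1' X_2 Y_2' \cdots X_{n-1} Y_{n-1}' X_n Y_n$, and that we are \emph{not} in the exceptional situation where $v = X_n Y_n$ and $Z_n = \varepsilon$; the goal is to reach a contradiction.

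First I would locate the occurrence of $v$. Since $X_v Y_v$ is a prefix of $v$ and $v$ is a factor of $p$, the word $X_v Y_v$ --- a factor of $p$ of the shape ``$X_0 Y_0$'' in the terminology of the definition --- occurs in $p$ starting exactly where the occurrence of $v$ starts. Condition~(ii) in the definition of an overlap prefix forbids a factor of this shape from beginning before the end of $b$, so the occurrence of $v$, and hence that of $Y_v$, lies entirely within the segment $X_1 Y_1' X_2 Y_2' \cdots X_n Y_n$. I would then record, for each $i$, the positions at which the block $X_i Y_i'$ (respectively $X_n Y_n$) begins and ends inside $p$, together with the standing fact that $X_i$ is the maximal piece prefix of the relation word $X_i Y_i Z_i$, with $Y_i'$ a non-empty \emph{proper} prefix of the middle word $Y_i$ for $i < n$.

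The core of the argument is to show that the occurrence of the middle word $Y_v$ in $p$ must lie inside the occurrence of (a prefix of) one single relation word $X_i Y_i Z_i$. The tools are: (a) $Y_v$, being the middle word of a relation word of a $C(4)$ presentation, is not a piece, so the only relation word containing $Y_v$ as a subword is $v$ itself; (b) by Lemma~\ref{lem-1}(iii), since $Y_i = Y_i' Y_i''$ with $Y_i'$ non-empty and $Y_i'' \neq \varepsilon$, the factor $X_i Y_i'$ occurs in no relation word other than $X_i Y_i Z_i$, which tightly constrains any relation word whose occurrence straddles the boundary between two consecutive blocks; and (c) Lemma~\ref{lem-1}(i)--(ii), which controls how two overlapping relation words can meet and rules out $Y_v$ beginning in one block and ending in the next. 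Combining these, $Y_v$ is confined to the exposed portion of a single block $X_i Y_i'$ (or of $X_n Y_n$), and therefore $v = X_i Y_i Z_i$ by~(a). A length and position count then finishes the proof: if $i < n$, only the prefix $X_i Y_i'$ of $X_i Y_i Z_i$ is visible in $p$ before $X_{i+1}$ begins, and $Y_i'$ is a \emph{proper} prefix of $Y_i$, so the whole word $X_i Y_i Z_i$ cannot occur as a factor there; and if $i = n$, then $v = X_n Y_n Z_n$ being a factor of $p$ whose final block is precisely $X_n Y_n$ forces $Z_n = \varepsilon$ and $v = X_n Y_n$, which is the one permitted exception, contradicting our assumption.

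I expect the main obstacle to be the bookkeeping at the block boundaries: ruling out the possibility that a relation word uses the piece $X_{i+1}$ together with the tail $Y_i''$ of the middle word of $X_i Y_i Z_i$ (or, at the right end, the piece $X_n$ together with a tail of $X_{n-1} Y_{n-1} Z_{n-1}$). This is precisely where the weak-cancellativity consequences of the $C(4)$ condition collected in Lemma~\ref{lem-1} do the work, and it may be cleanest to organise the whole argument as an induction on $n$, peeling off the final block $X_n Y_n$ and applying the inductive hypothesis to the shorter overlap prefix that remains.
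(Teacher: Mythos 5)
The paper offers no proof of this statement: it is quoted directly from Kambites (Proposition 1 of the cited paper \emph{overlaps1}), so there is no in-paper argument to compare against. Judged on its own, your outline has the right skeleton and matches the standard line of attack: condition (ii) localises the occurrence of $v$ to start no earlier than $X_1$; the fact that the middle word $Y_v$ is not a piece forces $v$ to coincide with one of the relation words $X_iY_iZ_i$; and a position count at the right end produces exactly the permitted exception $v=X_nY_n$ with $Z_n=\varepsilon$.

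However, the decisive step is precisely the one you defer. You assert that ``$Y_v$ is confined to the exposed portion of a single block $X_iY_i'$,'' but the containment you can actually extract is weaker: writing $Y_i=Y_i't_i$ with $t_i\neq\varepsilon$, the occurrence of the relation prefix $X_iY_i$ extends past the exposed block $X_iY_i'$ into the following blocks, so a priori $Y_v$ could sit inside that occurrence without sitting inside $X_iY_i'$; this yields $v=X_iY_iZ_i$ occurring at the start of block $i$ with $i<n$, and ruling \emph{that} out is the real content of the proposition. The tools you cite do not quite do it: Lemma~\ref{lem-1}(i)--(ii) presuppose occurrences of two full relation words, whereas at the next block only the relation prefix $X_{i+1}Y_{i+1}$ is known to occur. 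What is needed is Lemma~\ref{lem-1}(iii) applied to the prefix $X_{i+1}s$ of $X_{i+1}Y_{i+1}$ that would be trapped inside $u_i$ (forcing $u_i$ to end within the occurrence of $X_{i+1}$), followed by the maximality of $Z_i$ as a piece suffix: the suffix $t_iZ_i$ of $u_i$ would then be a prefix of the piece $X_{i+1}$, hence a piece strictly longer than $Z_i$, a contradiction. A similar maximality argument (a non-canonical occurrence of $Y_n$ in $X_nY_n$ would make $Y_n$ a piece) is also needed at the last block before you may conclude $Z_n=\varepsilon$. Until this boundary analysis is written out, what you have is a plan rather than a proof.
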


	In a word $w \in A ^ *$, an overlap prefix $aX_0Y_0'X_1Y_1'\dots X_tY_t$ is always contained in some clean overlap prefix $aX_0Y_0'X_1Y_1'\dots X_sY_s$ for $s \geq t$. In addition, if a word has a relation prefix, then the shortest relation prefix will be an overlap prefix.  If a word $w$ contains a relation word $u$ as a factor, then it has a relation prefix of the form $aX_uY_u$ for some $a \in A ^ *$. It follows that it also has an overlap prefix, this is its shortest relation prefix. Since any overlap prefix is contained in a clean overlap prefix it also follows that $w$ has a clean overlap prefix. Hence, taking the contrapositive, if a word in $A^*$ does not have a clean overlap prefix, then it contains no relation words as factors.

	If $aXYZ$ is a prefix of a word $w$, with $aXY$ an overlap prefix and $XYZ$ a relation word, then $aXY$ is a clean overlap prefix of $w$.
	If $aXY$ is a clean overlap prefix of a word $w$ and $\overline{XY}$ is a complement of $XY$, then $aXY$ and $a\overline{XY}$ are not necessarily clean overlap prefixes of words equivalent to $w$. However, it can be shown that such clean overlap prefixes are always overlap prefixes of words equivalent to $w$.
	\begin{lemma}[Lemma 2 in \cite{overlaps1}]
	\label{lem-2}
		If a word $w \in A^*$ has clean overlap prefix $aXY$ and $w\equiv v$ for some $v \in A^*$, then $v$ either has  $aXY$ or $a\overline{XY}$ for $\overline{XY}$ some complement of $XY$ as an overlap prefix; and no relation word in $v$ overlaps this prefix, unless it is $XYZ$ or $\overline{XYZ}$.
	\end{lemma}

In \cite{kambites2011}, Kambites describes an algorithm that takes as input two words and a piece of a given presentation that satisfies condition $C(4)$ and returns Yes if the words are equivalent and the piece is a possible prefix of the words and No if either of these does not happen. We will refer to this algorithm in the following sections of this paper and \textbf{WpPrefix}$(u,v,p)$ will be used to denote the result of this algorithm with input the words $u$ and $v$ and the piece $p$. In \cite{kambites2011} it is shown that for a fixed $C(4)$ presentation, this algorithm decides whether $u$ and $v$ are equivalent and whether $p$ is a possible prefix of $u$ in time $O(\text{min}(|u|, |v|))$ given the decomposition of relation words into the form $XYZ$ is known.

\section{A linear time algorithm for the uniform word problem}
\label{section-uniform-wp}
\addtocounter{subsection}{1}

In \cite{overlaps1}, Kambites explores the complexity of the so-called  \textit{uniform word problem} for $C(4)$ presentations. 
Given a finite monoid presentation $\left\langle A \, | \, R \right\rangle$ and two words in $A^*$, the uniform word problem asks whether the two words represent the same element of the monoid defined by $\left\langle A \, | \, R \right\rangle$. In particular, determining that the presentation satisfies the $C(4)$ condition is part of the uniform word problem for $C(4)$ presentations.
In \cite{overlaps1} it is shown that the uniform word problem for $C(4)$ presentations can be solved, in the RAM model of computation, in $O(|R|^2 \min(|u|, |v|))$ time for $u, v$ the two words in $A^*$ and $|R|$ the sum of the lengths of the distinct relation words in the presentation. 

We will show that the uniform word problem for $C(4)$ presentations can be solved in $O(|R| \min(|u|, |v|))$ time, where $|R|$ is the sum of the lengths of the relation words, by using a generalized suffix tree to represent the relation words. In the RAM model, we may assume that the following operations are constant time: random access to the letters of a word $w\in A ^ *$ from an index; concatenation of words; comparison of letters from $A$ for a given total order on $A$.

Assume $A$ is an alphabet and $s = a_0a_1\cdots a_{m - 1}\in A^*$ such that $|s| = m$. We use the notation $s[i,j)$ for the factor $a_i\ldots a_{j - 1}$ of $s$ that starts at position $i$ and ends at position $j - 1$ (inclusive). A word of length $m$ has $m$ non-empty suffixes $s[0, m), \ldots, s[m - 1, m)$. A word $x$ is a factor of $s$ if and only if it is a prefix of one of the suffixes of $s$.

\begin{de}[cf. Section 5.2 in \cite{suffix_trees}]
A \textit{suffix tree} for a word $s$ of length $m$ is a rooted directed tree with exactly $m$ leaf nodes numbered $0$ to $m - 1$. 
The nodes of a suffix tree are of exactly one of the following types:
the root; a leaf node; or an internal node.
Each internal node has at least two children. Each edge of the tree is labelled by a nonempty factor of $s$ and no two edges leaving a node are labelled by words that begin with the same character. For any leaf $i$, $0 \leq i < m$ the label of the path that starts at the root and ends at leaf $i$ is $s[i,m)$.
\end{de}

A \textit{generalized suffix tree} is a suffix tree for a sequence of words $S = \{s_0, s_1, \dots , s_{n-1}\}$. In a generalized suffix tree the leaf nodes are numbered by ordered pairs $(i, j)$ for $0 \leq i \leq n-1$ and $0 \leq j \leq |s_i|$. The label of the path that starts at the root node and ends at leaf $(i,j)$ is $s_i[j,m)$ for $m = |s_i|$. In a generalized suffix tree, a special unique character $\$_i$ is attached at the end of each word $s_i$ to ensure that each suffix corresponds to a unique leaf node in the tree. Thus a generalized suffix tree has exactly $N+n$ leaf nodes, where $N$ is the sum of the lengths of the words in $S$.  See \cref{figure-gst} for an example of generalized suffix tree. 

A suffix tree for a word $w$ over an alphabet $A$ of length $m$ can be constructed in $O(m)$ time for constant-size alphabets and in $O(m \log|A|)$ time in the general case with the use of Ukkonen's algorithm~\cite{Ukkonen}.  
If $N$ is the sum of the lengths of the words in the set of words $S$, then, a generalized suffix tree for $S$ can also be constructed in $O(N+n)$ time; see~\cite[Section 6.4]{suffix_trees} for further details. 

A generalized suffix tree for a set of words $S$ of total length $N$ has at most $2(N+n)$ nodes. By definition, such a suffix tree has exactly $N+n$ leaf nodes and one root. In addition, each internal node of the tree has at least two edges leaving it. These edges belong to paths that will eventually terminate at some leaf node. Hence there can exist at most $N+n-1$ internal nodes and the total number of nodes in a suffix tree is at most $2(N+n)$.

Generalized suffix trees can be constructed and queried in linear time to provide various information about a set of words. 
For example, for a sequence of $n$ words $S$ of total length $N$ we can find the longest subwords that appear in more than one word in $O(N+n)$ time, find the longest common prefix of two strings in $O(N+n)$ time, check if a word of length $m$ is a factor of some word in $S$ in $O(m)$ time; see Sections~2.7 to~2.9 in \cite{suffix_trees}. We are interested in utilizing generalized suffix trees in the study of $C(4)$ presentations. Since generalized suffix trees can be queried to find the longest subwords that appear in more than one word in the set of distinct relation words in $R$, we can use them to find maximal piece prefixes, for example. In order to do this, however, we need to build the generalized suffix tree of distinct relation words of a presentation since a word $p$ is a piece if it occurs as a factor of $W_i$ and $W_j$ where $W_i\not=W_j$, or in two different places (possibly overlapping) in the same relation word in $R$.

Given the set of relation words $R$ of a presentation we want to construct the generalized suffix tree of the set of distinct relation words in $R$ without altering the complexity of Ukkonen's algorithm. In practice, Ukkonen's algorithm for the generalized suffix tree of the set $S = \{s_0\$_0, s_1\$_1, \ldots s_{n-1}\$_{n-1}\}$ starts by constructing the suffix tree $T$ for $s_0\$_0$. Then, for each $i \in \{1, \ldots, n-1\}$, the edges and nodes that correspond to the word $s_i$ are added to $T$; see Section 6.4 in \cite{suffix_trees} for more detail. We denote this step of the procedure by \texttt{AddWord}($T, s_i$). In order to avoid adding the same word twice, we add the following step to the procedure for each $i \in \{1, \ldots, n-1\}$: before calling \texttt{AddWord}($T, s_i$), we follow the path in $T$ that starts at the root node and is labeled by $s_i$. If this path ends at an internal node $\nu$ and one of the children of $\nu$ is a leaf node labeled by $(s_j, 0)$ for some $j$, then $s_j = s_i$ and hence we do not call \texttt{AddWord}($T, s_i$) for $s_i$. This additional step only requires traversing at most $|s_i| + 1$ nodes of $T$ for each $i \in \{1, \ldots, n-1\}$. The total number of nodes of $T$ is bounded above by $2(N+n)$ and hence this step does not alter the complexity of the construction of the generalized suffix tree.

\begin{prop}\label{pieces-complexity}
Let $\left\langle A \, | \, R \right\rangle$ be a finite monoid presentation such that the number of relation words in $R$ is $n$ for some $n\in \N$,  and let $N$ be the sum of the lengths of the relation words in $R$. Then from the input presentation $\left\langle A \, | \, R \right\rangle$ the set of maximal piece prefixes, and suffixes, of $R$ can be computed in $O(N + n)$ time.
\end{prop}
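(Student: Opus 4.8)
The plan is to build a generalized suffix tree $T$ for the set $S$ of distinct relation words of $R$ (after appending a distinct end-marker $\$_i$ to each), which by Ukkonen's algorithm takes $O(N)$ time --- $O(N\log|A|)$ for an arbitrary alphabet --- and has $O(N)$ nodes. The whole proof rests on translating the combinatorial notion of a piece into a structural property of $T$, after which reading off the maximal piece prefixes and suffixes is a short tree traversal carrying $O(1)$ work per relation word plus the cost of writing out the answers.

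First I would establish the key equivalence. A nonempty word $p$ over $A$ occurs as a factor at position $j$ of a relation word $s_i$ exactly when the leaf labelled $(i,j)$ lies below the locus of $p$ in $T$ (the point reached by reading $p$ from the root). Since restricting to the set $S$ of distinct relation words makes ``a factor of two distinct relation words'' coincide with ``below two leaves of different words'', while two occurrences at distinct positions of the same relation word also give distinct leaves, $p$ is a piece of $\mathcal P$ if and only if the locus of $p$ has at least two leaves below it. Because of the distinct end-markers, this happens precisely when the locus of $p$ is the root, an internal node, or an interior point of an edge whose lower endpoint is an internal node --- equivalently, when $p$ is a prefix of the path-label of some internal node; and the empty word is a piece by convention (the trivial case of the root). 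Pinning this down, along with the corner cases --- a relation word that is a proper factor of another (so it can itself be an internal node, forcing $X_u=u$), equal relation words occurring as sides of distinct relations (handled by passing to $S$), and the exact role of the end-markers in making leaf counts equal occurrence counts --- is the main obstacle; the remaining steps are routine.

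Given the equivalence, fix $u=s_i\in S$ of length $m$ and let $\ell$ be the leaf with path-label $s_i\$_i$. Every node on the root-to-$\ell$ path has a path-label that is a prefix of $s_i\$_i$, and an internal such node cannot have $\$_i$ in its label, so its path-label is a prefix of $u$, hence a piece by the equivalence. I would then argue that the parent $v$ of $\ell$ is the deepest internal node on this path: any prefix of $u$ strictly longer than the path-label of $v$ has its locus strictly inside the single edge $(v,\ell)$, hence has exactly one leaf below it and is not a piece. Therefore $X_u$ is exactly the path-label of the parent of $\ell$, which is obtained in $O(1)$ time per relation word from the parent pointers and string-depths (path-label lengths) already available in $T$; writing out the word itself costs $O(|X_u|)$, which over all of $S$ sums to $O(N)$.

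For $Z_u$, a symmetric analysis shows that the suffix $s_i[j,m)$ of $u$ is a piece if and only if it equals the path-label of the parent of the leaf $(i,j)$, that is, if and only if the incoming edge of that leaf is labelled by the single character $\$_i$; hence $Z_u=s_i[j_0,m)$, where $j_0$ is the least $j$ with this property, found by scanning $j=0,1,\dots,m$ at $O(1)$ per step. (Alternatively, one may build a second generalized suffix tree on the reversed relation words and reuse the maximal-piece-prefix computation, using that reversal preserves the property of being a piece.) Summing $O(|u|)$ over all $u\in S$ gives $O(N)$ time in total, with space $O(N)$ plus the size of $A$, which proves the proposition.
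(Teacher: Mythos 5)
Your proposal is correct and follows essentially the same route as the paper: build a generalized suffix tree for the distinct relation words in $O(N)$ time, read off $X_u$ as the path-label of the parent of the leaf $(i,0)$, and obtain $Z_u$ either by detecting leaves $(i,j)$ whose incoming edge is labelled by the single character $\$_i$ or by reusing the prefix computation on the reversed words. Your explicit ``piece $\Leftrightarrow$ at least two leaves below the locus'' equivalence is a slightly more careful justification than the paper spells out, but the algorithm and complexity analysis are the same.
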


\begin{proof}
Using Ukkonen's Algorithm, for example, a generalized suffix tree for the set $\{W_0\$_0, W_1\$_1, \ldots,$ $W_{n-1}\$_{n-1}\}$ of distinct relation words in $R$ can be constructed in $O(N + n)$ time.
The maximal piece prefix of a relation word $W_r\in R$ can be found as follows. The path in the tree labelled by $W_r\$_r$ is followed from the root to the (unique) leaf node $(r, 0)$. Suppose that 
 $v_0, v_1, \ldots, v_m$ are the nodes in the path from the root node $v_0$ to the leaf node $v_m =(r, 0)$. Then the maximal piece prefix of $u_r$ corresponds to the path $v_0, v_1, \ldots, v_{m-1}$. In other words, the maximal piece prefix of $W_r$ corresponds to the internal node that is the parent of the leaf node labelled $(r, 0)$. 
Hence the maximal piece prefix of each relation word can be determined in $O(|W_r|)$ time, and so every maximal piece prefix can be found in $O(N+n)$ time. 

The maximal piece suffixes of the relation words can be found as follows. A generalized suffix tree for the 
set $\Tilde{R}$ of reversals of the relation words in $R$ can be constructed in $O(N + n)$ time, and then used, as described above, to compute the maximal piece prefixes of the reversed relation words in $O(N + n)$ time also. 

Alternatively, the generalized suffix tree for $R$ can be used to directly compute the maximal piece suffix of a given relation word $W_r$ by finding the maximum distance, from the root, of any internal node $n$ that is a parent of a leaf node labelled $(r, i)$ for any $i$. This maximum is the length of the maximal piece suffix of $W_r$. In this way, the maximum piece suffix of every relation word $W_r$ can be found in a single traversal of the nodes in the tree. 
Again, since there are $2(N + n)$ nodes in the tree, and the checks on each node can be performed in constant time, the maximal piece suffixes of all the relation words can be found in $O(N + n)$ time using this approach also.
\end{proof}

For example, the generalized suffix tree for the relation words in the presentation \[\langle a, b, c, d, e \mid a^2ea^3 = abcd\rangle\] can be seen in \cref{figure-gst}.

Using the approach described in the proof of \cref{pieces-complexity}, the path from the root $0$ (corresponding to $a^2ea^3$) to the leaf node labelled by $(0, 0)$ consists of the root $0$, internal nodes $1$ and $2$,  and leaf node $(0,0)$. Hence the maximal piece prefix of $a^2ea^3$ is $aa$, being the label of the path from the root to the parent $2$ of the leaf node $(0,0)$. Similarly, the maximal piece prefix of $abcd$ is $a$ corresponding to the parent $1$ of the node $(1,0)$. For the maximal piece suffix of $a^2ea^3$, the leaf nodes labelled $(0, i)$ for any $i$ with edge labelled by $\$_0$ are $(0,4)$, $(0,5)$, and $(0,6)$. The parents of these nodes are $2$, $1$, and $0$, respectively, and hence the maximal piece suffix of $a^2ea^3$ is $aa$.
The only leaf node labelled $(1, i)$ and with edge labelled $\$_1$ is $(1,4)$, and so the maximal piece suffix of $abcd$ is $\varepsilon$.

\begin{prop}\label{decideC4-complexity}
Let $\left\langle A \, | \, R \right\rangle$ be a finite monoid presentation such that the number of relation words in $R$ is $n$ for some $n\in \N$ and let $N$ be the sum of the lengths of the relation words in $R$. Then from the input presentation 
$\left\langle A \, | \, R \right\rangle$ it can be determined whether or not the presentation satisfies $C(4)$ in $O(N + n)$ time.
\end{prop}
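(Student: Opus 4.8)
The plan is to reduce the question ``does $\mathcal{P}$ satisfy $C(4)$?'' to a small number of queries on the generalised suffix tree for the relation words, each of which can be answered in linear time. First I would construct, using Ukkonen's Algorithm as in Proposition~\ref{pieces-complexity}, a generalised suffix tree for the set $S$ of distinct relation words in $R$ in $O(N)$ time. The key observation is that to decide $C(4)$ we do not need to understand all factorisations of a relation word into pieces; we only need to know, for each relation word $u$, the maximal piece prefix $X_u$, the maximal piece suffix $Z_u$, and whether the remaining middle word $Y_u$ is itself a piece. Indeed, a relation word $u$ can be written as a product of at most three pieces if and only if $u = X_u Y_u Z_u$ with $Y_u$ a piece (if $Y_u$ is a piece then $u$ is visibly a product of three pieces, and conversely any factorisation $u = p_1p_2p_3$ into pieces has $p_1$ a prefix of $X_u$ and $p_3$ a suffix of $Z_u$ by maximality, forcing the middle factor to be a superword of $Y_u$, hence $Y_u$ is a factor of a piece and so itself a piece). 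Handling the degenerate cases where $Y_u = \varepsilon$, or where $u$ is a product of one or two pieces, is routine and can be folded in: $\mathcal{P}$ fails $C(4)$ precisely when some relation word equals $\varepsilon$ (fails $C(1)$), or some relation word is itself a piece, or some $Y_u$ is empty or a piece.

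Next I would compute, via Proposition~\ref{pieces-complexity}, the maximal piece prefix $X_u$ and maximal piece suffix $Z_u$ of every relation word $u$, all in $O(N)$ time; from these the middle words $Y_u$ (and their lengths and starting positions inside $u$) are immediate. It remains to decide, for each relation word $u$, whether $Y_u$ is a piece. A word $w$ is a piece of $\mathcal{P}$ exactly when $w$ occurs as a factor of two distinct relation words, or occurs at two distinct positions in a single relation word. On the generalised suffix tree this translates into a statement about the internal node $\nu$ whose root-path spells $Y_u$ (or, if $Y_u$ ends at no explicit node, the lowest explicit descendant): $Y_u$ is a piece iff the leaves in the subtree rooted at that locus carry labels $(r,i)$ with two distinct values of $r$, or two labels with the same $r$ but different $i$ — equivalently, iff that locus has at least two leaf descendants whose labels are not ``the same occurrence''. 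One subtlety is that $Y_u$ may not correspond to an explicit node of the tree; then one descends to the first explicit node below the point where $Y_u$ ends, and the same leaf-count criterion applies, since all occurrences of $Y_u$ are exactly the occurrences recorded in that subtree. A single depth-first traversal of the tree can annotate every node with (for instance) the number of distinct relation-word indices appearing among its leaf descendants and the total number of leaf descendants; since the tree has at most $2N$ nodes and $N$ leaves, this costs $O(N)$, and thereafter each query ``is $Y_u$ a piece?'' is answered in time $O(|Y_u|)$ by walking down from the root, for a total of $O(N)$ over all relation words.

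Combining: $\mathcal{P}$ satisfies $C(4)$ if and only if no relation word is empty, no relation word is a piece, no $Y_u$ is empty, and no $Y_u$ is a piece; each of these conditions is checked in $O(N)$ time as above, so the whole decision runs in $O(N)$ time. I expect the main obstacle to be the bookkeeping around words that do not label an explicit node of the suffix tree, and around the edge cases where $X_u$, $Y_u$, or $Z_u$ is empty (so that $u$ is a product of fewer than three pieces); these must be treated carefully so that the criterion ``$Y_u$ is a piece'' is applied to the correct locus and the reduction to the three quantities $X_u, Y_u, Z_u$ is genuinely equivalent to the defining condition on all factorisations. Everything else is a direct application of the linear-time suffix-tree machinery already set up in Proposition~\ref{pieces-complexity}.
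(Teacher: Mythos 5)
Your proposal is correct and follows essentially the same route as the paper: build the generalised suffix tree, obtain $X_u$ and $Z_u$ via Proposition~\ref{pieces-complexity}, and reduce $C(4)$ to checking that $|X_u|+|Z_u|<|u|$ and that $Y_u$ is not a piece, with each piece-test answered in $O(|Y_u|)$ time. The only (immaterial) difference is the suffix-tree query you use for the piece-test --- counting leaf descendants at the locus of $Y_u$ after a DFS annotation pass, where the paper instead locates the parent of the leaf $(r,|X_r|)$ --- and your explicit justification that any factorisation into three pieces forces $Y_u$ to be a piece is a welcome detail the paper leaves implicit.
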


\begin{proof}
In order to decide whether the presentation satisfies $C(4)$ we start by computing the maximal piece prefix $X_r$ and the maximal piece suffix $Z_r$ for each relation word $W_r$. By \cref{pieces-complexity}, this step can be performed in $O(N + n)$ time. The presentation is $C(4)$ if for every relation word $W_r$, $|X_r| + |Z_r| < |W_r|$ and the middle word $Y_r$ is not a piece. It suffices to show that it can be determined in $O(N + n)$ time whether or not $Y_r$ is a piece for every $r$. 
 
Any word $w\in A ^ *$ is a piece if and only if $w$ equals the longest prefix of $w$ that is a piece. In the proof of \cref{pieces-complexity}, we showed how to compute the maximal piece prefix of the relation words in $R$ using a generalized suffix tree in $O(N + n)$ time. The longest prefix of $Y_r$ that is a piece can be determined in $O(|Y_r|)$ time by finding the last internal node $\nu$ on the path from the root of the same generalized suffix tree labelled by $Y_r$; the node $\nu$ is the parent of the leaf node $(r, |X_r|)$. If $\nu$ is the root node, then the longest prefix of $Y_r$ that is a piece is $\varepsilon$. Otherwise, the longest prefix of $Y_r$ that is a piece is the label of the path from the root node to $\nu$.
 Hence determining whether or not every $Y_r$ is a piece can also be completed in total $O(N + n)$ time.
\end{proof}

If any of the words in $R$ is empty, then the presentation $\left\langle A\,|\, R\right\rangle$ is not $C(4)$. If all of the words are non-empty, then the number of relation words $n$ is bounded above by the sum of the lengths of the relation words $N$. Hence the $O(N + n)$ time complexity in \cref{pieces-complexity} and~\cref{decideC4-complexity} becomes $O(N)$. 

The presentation 
 $\langle a, b, c, d, e \mid a^2ea^3 = abcd\rangle$
can be seen to be $C(4)$ as follows. The only internal node on the path from the root of the suffix tree depicted in \cref{figure-gst} labelled by $ea$ is the root itself. Hence the maximal piece prefix of $ea$ is $\varepsilon$ and so $ea$ is not a piece. Similarly, the only internal node on the path from the root labelled $bcd$ is the root itself, and so $bcd$ is not a piece either. Hence, by the proof of \cref{decideC4-complexity}, the presentation 
 $\langle a, b, c, d, e \mid a^2ea^3 = abcd\rangle$ is $C(4)$.

\cref{decideC4-complexity} allows us to prove the following theorem.
\begin{thm}\label{wp-complexity}
Let $\left\langle A \, | \, R \right\rangle$ be a finite monoid presentation such that the number of relation words in $R$ is $n$ for some $n\in \N$, let $N$ be the sum of the lengths of the relation words in $R$, and let $u, v\in A ^ *$ be arbitrary. Then the uniform word problem with input the presentation $\left\langle A \, | \, R \right\rangle$, and the words $u$ and $v$ can be solved in $O((N + n) \min(|u|,|v|))$ time.
\end{thm}
\begin{proof}
Given \cref{decideC4-complexity}, 
the proof of this theorem is essentially identical to the proof of~\cite[Theorem 2]{overlaps1}. 
\end{proof}

\begin{landscape}
\begin{figure}
\resizebox{720pt}{!}{%
\begin{tikzpicture}[level distance=3cm, sibling distance=3cm]
\node(0){\color{color0}0}
child{node(2){\color{color0}1}
child{node(5){\color{color0}2}
child{node(6){\color{color0}(0, 3)}
edge from parent [color=color0] node[above, sloped, color0, rotate=180]{$s_0[5, 7)$} node[below, sloped, color0, rotate=180] {$a\$_0$}
}
child{node(1){\color{color0}(0, 0)}
edge from parent [color=color0] node[above, sloped, color0]{$s_0[2, 7)$} node[below, sloped, color0] {$eaaa\$_0$}
}
child{node(7){\color{color0}(0, 4)}
edge from parent [color=color0] node[above, sloped, color0]{$s_0[6, 7)$} node[below, sloped, color0] {$\$_0$}
}
edge from parent [color=color0] node[above, sloped, color0, rotate=180]{$s_0[1, 2)$} node[below, sloped, color0, rotate=180] {$a$}
}
child{node(10){\color{color1}(1, 0)}
edge from parent [color=color1] node[above, sloped, color1, rotate=180]{$s_1[1, 5)$} node[below, sloped, color1, rotate=180] {$bcd\$_1$}
}
child{node(3){\color{color0}(0, 1)}
edge from parent [color=color0] node[above, sloped, color0]{$s_0[2, 7)$} node[below, sloped, color0] {$eaaa\$_0$}
}
child{node(8){\color{color0}(0, 5)}
edge from parent [color=color0] node[above, sloped, color0]{$s_0[6, 7)$} node[below, sloped, color0] {$\$_0$}
}
edge from parent [color=color0] node[above, sloped, color0, rotate=180]{$s_0[0, 1)$} node[below, sloped, color0, rotate=180] {$a$}
}
child{node(11){\color{color1}(1, 1)}
edge from parent [color=color1] node[above, sloped, color1, rotate=180]{$s_1[1, 5)$} node[below, sloped, color1, rotate=180] {$bcd\$_1$}
}
child{node(12){\color{color1}(1, 2)}
edge from parent [color=color1] node[above, sloped, color1, rotate=180]{$s_1[2, 5)$} node[below, sloped, color1, rotate=180] {$cd\$_1$}
}
child{node(13){\color{color1}(1, 3)}
edge from parent [color=color1] node[above, sloped, color1]{$s_1[3, 5)$} node[below, sloped, color1] {$d\$_1$}
}
child{node(4){\color{color0}(0, 2)}
edge from parent [color=color0] node[above, sloped, color0]{$s_0[2, 7)$} node[below, sloped, color0] {$eaaa\$_0$}
}
child{node(14){\color{color1}(1, 4)}
edge from parent [color=color1] node[above, sloped, color1]{$s_1[4, 5)$} node[below, sloped, color1] {$\$_1$}
}
child{node(9){\color{color0}(0, 6)}
edge from parent [color=color0] node[above, sloped, color0]{$s_0[6, 7)$} node[below, sloped, color0] {$\$_0$}
}
;\end{tikzpicture}
}
\caption{The generalized suffix tree for the words ${\color{color0}s_0 = a^2ea^3\$_0}$ and ${\color{color1}s_1 = abcd\$_1}$.}
\label{figure-gst}
\end{figure}
\end{landscape}

\section{Kambites' normal form algorithm}\label{section-normal-forms}
\addtocounter{subsection}{1}
 Let $A=\left\{ a_0,a_1,\dots,a_{n-1} \right\rbrace$ be a finite alphabet and define a total order $<$ on the elements of $A$ by $a_0<a_1<\dots<a_{n-1}$. We extend this to a total order over $A^*$, called the \textit{lexicographic order} as follows. The empty word $\varepsilon$ is less than every other word in $A ^ *$.
 If $u=a_iu_0$ and $v=a_jv_0$ are words in $A^+$, $a_i,a_j \in A$ and $u_0,v_0 \in A^*$, then $u<v$ whenever $a_i<a_j$, or $a_i=a_j$ and $u_0<v_0$. 
 
As mentioned above, Kambites in \cite{kambites2011} described an algorithm for testing the equivalence of words in $C(4)$ monoids. In \cite{overlaps2} it was shown that given a monoid $M$ defined by a $C(4)$ presentation $\left\langle A \, | \, R \right\rangle$ and a word $w \in A^*$ there exists an algorithm that computes the minimum representative of the equivalence class of $w$ with respect to the lexicographic order on $A ^ *$. This minimum representative is also known as the \textit{normal form} of $w$. It is not, perhaps, immediately obvious that such a minimal representative exists, because the lexicographic order is not a well order (it is not true that every non-empty subset of $A ^ *$ has a lexicographic least element).
  However, every  equivalence class of a word in a $C(3)$ monoid is finite; see, for example, in \cite[Corollary 5.2.16]{higgins}. Since any presentation that satisfies $C(4)$ also satisfies $C(3)$, there exists a lexicographically minimal representative for any $w \in A^*$. We denote the lexicographically minimal word equivalent to $w\in A ^ *$ by $\min w$.
  
  We note that since all of the equivalence classes of a $C(3)$ monoid are finite, any monoid satisfying $C(n)$ for $n \geq 3$, is infinite. Conversely, if $\left\langle A \, | \, R \right\rangle$ is a presentation for a finite monoid and this presentation satisfies $C(n)$ for some $n \in \mathbb{N}$, then $n \in \{1, 2 \}$.
  
In \cite{overlaps2} Kambites proved the following result.

\begin{prop}[Corollary 3 in \cite{overlaps2}] \label{nf-algorithm}
Let $\left\langle A \, | \, R \right\rangle$ be a finite monoid presentation satisfying $C(4)$ and suppose that $A$ is equipped with a total
order. Then there exists an algorithm which, given a word $w \in A^*$, computes in $O(|w|)$ time the corresponding lexicographic normal form for $w$.
\end{prop}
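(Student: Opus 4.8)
The plan is to compute the lexicographic normal form $v$ of an input word $w$ from left to right, resolving one clean overlap prefix at a time, using \textbf{WpPrefix} as the oracle for possible prefixes and Lemma~\ref{lem-2} together with the discussion preceding it to pin down each block. Since $\mathcal{P}$ satisfies $C(4)$, hence $C(3)$, the class $[w] = \{w' \in A^* : w' \equiv w\}$ is finite by~\cite[Corollary 5.2.16]{higgins}, so it has a lexicographically least element; this is the word the algorithm must return, and in particular it exists.

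The recursion runs as follows. If $w$ has no clean overlap prefix, then by the discussion preceding Lemma~\ref{lem-2} it contains no relation word as a factor, so no relation applies to it, $[w] = \{w\}$, and we return $w$. Otherwise let $a X_1Y_1'X_2Y_2'\cdots X_nY_n$ be the clean overlap prefix of $w$ and write $w = a w_1$. By condition (ii) in the definition of an overlap prefix no relation word of $w$ begins before the end of $a$, so $a$ is a prefix of every word equivalent to $w$ (Lemma~\ref{lem-2}), and it remains to normalise $w_1$, which begins with the relation prefix $X_1Y_1$. By Lemma~\ref{lem-2} again, every word equivalent to $w_1$ has $X_1Y_1'\cdots X_nY_n$, or some word obtained from it by replacing the relevant relation words by complements, as an overlap prefix, and no relation word of such a word overlaps this prefix except possibly its last relation word; Proposition~\ref{yproperty} bounds which relation words can occur at all. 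Since the relation words of the fixed presentation $\mathcal{P}$ have bounded length and there are finitely many of them, there are only boundedly many such resolutions, each a word of bounded length followed by a determined tail. One therefore (i) forms these resolutions, (ii) using Lemma~\ref{lem-1}(iv) identifies the maximal initial segment $p$ of each that cannot be altered by any rewrite supported in its tail, (iii) selects the resolution whose $ap$ is lexicographically least, and (iv) recursively normalises the corresponding tail and concatenates. A call to \textbf{WpPrefix} on $ap$ and the matching prefix of $w$ confirms the invariant that $ap$ is a prefix of $v$ and a possible prefix of $w$; since each step consumes a non-empty prefix the recursion terminates, and correctness follows by induction on the remaining portion of $w$, exactly as in the proof that \textbf{WpPrefix} decides possible prefixes in~\cite{overlaps1, kambites2011}.

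For the running time, each recursive step should be executed in time linear in the length of the block it consumes: locating the clean overlap prefix of a word of length $\ell$ is an $O(\ell)$ left-to-right scan for relation prefixes (using the precomputed maximal piece prefixes and middle words), and since the boundedly many resolutions differ from $w$ only on an initial segment of length bounded in terms of $\mathcal{P}$, carrying out (i)--(iii) costs $O(1)$ per block after that scan. Two global facts then give the overall bound: that the clean overlap prefixes consumed across all recursive calls have total length $O(|w|)$, so the scans do not overlap too much, and that $|v| = O(|w|)$, i.e.\ equivalent words in a $C(3)$ monoid have comparable lengths; both rest on the bounded-length-rewrite analysis behind~\cite[Corollary 5.2.16]{higgins}. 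Finally, computing all the data that \textbf{WpPrefix} and the scan require --- the decompositions $X_uY_uZ_u$, the complement relation, and the associated tables --- depends only on $\mathcal{P}$ and is charged once as preprocessing.

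I expect the genuinely linear bound to be the main obstacle. The naive implementation that extends the prefix one letter at a time and tests from scratch whether the new prefix is a possible prefix of $w$ is only quadratic, because each such test costs $\Theta(|w|)$; making it linear forces the incremental, block-at-a-time processing above, and with it the delicate amortisation that each position of $w$ is examined a bounded number of times and that a committed prefix is never revisited --- the latter using precisely the ``clean'' hypothesis and Lemma~\ref{lem-1}(iv). It is exactly this amortisation that the present paper trades away in exchange for the simpler $O(|w|^2)$ algorithm presented here.
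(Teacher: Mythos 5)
Your proposal takes a genuinely different route from the paper, and it contains a genuine gap. The paper does not prove this proposition directly: it is Corollary~3 of Kambites~\cite{overlaps2}, and the argument behind it is automaton-theoretic. One shows that $R^{\#}$ is a rational relation, builds (in a preprocessing step depending only on $\mathcal{P}$) a transducer realising $R^{\#}$ and from it a deterministic 2-tape automaton accepting $\lex(R^{\#})$, using~\cite{Johnson1, Johnson2}; evaluating that deterministic machine on $w$ then takes $O(|w|)$ time. All of the unbounded lookahead is absorbed into the (finite but astronomically large, e.g.\ $|Q| \approx 4.5\times 10^{12}$ for $\langle a,b,c \mid a^2bc = acba\rangle$) state set computed in preprocessing. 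That is the entire content of the linear bound.

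The gap in your direct, block-by-block algorithm is exactly the point you flag but do not resolve: deciding, at each clean overlap prefix $aXY$, whether the relation word $XYZ$ can actually be completed --- i.e.\ whether $Z$ is a \emph{possible} prefix of the tail $w'$ --- is not a local decision. It can depend on letters arbitrarily far to the right of the current block, via a chain of overlapping relation prefixes, which is why the explicit algorithm must call \textbf{WpPrefix}$(w',w',Z)$ at cost $\Theta(|w'|)$ per block. Your claim that steps (i)--(iii) cost $O(1)$ per block after a non-overlapping scan, and that ``each position of $w$ is examined a bounded number of times,'' is asserted rather than proved, and for the algorithm as you describe it the worst case is $\Theta(|w|)$ blocks each requiring a $\Theta(|w|)$ lookahead, i.e.\ $\Theta(|w|^2)$. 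This is precisely why the explicit algorithm in the present paper is quadratic; no amortisation argument along the lines you sketch is known to rescue linearity for the direct approach, and the proposition's $O(|w|)$ bound is obtained only by the transducer construction above.
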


Although \cref{nf-algorithm} asserts the existence of an algorithm for computing normal forms, this algorithm is not explicitly stated in~\cite{overlaps2}. In the following paragraphs we briefly discuss the algorithm arising from~\cite{overlaps2}.  

We require a number of definitions; see~\cite{Berstel} for further details. A \textit{transducer} $\mathcal{T}=\left\langle A, \, B, \, Q, \,q_{\_}, \, Q_{+}, \, E\right\rangle$ is a 6-tuple that consists of an input alphabet $A$, an output alphabet $B$, a finite set of states $Q$, an initial state $q_{\_}$, a set of terminal states $Q_+$ that is a subset of $Q$, and a finite set of transitions or edges $E$ such that $E \subset Q\times A^* \times B^* \times Q$. A pair $(u,v) \in A^* \times B^*$ is \textit{accepted} by the transducer if there exist transitions $(q_{\_}, u_0, v_0, q_1), (q_1, u_1, v_1, q_2), \dots,(q_{n-1}, u_{n-1}, v_{n-1}, q_+) \in E$ such that $q_+ \in Q_+$ and $u=u_0u_1 \cdots u_{n-1}$, $v=v_0v_1 \cdots v_{n-1}$. The \textit{relation} accepted by $\mathcal{T}$ is the set of all pairs accepted by $\mathcal{T}$.
A relation accepted by a transducer is called a \textit{rational relation}. A rational relation that contains a single pair $(u,v)$ for each $u \in A^*$ is called a \textit{rational function}.

A \textit{deterministic 2-tape finite automaton} is an 8-tuple $\mathcal{A}=\left\langle A, \, B, \, Q_1, \, Q_2, \,q_{\_}, \, Q_{+}, \, \delta_1, \, \delta_2\right\rangle$ that consists of the tape-one alphabet $A$, the tape-two alphabet $B$, two disjoint state sets $Q_1$ and $Q_2$, an initial state $q_{\_} \in Q_1 \cup Q_2$, a set of terminal states $Q_+ \subset Q_1 \cup Q_2$ and two partial functions $\delta_1: Q_1 \times A \cup \{\$\} \rightarrow Q_1 \cup Q_2$, $\delta_2: Q_2 \times B \cup \{\$\} \rightarrow Q_1 \cup Q_2$ where $\$$ is a symbol not in $A$ and $B$. A path of length $n$ in $\mathcal{A}$ is a sequence of transitions of the form $$(q_0,t_0, a_0, \delta_{t_0}(q_0,a_0))(\delta_{t_0}(q_0,a_0), t_1, a_1, \delta_{t_1}(q_1,a_1)) \cdots (\delta_{t_{n-2}}(q_{n-2},a_{n-2}), t_{n-1}, a_{n-1}, \delta_{t_{n-1}}(q_{n-1},a_{n-1})),$$ where $t_i \in \{1,2\}$, $q_i \in Q_{t_i}$, $a_i \in A$ if $t_i=1$ and $a_i \in B$ if $t_i=2$ for $0\leq i \leq n-1$ and $\delta_{t_i}(q_i,a_i)= q_{i+1}$ for $0\leq i < n-1$. A path is called \textit{successful} if $q_0 = q_{\_}$ and $\delta_{t_{n-1}}(q_{n-1},a_{n-1}) \in Q_+$. A pair $(u,v)\in A^* \times B^*$ is the label of a path if $u$ is the concatenation of all the letters $a_i$, $0 \leq i \leq n-1$ that belong in $A$ and $v$ is the concatenation of all the letters $a_i$, $0 \leq i \leq n-1$ that belong in $B$. A pair $(u,v)\in A^* \times B^*$ is \textit{accepted} by $\mathcal{A}$ if it labels a successful path in $\mathcal{A}$. The \textit{relation accepted by} $\mathcal{A}$ is the set of all pairs accepted by $\mathcal{A}$.

 Let $\lex(R ^ {\#}) = \{(u,v) \in A^* \times A^*\,|\, u \equiv v$ and $v$ is a lexicographic normal form$\}$. In \cite{overlaps2} it is shown that $R ^ {\#}$ is a rational relation and $\lex(R ^ {\#})$ is a rational function.
  According to Lemma 5.3 in~\cite{Johnson1}, $\lex(R ^ {\#})$ can effectively be computed from a finite transducer for $R ^ {\#}$. Kambites describes the construction of a finite transducer for $\lex(R ^ {\#})$ in \cite{overlaps2}. The steps for constructing this transducer are the following:
 
  \begin{itemize}
     \item Starting from the $C(4)$ presentation $\langle A \mid R \rangle$, an abstract machine called a 2-tape deterministic prefix rewriting automaton with bounded expansion can be computed. The construction is given in the proof of Theorem 2 in \cite{overlaps2}. The relation accepted by this automaton is $R ^ {\#}$.
     \item Using the construction in the proof of Theorem 1 in \cite{overlaps2}, the 2-tape deterministic prefix rewriting automaton can be used to construct a transducer $\mathcal{T}$ realizing $R ^ {\#}$.
  \end{itemize}
  
  Let $\delta$ be the length of the longest relation word in $R$ and let $P$ be the set of  pieces of the presentation. The set $A^{\leq k}$ for $k\in \mathbb{N}$ consists of all words in $A^*$ with length less or equal to $k$. Similarly, $A^{< k}$ consists of all words in $A^*$ with length less than $k$. In addition, let $\$$ be a new symbol not in $A$. The set $A^{< k}\$$ consists of words $u\$$ such that $u \in A^{< k}$.
  
  The state set of the transducer $\mathcal{T}$, given in the proof of Theorem 1 in~\cite{overlaps2}, is the set $C \times C \times P$ where $C$ is the set $$A^{\leq3\delta} \cup A^{<3\delta}\$.$$ Hence, the number of states of the transducer is extremely large even for relatively small presentations. 
  
  For example, let $\left\langle a, b, c \,|\, a^2bc=acba \right\rangle $ be the presentation. In this case $|A|=3, \delta = 4$ and $P = \{ \varepsilon, a, b, c\}$. The size of the state set $Q = C \times C \times P$ of the corresponding transducer is $|C|^2 \cdot |P| = 4|C|^2$. Since $C = A^{\leq 12}\cup A^{<12}\$$, $$|C| = \sum_{i=0}^{12} 3^{i} + \sum_{i=0}^{11} 3^{i} = 1062881$$ and $|Q| = 4518864080644$.
  
  Another approach arising from~\cite{overlaps2} for the computation of normal forms is the construction of a deterministic 2-tape automaton accepting $\lex(R ^ {\#})$. This also begins by constructing the transducer $\mathcal{T}$.
 The process arising from~\cite{overlaps2} for the construction of the automaton is: perform the two steps given above to construct the transducer $\mathcal{T}$, then:
 \begin{itemize}
     \item using the construction in the proof of Proposition 1 in \cite{overlaps2}, a deterministic 2-tape automaton accepting $R ^ {\#}$ can be constructed starting from the transducer $\mathcal{T}$;
     \item the proof of Theorem 5.1 in \cite{Johnson2} describes the construction of a deterministic 2-tape automaton accepting $\lex(R ^ {\#})$, starting from the deterministic 2-tape automaton that accepts $R ^ {\#}$.
 \end{itemize}
 
 The state set $Q = Q_1 \cup Q_2$ of the 2-tape automaton that accepts $R ^ {\#}$ in the second step is the same as the state set of the transducer $\mathcal{T}$, partitioned in two disjoint sets $Q_1$ and $Q_2$. The state set $Q' = Q_1' \cup Q_2'$ of the 2-tape automaton that accepts $\lex(R ^ {\#})$ is the union of the sets $Q_1'=Q_1 \times 2^{Q_1}$ and $Q_2'=Q_2 \times 2^{Q_1}$, hence the number of states of this automaton is greater than the number of states of the transducer $\mathcal{T}$.
 
Although the approach described in \cite{overlaps2} allows normal forms for words in a $C(4)$ presentation to be found in linear time, 
it is impractical to use a transducer with such a large state set.
The current article arose out of a desire to have a practical algorithm for computing normal forms in $C(4)$ monoids.

\section{Possible prefix algorithm}
\label{section-possible-prefix}
\addtocounter{subsection}{1}
Before describing the procedure for finding normal forms,  we describe an algorithm that takes as input a word $w_0$ and a possible prefix piece $p$ of $w_0$ and returns a word equivalent to $w_0$ with prefix $p$.
As mentioned above the algorithm \textbf{WpPrefix}, described in \cite{kambites2011} can decide whether a piece $p$ is a possible prefix of some word $w_0$ by calling \textbf{WpPrefix}$(w_0,w_0,p)$. 

\begin{algorithm}
\caption{- \textbf{ReplacePrefix}($w_0, p$)}
\label{algorithm-prefix}
\textbf{Input:} A word $w_0$ and a piece $p$ such that \textbf{WpPrefix}$(w_0,w_0,p)=$Yes.\\
\textbf{Output:} A word equivalent to $w_0$ with prefix $p$.\\
\noindent \textbf{ReplacePrefix}($w_0, p$):
\begin{algorithmic}[1]
    \IF{$w_0$ does not have prefix $p$ and $w_0=aXYw'$ with $aXY$ a clean overlap prefix}
     \STATE $u\leftarrow$\textbf{ReplacePrefix}$(w',Z)$ \text{with $Z$ deleted}
     \STATE $w_0\gets a\overline{XYZ}u$ where $\overline{XYZ}$ is a proper complement of $XYZ$ such that $p$ is a prefix of $a\overline{X}$
  \ENDIF
 \RETURN $w_0$
\end{algorithmic}
\end{algorithm}

 \begin{lemma}\label{coprefix}
Let $w \in A^*$ be arbitrary. If there exists a piece $p$ that is a possible, but not an actual, prefix of $w$, then the shortest relation prefix of $w$ is a clean overlap prefix.
\end{lemma}

\begin{proof}
Since $p$ is a possible prefix but not a prefix of $w$, $w$ contains at least one relation word and hence has a relation prefix. Let $aXY$ be the shortest relation prefix of $w$. Then $aXY$ is an overlap prefix.
If $aXY$ is not clean, then $w$ has a prefix of the form $aXY'X_0Y_0$ such that $Y'$ is a proper non-empty prefix of $Y$. Hence the shortest clean overlap prefix of $w$ contains $aXY'$ and hence $aXY'$ is also a prefix of every $v$ such that $v \equiv w$ by \cref{lem-2}. Let $w_0$ be a word equivalent to $w$ that has prefix $p$. Then either $p$ is a prefix of $aXY'$ or $p$ contains $aXY'$. In the former case this would mean that $p$ is also a prefix of $w$, which is a contradiction. In the latter case $XY'$ is a factor of $p$. Since $p$ is a piece this implies that $XY'$ is also a piece which is a contradiction since $X$ is the maximal piece prefix of the relation word $XYZ$. We have shown that the shortest relation prefix of $w$ is a clean overlap prefix, as required. 
\end{proof}

\begin{lemma}\label{find-relation-prefix}
Let $w\in A ^ *$ be arbitrary. If $w$ has a piece $p$ as a possible, but not an actual, prefix, then the shortest relation prefix of $w$ can be found in constant time, given the suffix tree for the relation words in $R$.
\end{lemma}
\begin{proof}
Suppose that $S = \{W_0, W_1, \dots, W_{n-1}\}$ is the set of relation words and let $\delta$ be the length of the longest relation word in $R$.  We want to find the shortest relation prefix $tX_{W_i}Y_{W_i}$ for some $t \in A^*$, and $W_i \in S$. Since $X_{W_i}Y_{W_i}$ and $p$ are factors of relation words, 
$|p|, |X_{W_i}Y_{W_i}| \leq \delta$. 
Since $tX_{W_i}Y_{W_i}$ is the shortest relation prefix of $w$, $t$ is prefix of every word equivalent to $w$, and hence $t$ is a proper prefix of $p$. In particular, $|t| < |p| < \delta$. If $|w| \geq 2\delta$, then we define $v$ to be the prefix of $w$ of length $2\delta$; otherwise, we define $v$ to be $w$. In order to find the shortest relation prefix of $w$ it suffices to find the shortest relation prefix of $v$. For a given presentation, the length of $v$ is bounded above by the constant value $2\delta$.

In practice, in order to find the shortest relation prefix of $v$, we construct a suffix tree for all words $X_{W_i}Y_{W_i}$ such that $X_{W_i}Y_{W_i}Z_{W_i}$ is a relation word of the presentation. This is done in $O(N+n)$ time, for $N$ the sum of the lengths of the relation words in the presentation. A factor of $v$ has the form $X_{W_i}Y_{W_i}$ for some $i$ if and only if this factor labels a path that starts at the root node of the tree and ends at some leaf node labelled by $(i,0)$. Hence the shortest relation prefix of $v$ can be found by traversing the nodes of the tree at most $|v|$ times. Since the length of $v$ is at most $2\delta$ this can be achieved in constant time.

The complexity of this procedure is $O((N+n)|v|) = O(2\delta(N+n))$ which is independent of the choice of $w$.
\end{proof}

Next, we will show that \cref{algorithm-prefix} is valid.
 
 \begin{prop}
 If $w_0, p \in A ^*$ are such that $p$ is piece and a possible prefix of $w_0$, then \textbf{ReplacePrefix($w_0, p$)} returns a word that is equivalent to $w_0$ and has prefix $p$ in $O(|w_0|)$ time, given the suffix tree for the relation words in $R$.
 \end{prop}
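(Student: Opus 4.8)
The plan is to induct on $|w_0|$ and to verify three things: that the recursion is well-defined (the required proper complement $\overline{XYZ}$ always exists), that it terminates with strict decrease in the length of the argument, and that each recursive step preserves equivalence to the original word while making progress toward having prefix $p$. First I would handle the base case: if $w_0$ already has prefix $p$, the algorithm returns $w_0$ immediately and there is nothing to prove. Otherwise, since $p$ is a possible prefix of $w_0$ but not an actual prefix, Lemma~\ref{coprefix} tells us that the shortest relation prefix of $w_0$ is a clean overlap prefix $aXY$, so the \textbf{if}-branch is entered; by Lemma~\ref{find-relation-prefix} this prefix is found in constant time.

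Next I would justify the choice of $\overline{XYZ}$ in line~3. Writing $w_0 = aXYZw'$ (after splitting off $Z$ from the front of the recursively obtained word), the key point is that $p$, being a possible prefix of $w_0$, must be ``compatible'' with $aXY$: since $aXY$ is a clean overlap prefix, Lemma~\ref{lem-2} says every word equivalent to $w_0$ — in particular any word with prefix $p$ — has $aXY$ or $a\overline{XY}$ as an overlap prefix. Comparing lengths, and using that $p$ is not a prefix of $w_0$ (so $p$ cannot agree with $aXY$), one deduces that $p$ is a prefix of $a\overline{X}$ for exactly one proper complement $\overline{XYZ}$ of $XYZ$, which is the one the algorithm selects. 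This establishes that the word $a\overline{XYZ}u$ constructed has prefix $p$ once the recursion returns a word $u'$ with prefix $Z$ so that $u$ (namely $u'$ with $Z$ deleted) is well-defined; here I would invoke the inductive hypothesis applied to $(w', Z)$, noting that $Z$ is a piece and a possible prefix of $w'$ (this last fact follows because $aXYw' \equiv aXYZ\,(Z^{-1}\!\cdot\text{stuff})$ — more carefully, because $w_0$ having clean overlap prefix $aXY$ means $w' $ has $Z$ as a possible prefix, which is exactly the setup in which \textbf{ReplacePrefix} was invoked).

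For equivalence, I would observe that replacing the prefix $aXYZw_0'$-portion: since $(XYZ, \overline{XYZ}) \in R^{\#}$ (they are complements), we have $aXYZ u \equiv a\overline{XYZ} u$, and by induction $u' \equiv w'$ where $u'$ is what the recursive call returns, so $w_0 = aXYw' = aXYZ\cdot(\text{tail}) \equiv a\overline{XYZ} u$; I would spell out the bookkeeping of deleting and reinserting $Z$ carefully, since that is where an off-by-one slip is easiest. Termination and the $O(|w_0|)$ bound follow because each recursive call is on $w'$, which is a proper suffix of $w_0$ of length $|w_0| - |aXYZ| < |w_0|$, and the work done outside the recursive call — finding the clean overlap prefix (constant time by Lemma~\ref{find-relation-prefix}), selecting the complement (bounded by the constant $\delta$), and the string surgery (linear in $|w_0|$) — is $O(|w_0|)$; summing the geometric-like series of suffix lengths gives $O(|w_0|)$ overall, though I would note that if the string operations are charged naively per level the bound is a priori $O(|w_0|^2)$, so the argument should either use a suitable data structure or observe that the total length of all the $a$-prefixes processed telescopes.

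The main obstacle I anticipate is the second paragraph: proving that the proper complement $\overline{XYZ}$ with ``$p$ a prefix of $a\overline X$'' exists and is uniquely determined. This requires carefully combining Lemma~\ref{lem-2} (which guarantees $aXY$ or \emph{some} $a\overline{XY}$ is an overlap prefix of the word with prefix $p$) with a length/comparison argument pinning down which complement is forced, and with the $C(4)$-consequence that relation prefixes of a given length are unique. The interaction between ``$p$ is a piece'' and ``$p$ is a possible prefix'' — in particular ruling out the degenerate cases where $|p| \le |aX|$ versus $|aX| < |p| \le |aXY|$ — is exactly the kind of case analysis that Lemma~\ref{coprefix}'s proof already rehearses, so I would lean on that pattern.
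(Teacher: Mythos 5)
Your proposal is correct and follows essentially the same route as the paper: induction on the recursion (the paper counts recursive calls rather than $|w_0|$, which amounts to the same thing since each call is on a proper suffix), Lemma~\ref{lem-2} to force $p$ to be a prefix of $a\overline{X}$ for some proper complement, the inductive hypothesis applied to $(w', Z)$, and the observation that the work at each level outside the recursive call is bounded by a constant depending only on the presentation (which also resolves your worry about the string surgery, since every prepended piece has length at most $|a| + |\overline{XYZ}| < 2\delta$). The one slip is your claim that the proper complement with $p$ a prefix of $a\overline{X}$ is \emph{unique} --- the paper's example following this proposition exhibits two valid choices --- but only existence is needed, so nothing in your argument breaks.
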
	
 		
 \begin{proof}
 We will prove that the algorithm returns the correct result using induction on the number $k$ of recursive calls in line 2.  
 Note that if $p$ is a possible prefix of $w_0$ and $w_0$ contains no relation words, then $p$ is a prefix of $w_0$. On the other hand, if $p$ is not a prefix of $w_0$, then $w_0$ must contain a relation word, and hence a clean overlap prefix. 
 
We first consider the base case, when $k=0$. Let $p$ be a piece and $w_0$ a word such that \textbf{ReplacePrefix}($w_0,p$) terminates without making a recursive call. This only happens in case $p$ is already a prefix of $w_0$ and the algorithm returns $w_0$ in line 5. Hence when $k = 0$ the word returned by \textbf{ReplacePrefix}($w_0,p$) is $w_0$ and has prefix $p$.

Next, we let $k>0$ and assume that the algorithm returns the correct result when termination occurs after strictly fewer than $k$ recursive calls. Now let $p$ be a piece and $w_0$ a word such that \textbf{ReplacePrefix}($w_0,p$) terminates after $k$ recursive calls. It suffices to prove that the first recursive call returns the correct output.

If $p$ is already a prefix of $w_0$ a recursive call does not happen, hence we are in the case where $p$ is not a prefix of $w_0$. Since $p$ is a possible prefix of $w_0$, there exists a word that is equivalent but not equal to $w_0$ and that has $p$ as a prefix. This means that $w_0$ has a relation prefix and hence it has a clean overlap prefix of the form $aXY$. By \cref{lem-2}, every word equivalent to $w_0$ has $a\overline{XY}$ for $\overline{XY}$ a complement of $XY$, as a prefix. Hence since $p$ is not a prefix of $w_0$, $p$ must be a prefix of $a\overline{XY}$, for $\overline{XY}$ a proper complement of $XY$. Since $p$ is a piece, $|p| \leq |a\overline{X}|$ because otherwise a prefix of $\overline{XY}$ longer than $\overline{X}$ would be a piece. Hence $p$ is a prefix of $a\overline{X}$.
It follows that there exists a word equivalent to $w_0$ in which $aXY$ is followed by $Z$ and we can rewrite $XYZ$ to $\overline{XYZ}$. This implies that if $w'$ is the suffix of $w_0$ following $aXY$, then $Z$ is a possible prefix of $w'$. In particular, by the inductive hypothesis, \textbf{ReplacePrefix}($w', Z)$ is $Zu$ for some $u\in A ^ *$ and $a\overline{XYZ}u$ is a word equivalent to $w_0$ that has prefix $p$. Therefore, by induction, the algorithm will return $a\overline{XYZ}u$ in line 5 after making the recursive call in line 2.

It remains to show that the output of \textbf{ReplacePrefix}($w_0, p$) can be computed in $O(|w_0|)$ time. The recursive calls within \textbf{ReplacePrefix}($w_0, p$) always have argument which is a factor, even a suffix, of $w_0$. Hence if \textbf{WpPrefix}($w_0,w_0,p$)=Yes, then $p$ is a possible prefix of $w_0$, and the number of recursive calls in \cref{algorithm-prefix} is bounded above by the length of $w_0$. 

 Let $\delta$ be the length of the longest relation word of our presentation. In line 1, we begin by checking if $p$ is a prefix of $w_0$. Clearly, this can be done in $|p|$ steps and since $p$ is a piece, $|p| < \delta$. In line 1, we also search for the clean overlap prefix of $w_0$. As shown in Lemmas \ref{coprefix} and \ref{find-relation-prefix}, this can be done in constant time. Next, in line 2 we delete a prefix of length $|Z|$ from the output of \textbf{ReplacePrefix}($w', Z$). Since $|Z| < \delta$, the complexity of this step is also constant for a given presentation. The search for a complement $\overline{XYZ}$ of $XYZ$ such that $p$ is a prefix of $a\overline{X}$ can be performed in constant time since the number of relation words is constant for a given presentation and $|p|< \delta$. In line 3, we concatenate words to obtain a word equivalent to $w$. In every recursive call we concatenate three words hence the complexity of this step is also constant. 
As we have already seen, the number of recursive calls of the algorithm is bounded above by the length of $w_0$, hence \textbf{ReplacePrefix}($w_0, p$) can be computed in $O(|w_0|)$ time.
\end{proof}

For the following examples we will use the notation $w_i$ and $p_i$ for the parameters of the $i$th recursive call of \textbf{ReplacePrefix}($w,p$) and we let $w_0=w$ and $p_0=p$.

 	\begin{ex}
 	Let $\mathcal{P}$ be the presentation $$\left\langle a,b,c,d \, | \, acba=a^2bc,\, acba=db^2d\right\rangle $$ and we let $w=acbdb^2d$. 
 	The set of pieces of $\mathcal{P}$ is $P =\{\varepsilon, a, b, c, d\}$. Let $W_0 = acba, \, W_1 = a^2bc, \, W_2 = db^2d$. Clearly $X_{W_0} = a, \, Y_{W_0} = cb, \, Z_{W_0} = a, \, X_{W_1} = a, \, Y_{W_1} = ab, \, Z_{W_1} = c $ and $X_{W_2} = d, \, Y_{W_2} = b^2, \, Z_{W_2} = d$.  The algorithm \textbf{WpPrefix}$(w,w, d)$ returns Yes and we want to find \textbf{ReplacePrefix}($w, d$). 
 	
 	We begin with $w_0=acbdb^2d$, $p_0=d$ and $u_0=\varepsilon$. Clearly $w_0$ does not begin with $d$ but using the process described in \cref{find-relation-prefix}, we can find the clean overlap prefix of $w_0$ which is $acb = X_{W_0}Y_{W_0}$ and hence $w_0$ satisfies the conditions of line 1. In line 2, $w_1\gets db^2d$, $p_1 \gets a$ and in order to compute $u_1$ we need to compute \textbf{ReplacePrefix}$(db^2d,a)$. Since $w_1$ does not begin with $a$ we need to find the clean overlap prefix of $w_1$ which is $db^2=X_{W_2}Y_{W_2}$. Now $w_2 =d$, $p_2 =d$ and \textbf{ReplacePrefix}($d,d$) returns $d$. Now $w_1$ will be rewritten to a complement of $db^2d$ that begins with $a$, hence we choose one of $W_0$ and $W_1$. If we choose $W_0$, $w_1 \gets acba$ and $w_0 \gets db^2dcba$. If we choose $W_1$, $w_1 \gets a^2bc$ and $w_0 \gets db^2dabc$. In both cases the algorithm returns a word equivalent to $w$ that begins with $d$.
 	\end{ex}
 
 \section{A practical normal form algorithm} 
 \label{section-main}
 
 In this section we describe a practical algorithm for computing lexicographically normal forms in $C(4)$ monoids. This section has four subsections: the first contains a description of the algorithm; the second a proof that the algorithm returns a word equivalent to the input word; the third contains a proof that the algorithm returns the lexicographically least word equivalent to the input word; and in the final section we consider the complexity of the algorithm.
 
 \subsection{Statement of the algorithm}

In this section, we describe the main algorithm of this paper for computing the lexicographically least word equivalent to an input word.
 Roughly speaking,  the input word is read from left to right, clean overlap prefixes of the form $uXY$ for $u \in A^*$ are found and replaced with a lexicographically smaller word if possible. Subsequently, the next clean overlap prefix of this form after $uXY$ is found, and the process is repeated.  The algorithm is formally defined in \cref{algorithm-normalform}.

\begin{algorithm}[H]
\caption{- \textbf{NormalForm}($w_0$)}
\label{algorithm-normalform}
\textbf{Input:} A word $w_0\in A ^ *$.\\
\textbf{Output:} The lexicographically least word $v\in A ^ *$ such that $v \equiv w_0$.

\raggedright
{\fontsize{10}{10}\selectfont
\begin{algorithmic}[1]
\STATE $W\gets \varepsilon$, $v\gets \varepsilon$, $w\gets w_0$
\WHILE{$w\neq \varepsilon$}
    \IF{$W=X_rY_rZ_r$, $w=Z_rw'$, $w'$ is $\overline{Z}_r$-active for some proper complement $\overline{Z_r}$ of $Z_r$, $w'$ is not $Z_r$-active and $a$ is a suffix of $\overline{Z_r}$ with $aw' = X_sY_sw''$ and \textbf{WpPrefix}$(w'',w'',Z_s)$=Yes}
    \IF{there exists a proper complement of $X_sY_sZ_s$ with prefix $a$ that is lexicographically less than $X_sY_sZ_s$}
     \STATE 
     $X_tY_tZ_t \gets$ \text{the lexicographically minimal proper complement of $X_sY_sZ_s$ that has prefix $a$}
     \ELSE
     \STATE $X_tY_tZ_t \gets \varepsilon$
    \ENDIF
       \IF{$X_tY_tZ_t \neq \varepsilon$, $X_t=ab$ and \textbf{WpPrefix}$(w_0,vZ_rbY_tZ_tt, \varepsilon)=\text{Yes}$ where $Z_st=$\text{\textbf{ReplacePrefix}}$(w'',Z_s)$}
	        \STATE $W\gets X_tY_tZ_t$
            \STATE $v\gets vZ_rbY_t$
	        \STATE $w\gets Z_tt$
	   \ELSE 
	       \STATE $W\gets X_sY_sZ_s$
	       \STATE $v\gets vZ_rX_s''Y_s$ where $X_s = aX_s''$
	       \STATE $w\gets \text{\textbf{ReplacePrefix}}(w'',Z_s)$
	   \ENDIF
	 \ELSIF{$w$ has a clean overlap prefix of the form $aXY$ and $w=aXYw'$}
  	    \IF {\textbf{WpPrefix}$(w',w',Z)=\text{No}$}
	        \STATE $W\gets \varepsilon$
		    \STATE $v\gets vaXY$
	        \STATE $w\gets w'$
	    \ELSE
	    	\STATE $W\gets $ the lexicographically minimal complement $X'Y'Z'$ of $XYZ$
	        \STATE $v\gets vaX'Y'$
	    	\STATE $w\gets Z'w''$, \text{where \textbf{ReplacePrefix}$(w', Z) = Zw''$} 
	   \ENDIF
    \ELSE 
      \STATE $v\gets vw$
	  \STATE $w \gets \varepsilon$  
    \ENDIF    
\ENDWHILE
\RETURN $v$
\end{algorithmic}}
\end{algorithm}

 \subsection{Equivalence}
 
In this section we show that \textbf{NormalForm} terminates and the word returned is equivalent to the input word $w_0$. We begin by observing that \textbf{NormalForm} rewrites $v$ and $w$ in lines 11-12, 15-16, 21-22, 25-26, and 29-30. For the remainder of this section, $v_i$ and $w_i$ will be $v$ and $w$ after the $i$-th time the algorithm has rewritten $v$ and $w$. 

The following result will be used to prove that \cref{algorithm-normalform} terminates and that the word returned by the algorithm is equivalent to its input. We have already proved that if a piece $p$ is a possible prefix of a word $v$, then algorithm \textbf{ReplacePrefix}($v, p$) returns a word equivalent to $v$ with prefix $p$.
If $w\in A^*$, $XY$ is a clean overlap prefix of $w$, $w'$ is the suffix of $w$ following $XY$, and $Z$ is a possible prefix of $w'$, then $w\equiv XYZu$ where $Zu=\text{\textbf{ReplacePrefix}}(w',Z)$.
This is straightforward since $Z$ is a possible prefix of $w'$ and hence $w' \equiv Zu$ for $Zu=\text{\textbf{ReplacePrefix}}(w',Z)$.  

\begin{lemma}\label{vw}
Assume that $w_0 \in A^*$ is the input to \textbf{NormalForm}. Then at each step of \textbf{NormalForm}($w_0$), $v_iw_i \equiv w_0$.
\end{lemma}

\begin{proof}
 We proceed by induction on $i$. For $i=0$, $v_0 = \varepsilon$ and hence $v_0w_0 = w_0$. Let $k \in \mathbb{N}$ and assume that $v_kw_k\equiv w_0$. We will prove that $v_{k+1}w_{k+1}\equiv w_0$. 

In the cases of lines 21-22 and 29-30, it is clear that some prefix of $w_k$ is transferred to the end of $v_{k + 1}$. In particular, $v_kw_k = v_{k+1}w_{k+1}\equiv w_0 $. In lines 15-16, a prefix of $w_k$ is transferred to the end of $v_{k + 1}$ again and \cref{algorithm-prefix} is applied. Hence $v_kw_k \equiv v_{k+1}w_{k+1}\equiv w_0 $. In lines 25-26 we rewrite the relation word $XYZ$ to $\overline{XYZ}$. Since $XYZ$ begins after the beginning of $w_k$, there exists some $s$ equivalent to $w_k$ which is obtained by the application of this rewrite. Hence $v_kw_k \equiv v_ks$ with $a\overline{XY}$ being a prefix of $s$, and $v_ks=v_{k+1}w_{k+1}$. It follows that $v_{k+1}w_{k+1}\equiv w_0$.
Finally, in the case of lines 11-12 the result follows immediately from the use of \textbf{WpPrefix} in line 9.
\end{proof}

In~\cite[Theorem 5.2.14]{higgins} it is shown that if $w_0, w \in A^*$ are such that $w\equiv w_0$, then
$$|w|<\delta |w_0|$$ where $\delta$ is the maximum length of a relation word in $R$.
Since in every step of \textbf{NormalForm}$(w_0)$, $v_iw_i \equiv w_0$ by \cref{vw}, we conclude that  $|v_iw_i| \leq \delta |w_0|$ for all $i$. \cref{algorithm-normalform} terminates when $w_i = \varepsilon$. Since the length of $v_{i + 1}$ is strictly greater than the length of $v_{i}$, \cref{algorithm-normalform} terminates for any $w_0 \in A^*$ and the while loop of line 2 will be repeated at most $\delta|w_0|$ times.

Combining \cref{vw} with the fact that \textbf{NormalForm} terminates, we obtain the following  corollary.

\begin{cor} \label{equivalence}
If $w_0\in A^*$ is arbitrary, then the word $v$ returned by \textbf{NormalForm}($w_0$) is equivalent to $w_0$.
\end{cor}

 \subsection{Minimality}
We require the following definition and a number of related results to establish that the word returned by \textbf{NormalForm}$(w_0)$ is the lexicographic minimum word equivalent to $w_0$.

\begin{de}
Let $w \in A^*$. A middle word $Y$ is called a \textit{special} middle word of $w$ if $w = pYq$ for some $p, q \in A^*$ and there exists a word $p'XYZq'$ that is equivalent to $w$, $p'X \equiv p$, and $Zq' \equiv q$.
\end{de}

In other words, $Y$ is a special middle word of $w$ if it is a subword of $w$ and there exists a word equivalent to $w$ containing $XYZ$ as a factor in the obvious place. Note that if a relation word $XYZ$ is a factor of $w$, then it follows directly from the definition that $Y$ is a special middle word of $w$.
Since middle words are not pieces, it follows that a middle word $Y_i$ will never occur as a factor of a middle word $Y_j$ unless $Y_i = Y_j$. So, if $Y_i$ and $Y_j$ are special middle words of a word $w$ and they begin at the same position in $w$, then $Y_i = Y_j$. In the following lemma, we prove that the special middle words of an arbitrary word $w$ do not overlap with each other.

\begin{lemma}\label{no-overlap}
Let $Y_i$ and $Y_j$ be special middle words of $w$ where $Y_i$ occurs strictly before $Y_j$. Then $w = pY_iqY_jr$ for some $p, q, r \in A^*$.
\end{lemma}

\begin{proof}
Assume that $Y_i$ and $Y_j$ are such that $Y_i$ and $Y_j$ overlap as factors in $w$. Let $Y_i = xy$ and $Y_j = yz$ be such that $w = pY_izr = pxY_jr = pxyzr$ for some $x, y, z\in A ^*$.

Since $Y_i$ is a special middle word of $w = pY_izr$, there exists $r' \in A^*$ such that $zr \equiv Z_ir'$. If $z$ is a prefix of $Z_i$, then $yz = Y_j$ is a factor of the relation word $X_iY_iZ_i$, a contradiction since $Y_j$ is not a piece. If $z$ is a prefix of $Z_ir'$ that is longer than $Z_i$, then the suffix $yZ_i$ of $X_iY_iZ_i$ which is longer than $Z_i$ is a factor of $Y_j$ and this contradicts the definition of $Z_i$. It follows that $z$ is not a prefix of $Z_ir$ and so $Z_ir' \neq zr$ and, in particular, $zr$ contains a relation word as a factor and hence $zr$ has a relation prefix and a clean overlap prefix. Hence $zr = aXYq'$ for $aXY$ a clean overlap prefix and some $q' \in A^*$. In addition, $|a| < |z|$ since $a$ is contained in every word equivalent to $aXYq'$ by \cref{lem-2}, $aXYq' \equiv Z_ir'$ and $z$ is not a prefix of $Z_i$. Since $|a|<|z|$, there exists a suffix $X'$ of $z$ that is a prefix of $X$ and $zr = aXYq' = zX''Yq'$ where $X''$ is such that $X = X'X''$.
Since $Y_j$ is a special middle word of $w$ and $w = pxY_jr = pxY_jX''Yq'$, $X''Yq'$ is equivalent to $Z_jt$ for some $t \in A^*$. Clearly, $Z_j$ is not a prefix of $X''Y$ because that would imply that a suffix of $X_jY_jZ_j$ longer than $Z_j$ is a factor of $XY$. In addition, $X''Y$ is not a prefix of $Z_j$ because $Y$ is not a piece. It follows that $Z_jt \neq X''Yq'$. Since $Z_jt \equiv X''Yq'$ but $Z_j$ is not a prefix of $X''Yq'$ it follows that $X''Yq'$ has a clean overlap prefix $bX_{*}Y_{*}$ with $|b| < |X''Y|$ because otherwise $X''Y$ would be a factor of all words equivalent to $X''Yq'$ and hence a factor of $Z_j$. If a prefix of $X''Y$ longer than $X''$ is a factor of $b$, then a prefix of $XY$ longer than $X$ is a factor of $Y_jZ_j$, a contradiction. 
It follows that either $X_{*}Y_{*}$ is a factor of $X''Y$ or $Y$ is a factor of $X_{*}Y_{*}$ and both cases lead to a contradiction.

We conclude that $Z_j$ cannot be a prefix of any word equivalent to $r$. In particular, it follows that $Y_j$ is not a special middle word of $w$ which contradicts the initial assumption and hence $Y_i$, $Y_j$ do not overlap.
\end{proof}

We can order the special middle words of a word $w \in A^*$ by their order of appearance as factors of $w$ from left to right. In particular, for every $ w \in A ^*$ we will refer to the \textit{sequence of special middle words} $(Y_0, Y_1, \ldots, Y_n)$ of $w$ where $i < j$ whenever $Y_i$ occurs to the left of $Y_j$ in $w$. 

The next lemma collects some basic facts about the decomposition of the relation words $u$ into $X_uY_uZ_u$ that follow more or less immediately from the definition of the $C(4)$ condition.

\begin{lemma}\label{lem-1}
    Let $W_i=X_iY_iZ_i$ and  $W_j=X_jY_jZ_j$ be arbitrary relation words in $R$ such that $W_i \not= W_j$. 
    \begin{enumerate}[\rm (i)]
        \item 
            If a suffix of $Y_iZ_i$ is a prefix of $X_jY_j$, then a suffix of  $Z_i$ is a prefix of $X_j$.
        \item 
            If $W_i$ overlaps $W_j$ in a word $w\in A ^ *$, then either: $Z_i$ overlaps with $X_j$ or $X_i$ overlaps with $Z_j$.
        \item 
        If $Y_u = sY_u'$ for some $s, Y_u'\in A ^*$ with $s\neq \varepsilon$, then $X_us$ is not a factor of any relation word other than $u$.
    \item
        Suppose that $w=X_iY_iZ_iX_j''Y_jZ_j$ where $X_j=X_j'X_j''$ and $X_j'$ is a suffix of $Z_i$. 
    If $W_k$ is a relation word in $R$ such that $w \equiv p W_k q$ for some $p, q\in A ^ *$, then $W_k$ equals a complement $\overline{W_i}$ of $W_i$, or a complement $\overline{W_j}$ of $W_j$.
    \end{enumerate}
\end{lemma}

To prove that the word returned by \textbf{NormalForm} is the lexicographically least word equal to the input word, we establish the following theorem.

\begin{thm} \label{theorem}
Suppose that $u, v\in A ^ *$ are such that $u \equiv v$,
that $u = u_0Y_{0} \cdots u_mY_{m}u_{m + 1}$, and that 
$v = v_0 \overline{Y_{0}} \cdots v_n\overline{Y_{n}}v_{n + 1}$ where $Y_{i}$ and $\overline{Y_{i}}$ are the special middle words in $u$ and $v$, respectively. Then $u = v$ if and only if $m = n$ and 
$Y_{0}= \overline{Y_{0}},\ldots, Y_{m} = \overline{Y_{m}}$.
\end{thm}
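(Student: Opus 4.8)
The statement is an ``if and only if'', and one direction is trivial: if $Y_i = \overline{Y_i}$ for all $i$, then $u$ and $v$ have identical factorisations into special middle words and the connecting pieces, but a priori the connecting factors $u_j$ and $v_j$ could still differ, so even this direction needs an argument. The plan is therefore to prove a single structural fact: the factorisation of a word $w \equiv w_0$ into special middle words $Y_i$ separated by factors $w_j$ is essentially rigid once the sequence of special middle words is fixed. More precisely, I would show that for two equivalent words with the same sequence of special middle words, the portions between consecutive special middle words are forced to coincide. This gives both directions at once: if $u=v$ then obviously the $Y_i$ match (since the special middle words of a word are determined by the word), and conversely if the $Y_i$ match then rigidity pins down the interleaving factors and forces $u = v$.

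The key steps, in order, would be as follows. First, I would use Lemma~\ref{smw-subword} to note that every special middle word of $u$ genuinely occurs as a subword of $u$, and analyse where: a special middle word $Y$ arises from a recursive call \textbf{WpPrefix}$(u_i,u_i,p)$ in which $u_i = XYu_i'$ has $XY$ as a clean overlap prefix, and tracing the recursion back through \textbf{WpPrefix} (as in the proof of Lemma~\ref{smw-subword}) locates $Y$ at a specific position determined by the clean overlap prefix structure of $u$. Second, I would argue that consecutive special middle words cannot overlap and cannot be ``too close'': since a special middle word is a genuine middle word $Y_r$ of some relation word $X_rY_rZ_r$, and $Y_r$ is never a piece (by $C(4)$), two occurrences of special middle words in $u$ that are adjacent in the factorisation are separated in a controlled way, using Lemma~\ref{lem-1}(iii) and Proposition~\ref{yproperty}. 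Third — the heart of the argument — given $u \equiv v$ with matching special middle word sequences, I would compare $u$ and $v$ piece by piece: between the end of $\overline{Y_{i-1}}$ (resp. $Y_{i-1}$) and the start of $\overline{Y_i}$ (resp. $Y_i$), any rewrite transforming the $u$-side into the $v$-side must act on a region that contains no special middle word, hence (invoking the uniqueness of relation prefixes of a given length from the $C(4)$ condition, and Lemma~\ref{lem-2} on clean overlap prefixes being forced up to complementation) the two regions, once they agree on which relation words they ``straddle'', must literally be equal. I expect to need an induction on $m$ (the number of special middle words) or on $|u|$, peeling off one special middle word at a time.

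The main obstacle will be the third step: showing that the inter-special-middle-word factors are forced. The difficulty is that between two special middle words there can still be relation words whose middle words are \emph{not} special, so the region is not simply a ``relation-word-free zone''; rather, one must argue that any relation word occurring strictly between two consecutive special middle words in some word equivalent to $u$ would itself, by the recursive structure of \textbf{WpPrefix}$(u,u,\varepsilon)$, give rise to a special middle word, contradicting the assumption that we have listed all of them. This requires a careful bookkeeping argument matching the recursion tree of \textbf{WpPrefix} against the clean-overlap-prefix decomposition of the word, and correctly handling the case analysis of lines 15, 24, 25, 28, 29 of \textbf{WpPrefix} exactly as in Lemma~\ref{smw-subword}. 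A secondary subtlety is the handling of complements: when a clean overlap prefix $aXY$ of $u$ corresponds in $v$ to $a\overline{XY}$ rather than $aXY$ (Lemma~\ref{lem-2}), one must check that the special middle word recorded is $\overline{Y}$ and that the equality $Y = \overline{Y}$ is exactly the condition ruling this substitution out; organising the induction so that this bookkeeping stays clean is where most of the care will go.
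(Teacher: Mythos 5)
Your plan is essentially the paper's own proof: the paper establishes exactly your ``rigidity'' statement as Lemma~\ref{same-prefixes} (matching special middle words force matching prefixes), proved by induction, peeling off one special middle word at a time via a case analysis of which lines of \textbf{WpPrefix} can host the recursive calls between consecutive special middle words --- precisely the bookkeeping you identify as the main obstacle. The correspondence-up-to-complement that you flag as a secondary subtlety is the paper's Lemma~\ref{corresponding-smw}, and the theorem is then deduced as a corollary of Lemma~\ref{same-prefixes}.
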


We establish the proof of \cref{theorem} in a sequence of lemmas. We start by showing that if $u \equiv v$, then there is a 1-1 correspondence between the special middle words of $u$ and $v$. Using the properties in \cref{lem-1} we obtain the following lemma to show that $Y$ is a special middle word of $u$, if and only if some complement $\overline{Y}$ of $Y$ is a special middle word of $v$. 

\begin{lemma}\label{yproperty}
Let $u, v \in A^*$. Assume that $Y$ is a special middle word of $u$ such that $u = pYq$ for some $p, q \in A^*$. Then $u \equiv v$ if and only if one of the following holds:
\begin{enumerate}[\rm (i)]
    \item $v = p'Yq'$ such that $p' \equiv p$ and $q' \equiv q$; or
    \item $u = pYq \equiv rXYZt \equiv r\overline{XYZ}t \equiv p'\overline{Y}q' = v$ and $p \equiv rX$, $q \equiv Zt$, $p' \equiv r\overline{X}$ and $q' \equiv \overline{Z}t$.
\end{enumerate}
\end{lemma}

\begin{proof}
Clearly, if (i) or (ii) hold then $u \equiv v$. It remains to show that if $u \equiv v$ then (i) or (ii) holds for $v$. 
Assume that $u = pYq$ and let $v \equiv u$. Then there exists a rewrite sequence $u = pYq \xrightarrow[]{*} v$. We will prove that no relation applied in this rewrite sequence can overlap $Y$ unless it is $XYZ$. 

It is clear that since $Y$ is not a piece, $Y$ is not a factor of any relation word except $XYZ$ and no relation word is a factor of $Y$. We start by showing that no relation word in the rewrite sequence $pYq \xrightarrow[]{*} v$ overlaps with a proper suffix of $Y$. Since $Y$ is a special middle word of $u$, $u = pYq \equiv rXYZt$ for some $r, t \in A^*$ such that $p \equiv rX$ and $q \equiv Zt$. Since $q \equiv Zt$, it follows by \cref{lem-2} that there are two cases to consider: either $q$ has a clean overlap prefix $aX_1Y_1$ with $|a| \geq |Z|$ and $Z$ is a prefix of all words equivalent to $q$ or $q$ has a clean overlap prefix $aX_1Y_1$ with $|a| < |Z|$ and $Z$ is a prefix of $a\overline{X_1Y_1}$ for some complement $\overline{X_1Y_1}$ of $X_1Y_1$. If the former holds, then no relation word in the rewrite sequence can overlap with a suffix of $Y$ because that would imply that either a suffix of $XYZ$ longer than $Z$ is a factor of a different relation word or that a relation word is a factor of $YZ$. Both of these lead to contradictions. Assume that $q = aX_1Y_1t'$ for some $t' \in A^*$. By \cref{lem-2} every word equivalent to $q$ has either $aX_1Y_1$ or $a\overline{X_1Y_1}$ as a prefix. If a relation word in the rewrite sequence overlaps a suffix of $Y$, then one of the following holds: 
\begin{itemize}
    \item it is a factor of $Ya$ which is a contradiction since $Ya$ is a factor of $YZ$;
    \item it is a factor of $Ya\overline{X_1Y_1}$ for some complement $\overline{X_1Y_1}$ of $X_1Y_1$ which is a contradiction because it implies that the relation word can be written as a product of 2 pieces; or
    \item the relation word contains $\overline{X_1Y_1}$ as a factor which is clearly a contradiction.
\end{itemize}
It follows that no relation word in the rewrite sequence overlaps with a suffix of $Y$ unless it is $XYZ$ in the obvious place.

Similarly, we can prove that no relation word in the rewrite sequence can overlap with a prefix of $Y$. Similar to the previous case, since $p \equiv rX$ there are two cases to consider: either $X$ is a suffix of all words equivalent to $p$ or it follows by \cref{lem-1}(i) and (ii) that $p \equiv r'X_2Y_2Z_2b$ for some $r', b \in A^*$ such that $|b|<|X|$ and there exists a word $r'\overline{X_2Y_yZ_2}b$ for $\overline{X_2Y_2Z_2}$ some complement of $X_2Y_2Z_2$, that has $X$ as a suffix. It follows by \cref{lem-1}(ii) and (iv) that any word overlapping with $\overline{X_2Y_2Z_2}$ that is not $\overline{X_2Y_2Z_2}$ in a rewrite sequence either overlaps with a prefix of $\overline{X_2}$ or with a suffix of $\overline{Z_2}$ and is $XYZ$. It follows that no relation word in the rewrite sequence overlaps with a prefix of $Y$ unless it is $XYZ$ in the obvious place.

In conclusion, no relation word in the rewrite sequence $u \xrightarrow[]{*} v$ overlaps with $Y$ unless it is a complement of $XYZ$ and hence (i) or (ii) holds for every $v$ such that $v \equiv u$.
\end{proof}

It follows by \cref{yproperty} that if $u \equiv v$ and $Y_i$ is a special middle word of $u$, then some complement $\overline{Y_i}$ of $Y_i$ is a special middle word of $v$. Note that it also follows by \cref{yproperty}(i) and (ii) that if $Y_i$, $Y_j$ are special middle words of $u$ and $Y_i$ occurs to the left of $Y_j$ in $u$, then the corresponding special middle word $\overline{Y_i}$ occurs to the left of $\overline{Y_j}$ in $v$. In particular, the following result holds as a corollary of \cref{yproperty}.

\begin{cor}\label{smw-correspondence}
If $u \equiv v$, then $(Y_0, Y_1, \ldots Y_n)$ is the sequence of special middle words of $u$ if and only if the sequence of special middle words of $v$ is $(\overline{Y_0}, \overline{Y_1}, \ldots \overline{Y_n})$ where $\overline{Y_i}$ is a complement of $Y_i$ for each $i$.
\end{cor}

The next lemma collects some basic facts about special middle words that follow more or less immediately from the definition of special middle words and the definition of clean overlap prefixes. We will use these properties in various results in the remainder of this section.

\begin{lemma}\label{lem-3}
Let $w \in A^*$.
    \begin{enumerate}[\rm (i)]
        \item 
           If $Y$ is a special middle word of $w$ and $w = pYq$ for some $p, q \in A^*$, then $Z$ is a possible prefix of $q$ and \textbf{WpPrefix}($q, q, Z$)=Yes.
        \item 
            If $w = XYw'$ for some $w' \in A ^*$ and \textbf{WpPrefix}($w', w', Z$)=Yes, then $XY$ is a clean overlap prefix of $w$.
        \item If $Y$ is the middle word in line 18 of \cref{algorithm-normalform} and the condition of line 19 is satisfied, then $Y$ is not a special middle word of $w = aXYw'$.
        \item 
        If $w = a_0X_0Y_0w'$ for some $w' \in A ^*$ and $Y_0$ is the left most special middle word in $w$, then any word equivalent to $w$ has prefix $a_0$.
    \end{enumerate}
\end{lemma}

We use some of the properties in \cref{lem-3} to prove the next lemma which refines the form of a word with respect to a C(4) monoid presentation. 

\begin{lemma}\label{form-of-word}
Let $u \in A^*$. Assume that $(Y_0, Y_1, \dots, Y_n)$, for some $n\geq 0$, is the sequence of special middle words of $u$. Then 
\[ u = a_0X_0Y_0a_1X_1''Y_1a_2X_2''Y_2a_3\dots a_nX_n''Y_na_{n+1}\]
where $a_i \in A^ *$ and either $X_i'' = X_i$ or $X_i''$ is a proper suffix of $X_i$ and $a_i = Z_{i-1}$ for all $i$.
\end{lemma}

\begin{proof}
Since $Y_0$ is the left most special middle word of $u$, $u = pY_0q$ and $p \equiv a_0X_0$ for some $a_0 \in A^*$. If a relation word $XYZ$ is a factor of $p$, then $Y$ is a factor of $p$ and hence it is a special middle word of $u$ occurring on the left of $Y_0$. This is a contradiction since $Y_0$ is the left most special middle word of $u$ and hence $p$ contains no relation words as factors. It follows that $p = a_0X_0$. 

Let $Y_{k-1}, Y_{k}$ be special middle words of $u$. Then $u = rY_{k-1}b_{k}Y_{k}t$ for some $r, b_{k}, t \in A ^ *$, by \cref{no-overlap}. We will prove that $b_{k} = a_{k}X_k''$ where either $X_k'' = X_k$ or $X_k''$ is a proper suffix of $X_k$ and $a_k = Z_{k-1}$. Since $Y_{k-1}$ is a special middle word of $u$, $b_{k}Y_{k}t \equiv Z_{k-1}s$ for some $s \in A^*$. It follows that either $Z_{k-1}$ is a prefix of all words equivalent to $b_kY_kt$ or $b_kY_kt$ has a clean overlap prefix $cXY$ with $|c| < |Z_{k-1}|$ and $Z_{k-1}$ is a prefix of $c\overline{XY}$ for some complement $\overline{XY}$ of $XY$.

In the latter case, since  $b_kY_kt = cXYq \equiv c\overline{XY}q'$ for some $q, q' \in A^*$, it follows that $cXYq \equiv cXYZq''$ for some $q'' \in A^*$ and hence $Y$ is a special middle word of $cXYq$. Since $cXYq$ is a suffix of $u$ it follows by the definition of special middle words that $Y$ is a special middle word of $u$.

We will prove that $Y = Y_k$ and hence $b_k = cX_k = a_kX_k''$ where $X_k'' = X_k$, as required.
Clearly, if $|cXY| = |b_kY_k|$ and $Y \neq Y_k$ then either $Y$ is a suffix of $Y_k$ or $Y_k$ is a suffix of $Y$. This is a contradiction since middle words are not pieces and hence $Y = Y_k$. 
Assume that $|cXY| < |b_kY_k|$. Then $Y$ is a special middle word that occurs after $Y_{k-1}$ and before $Y_k$ as a factor of $u$ and this is a contradiction.
Assume that $|cXY| > |b_kY_k|$. There are two cases to consider: either $Y_k$ is a factor of $XY$, which is clearly a contradiction, or $Y_k$ is a factor of $cXY$ that begins before the end of $c$. In this case, $Y_k$ must end before the start of $Y$ in $cXY$ because otherwise it would contain a factor of $XY$ longer than $X$. Since $Y_k$ is a special middle word, it follows that $b_kY_kt \equiv b_kY_kZ_kt'$ for some $t' \in A^*$. If $Z_k$ is a prefix of $t$ in this case, then either $Y$ is a factor of $X_kY_kZ_k$ or $XY$ contains a suffix of $X_kY_kZ_k$ longer than $Z_k$ and both of these lead to a contradiction.
It follows that $Z_k$ is not a prefix of $t$, and hence $t$ has a clean overlap prefix $dX_{*}Y_{*}$ with $|d| < |Z_k|$. 
If $d$ ends after the end of $Y$ in $cXYq$ then $Y$ is a factor of $X_kY_kZ_k$, a contradiction. It follows that $d$ ends before the end of $Y$. But then $XY$ overlaps with $X_{*}Y_{*}$ which is a contradiction because $cXY$ is a clean overlap prefix of $cXYq$. 
It follows that $Y = Y_k$.

In the former case, $Z_{k-1}$ is a prefix of all words equivalent to $b_kY_kt$. If $b_kY_kt$ has a clean overlap prefix $cXY$ with $|c| < |Z_{k-1}|$ and $b_kY_kt \equiv c\overline{XY}q'$ for some $q' \in A^*$ and a proper complement $\overline{XY}$ of $XY$, then as in the previous paragraph $Y = Y_k$ and hence $b_k = a_kX_k$, as required. Assume that $b_kY_kt$ does not have a clean overlap prefix $cXY$ with $|c| < |Z_{k-1}|$ such that $b_kY_kt \equiv c\overline{XY}q'$ for some $q' \in A^*$ and a proper complement $\overline{XY}$ of $XY$. Then, since $Y_k$ is a special middle word of $u$, if $X_k$ is a suffix of $b_k$ then it follows by \cref{lem-3}(ii) that $X_kY_k$ is a clean overlap prefix of $X_kY_kt$. If $b_kY_kt = b'X_kY_kt$ for some $b' \in A^*$ and there is not a clean overlap prefix contained in $b'$, then $b'X_kY_k$ is a clean overlap prefix of $b_kY_kt$ and $b_k = a_kX_k$, as required. If this is not the case, then either $X_k$ is a suffix of $b_k$ but $b_kY_kt$ has a clean overlap prefix $cXY$ with $|cXY| \leq |b_k|$ such that no equivalent of $b_kY_kt$ has prefix $c\overline{XY}$ for $\overline{XY}$ a proper complement of $XY$
or $X_k$ is not a suffix of $b_k$. In the first case, $b_k = a_kX_k$ and hence $X_k'' = X_k$ as required. In the second case, if $X_k$ is not a suffix of $b_k$, then since no word equivalent to $u$ contains a relation word between $Y_{k-1}$ and $Y_k$, it follows that $X_k = X_k'X_k''$ and the proper prefix $X_k'$ of $X_k$ is a suffix of some complement $\overline{Z_{k-1}}$ of $Z_{k-1}$. It follows that $b_k = Z_{k-1}X_k''$, as required.
\end{proof}

At this point, we have proved results that explain the connection between the sequence of special middle words of a word $w$ and the form of $w$ or a word equivalent to $w$ with respect to this sequence.
We would like to be able to compare words based on their sequences of special middle words.  
We utilize the algorithm \textbf{WpPrefix} from \cite{kambites2011} to do this. The following results highlight the connection between special middle words in equivalent words $u$ and $v$ and recursive calls from within \textbf{WpPrefix}$(u, v, \varepsilon)$.

\begin{lemma}\label{suffix}
If \textbf{WpPrefix}($u_j, v_j, p_j$) is a recursive call from within \textbf{WpPrefix}($u, v, \varepsilon$) for $u, v \in A^*$ and $u \equiv v$, then $u_j$ is a suffix of a word equivalent to $u$ and $v_j$ is a suffix of a word equivalent to $v$. In addition, if $u_j = XYu'$ and $XY$ is a clean overlap prefix of $u_j$, and $v_j = \overline{XY}v'$ and $\overline{XY}$ is a clean overlap prefix of $v_j$,
then $Yu'$ is a suffix of $u$ and $\overline{Y}v'$ is a suffix of $v$.
\end{lemma}

\begin{proof}
All line numbers in this proof refer to \textbf{WpPrefix} in \cite{kambites2011}. Clearly, if \textbf{WpPrefix}($u_j, v_j, p_j$) is the initial call to \textbf{WpPrefix}($u, v, \varepsilon$), then $u_j = u, v_j = v$ and hence $u_j$ and $v_j$ are suffixes of $u$ and $v$, respectively. In addition if $u_j = u = XYu'$ and $XY$ is a clean overlap prefix of $u_j$, and $v_j = v = \overline{XY}v'$ and $\overline{XY}$ is a clean overlap prefix of $v_j$ then clearly $Yu'$ is a suffix of $u$ and $\overline{Y}v'$ is a suffix of $v$. Assume that the result holds for the $j$-th recursive call \textbf{WpPrefix}($u_j, v_j, p_j$) from within \textbf{WpPrefix}($u, v, \varepsilon$). We will show that the result holds for the recursive call \textbf{WpPrefix}($u_{j+1}, v_{j+1}, p_{j+1}$) that occurs immediately after \textbf{WpPrefix}($u_j, v_j, p_j$). Since the result holds for $u_j$, there exists a word $w \equiv u$ with suffix $u_j$. If \textbf{WpPrefix}($u_{j+1}, v_{j+1}, p_{j+1}$) occurs in one of lines 15, 25, 28, 29, 31 or 42, then $u_{j+1}$ is a suffix of $u_j$ and the result holds for $u_{j+1}$. If \textbf{WpPrefix}($u_{j+1}, v_{j+1}, p_{j+1}$) occurs in line 24 or 33 then $u_j = XYZu''$ for some $u''\in A^*$ and $u_{j+1} = \hat{Z}u''$ for $\hat{Z}$ a complement of $Z$. Since $u_j = XYZu'' \equiv \hat{X}\hat{Y}\hat{Z}u''$ there exists a word equivalent to $w$ with suffix $\hat{X}\hat{Y}\hat{Z}u''$ and hence $u_{j+1} = \hat{Z}u''$ is a suffix of a word equivalent to $u$. The proof that $v_{j+1}$ is a suffix of a word equivalent to $v$ is analogous.

It remains to show that if $u_j$ has clean overlap prefix $XY$ and $u_j = XYu'$ and $v_j$ has clean overlap prefix $\overline{XY}$ and $v_j = \overline{XY}v'$,
then $Yu'$ is a suffix of $u$ and $\overline{Y}v'$ is a suffix of $v$. We will prove this for $u_j$, the proof for $v_j$ is analogous. As stated above, the result holds if \textbf{WpPrefix}($u_j, v_j, p_j$) is the initial call to \textbf{WpPrefix}($u, v, \varepsilon$) and $u$ has clean overlap prefix $XY$. If this is not the case, then the algorithm will make a number of calls in line 15 until a suffix of $u$ that has the form $XYu'$ with $XY$ a clean overlap prefix is found and hence the result holds for $Yu'$ in this case as well. We will assume that the result holds for \textbf{WpPrefix}($u_j, v_j, p_j$) such that $u_j$ has clean overlap prefix $X_jY_j$ and $u_j = X_jY_ju_j'$ and we will show that is holds for the recursive call \textbf{WpPrefix}($u_k, v_k, p_k$) that is the first recursive call after \textbf{WpPrefix}($u_j, v_j, p_j$) such that $u_k$ has clean overlap prefix $X_kY_k$ and $u_k = X_kY_ku'$. Since $u_j$ has clean overlap prefix $X_jY_j$ and $u_j \equiv v_j$, the recursive call to  \textbf{WpPrefix}($u_{j+1}, v_{j+1}, p_{j+1}$) that occurs immediately after \textbf{WpPrefix}($u_j, v_j, p_j$) occurs in one of lines 24, 25, 28, 29, 31, 33 or 42. If it occurs in line 25, 28, 29, 31 or 42, then $u_{j+1}$ is a suffix of $u_j$ and since $Y_ju_j'$ is a suffix of $u$ it follows that $u_{j+1}$ is a suffix of $u$. Since \textbf{WpPrefix}($u_k, v_k, p_k$) is the first recursive call after \textbf{WpPrefix}($u_j, v_j, p_j$) such that $u_k$ has clean overlap prefix $X_kY_k$ and $u_k = X_kY_ku'$, it follows that any recursive call after \textbf{WpPrefix}($u_{j+1}, v_{j+1}, p_{j+1}$) and before \textbf{WpPrefix}($u_k, v_k, p_k$) can only occur in line 15. It follows that $u_k$ is a suffix of $u_{j+1}$ and hence $Y_ku_k'$ is a suffix of $u$. If the recursive call to  \textbf{WpPrefix}($u_{j+1}, v_{j+1}, p_{j+1}$) occurs in line 24 or 33, then $u_j = X_jY_jZ_ju''$, $Y_jZ_ju''$ is a suffix of $u$ and $u_{j+1} = \hat{Z}_ju''$ for some complement $\hat{Z}_j$ of $Z_j$. Similar to the previous case, any recursive call after \textbf{WpPrefix}($u_{j+1}, v_{j+1}, p_{j+1}$) and before \textbf{WpPrefix}($u_k, v_k, p_k$) can only occur in line 15. Hence $u_{j+1} = aX_kY_ku_k'$ for $aX_kY_k$ a clean overlap prefix of $u_{j+1}$. Since $\hat{Z}_j$ is a prefix of $u_{j+1}$, $|aX_k| \geq |\hat{Z}_j|$, otherwise a prefix of $X_kY_kZ_k$ longer than $X_k$ would be a factor of $\hat{Z}_j$. It follows that $Y_ku_k'$ is a suffix of $u''$ and hence a suffix of $u$.
\end{proof}

The following result holds as a direct consequence of \cref{suffix} and the definition of special middle words and can be viewed as a tool to ``identify'' special middle words inside a word $w$ in some cases.

\begin{lemma}\label{wpprefix-smw}
Let $u, v \in A^*$ be such that $u \equiv v$. Assume that there exists a recursive call \textbf{WpPrefix}($XYu',$ $\overline{XY}v', p$) from within \textbf{WpPrefix}($u, v, \varepsilon$) for some $p, u', v' \in A^*$ such that $\overline{XY}$ is a proper complement of $XY$. Then $Y$ is a special middle word of $u$ and $\overline{Y}$ is the corresponding special middle word of $v$.
\end{lemma}

\begin{proof}
Since $u \equiv v$ it follows that \textbf{WpPrefix}($u, v, \varepsilon$) returns Yes and  hence since \textbf{WpPrefix}($XYu', \overline{XY}v', p$) is a recursive call from within \textbf{WpPrefix}($u, v, \varepsilon$), \textbf{WpPrefix}($XYu', \overline{XY}v', p$) returns Yes as well. It follows that $XYu' \equiv \overline{XY}v'$. Since $\overline{XY}$ is a proper complement of $XY$ it follows by Lemma 3 in \cite{kambites2011} that $u' \equiv Zu''$ and $v' \equiv \overline{Z}v''$ for some $u'', v'' \in A^*$. In addition, it follows by \cref{lem-3} (ii) that $XY$ is a clean overlap prefix of $XYu'$ and $\overline{XY}$ is a clean overlap prefix of $\overline{XY}v'$. By \cref{suffix}, $XYu'$ is a suffix of a word equivalent to $u$ and hence there exists a word $w$ such that $w \equiv u$ and $XYZu''$ is a suffix of $w$. It follows that $Y$ is a special middle word of $w$ and by \cref{smw-correspondence} some complements of $Y$ are the corresponding middle words in $u$ and $v$. By \cref{suffix}, $Yu'$ is a suffix of $u$ and $\overline{Y}v'$ is a suffix of $v$ and hence $Y$ and $\overline{Y}$ are the corresponding special middle words in $u$ and $v$, respectively.
\end{proof}

The next two results will be useful when comparing equivalent words based on the sequences of their special middle words.

\begin{lemma}\label{smw-wpprefix}
Let $u, v \in A^*$ be such that $u \equiv v$. Assume that $Y$ is a special middle word of $u$ and let $\overline{Y}$ be the corresponding special middle word in $v$. Then there exists a recursive call \textbf{WpPrefix}($XYu', \overline{XY}v', p$) from within \textbf{WpPrefix}($u, v, \varepsilon$) for some $p, u', v' \in A^*$.
\end{lemma}

\begin{proof}
All line numbers in this proof refer to \textbf{WpPrefix} in \cite{kambites2011}. 
Assume that $(Y_0, Y_1, \ldots, Y_n)$ is the sequence of special middle words of $u$. By \cref{form-of-word}, $u = a_0X_0Y_0a_1X_1''Y_1a_2X_2''Y_2a_3\dots a_nX_n''Y_na_{n+1}$ and $v = a_0\overline{X_0Y_0}b_1\overline{X_1''Y_1}b_2\overline{X_2''Y_2}b_3\dots b_n\overline{X_n''Y_n}b_{n+1}$ where $a_i, b_i \in A^*$ for all $i$ and $X_i''$ and $\overline{X_i''}$ are suffixes of $X_i$ and $\overline{X_i}$, respectively, for all $i$. We start by showing that the result holds for $Y_0$ and $\overline{Y_0}$. Since $u = a_0X_0Y_0u'$ for some $u' \in A^*$ and $Y_0$ is a special middle word of $u$, it follows by definition that $u' \equiv Z_0 u''$ for some $u'' \in A^*$. It follows by \cref{lem-3}(ii) applied to $X_0Y_0u'$ that $X_0Y_0$ is a clean overlap prefix of $X_0Y_0u'$. Since there are no relation words contained in $a_0$ and $a_0$ is also a prefix of $v$, \textbf{WpPrefix}($u, v, \varepsilon$) starts by making recursive calls in lines 15 and 28, until the prefix $a_0$ has been deleted and the recursive call to \textbf{WpPrefix}($X_0Y_0u', \overline{X_0Y_0}v', p_0$) occurs for some piece $p_0$.

We now assume that there exists a recursive call to \textbf{WpPrefix}($X_kY_ku_k', \overline{X_kY_k}v_k', p_k$) from within \textbf{WpPrefix}($u, v, \varepsilon$) for some $p_k, u_k', v_k' \in A^*$ for special middle words $Y_k$ and $ \overline{Y_k}$, respectively, and we will prove that there exists a recursive call to \textbf{WpPrefix}($X_{k+1}Y_{k+1}u_{k+1}', \overline{X_{k+1}Y_{k+1}}v_{k+1}', p_{k+1}$) from within \textbf{WpPrefix}($u, v, \varepsilon$) for some $p_{k+1}, u_{k+1}', v_{k+1}' \in A^*$ for special middle words $Y_{k+1}$ and $ \overline{Y_{k+1}}$, respectively. 

We will show this by examining the recursive calls that occur after \textbf{WpPrefix}($X_kY_ku_k', \overline{X_kY_k}v_k', p_k$). If $Z_k$ is not a prefix of $u_k'$ then the recursive call immediately after \textbf{WpPrefix}($X_kY_ku_k', \overline{X_kY_k}v_k', p_k$) occurs in one of lines 28, 29, 31 and 42. Since $Z_k$ is not a prefix of $u_k'$ and $Y_ku_k'$ is a suffix of $u$ by \cref{suffix}, it follows by Lemma ~\ref{form-of-word} that $a_{k+1}X_{k+1}'' = a_{k+1}X_{k+1}$. In addition, since $Y_k$ is a special middle word, $u_k'$ is equivalent to a word with prefix $Z_k$ and hence it has a clean overlap prefix $cXY$ with $|c| < |Z_k|$. Since $Z_k$ is not a prefix of $cX$ then it must be a prefix of $c\overline{X}$ for $\overline{X}$ a proper complement of $X$.

If the recursive call occurs in one of lines 28, 29 or 31 then the first argument is $u_k'$ which in this case has $a_{k+1}X_{k+1}Y_{k+1}$ as a prefix. If $a_{k+1}X_{k+1}Y_{k+1}$ is a clean overlap prefix of $u_k'$ the algorithm continues by making $|a_{k+1}|$ recursive calls in line 15. Then since $u \equiv v$ it makes the call \textbf{WpPrefix}($X_{k+1}Y_{k+1}u_{k+1}', V, p_{k+1}$) and $V = \overline{X_{k+1}Y_{k+1}}v_{k+1}'$ for some $p_{k+1}, u_{k+1}', v_{k+1}' \in A^*$. In addition, since this call was proceeded by $|a_{k+1}|$  calls in line 15 it follows that $v_k'$ has the prefix $a_{k+1}\overline{X_{k+1}Y_{k+1}}$ and since there is no clean overlap prefix in $a_{k+1}$, it follows by \cref{lem-3}(ii) that $\overline{Y_{k+1}}$ is the left most special middle word occurring to the right of $\overline{Y_k}$ in $v$ and hence it is the special middle word that corresponds to $Y_{k+1}$ in $u$. Assume that $a_{k+1}X_{k+1}Y_{k+1}$ is not a clean overlap prefix of $u_k'$. Since $Y_{k+1}$ is a special middle word of $u$, it follows by \cref{lem-3} (ii) that $X_{k+1}Y_{k+1}$ is a clean overlap prefix of a suffix of $u_k'$. It follows that the clean overlap prefix $cXY$ of $u_k'$ is such that $|cXY| \leq |a_{k+1}|$, otherwise either $cXY$ would overlap with $X_{k+1}Y_{k+1}$ or it would contain $X_{k+1}Y_{k+1}$ as a factor and both of these contradict the definition of a clean overlap prefix. It follows that $cXY$ is such that $|cXY| \leq |a_{k+1}|$ and since $Y$ is not a special middle word of $u$, $cXY$ is not followed by $Z$ and $u_k'$ is only equivalent to words that have $cXY$ as a prefix. It follows that in this case the algorithm makes recursive calls in lines 15 and 28 until $a_{k+1}$ gets deleted. Similar to the previous case, $a_{k+1}$ is a prefix of $v_k'$, the algorithm makes the recursive call \textbf{WpPrefix}($X_{k+1}Y_{k+1}u_{k+1}', \overline{X_{k+1}Y_{k+1}}v_{k+1}', p_{k+1}$) for some $p_{k+1}, u_{k+1}', v_{k+1}' \in A^*$ and $\overline{Y_{k+1}}$ is the special middle word of $v$ that corresponds to $Y_{k+1}$ in $u$. 

If the recursive call occurs in line 42 then by the assumptions of this case $a_{k+1}$ has prefix $z_1$ and any clean overlap prefix $XY$ of $u_k'$ begins after the end of $z_1$. It follows that the result holds in this case following the same argument as in the case of lines 28, 29 and 31 applied to the suffix of $u_k'$ that follows $z_1$.

It remains to show that the result holds when $Z_k$ is a prefix of $u_k'$. In this case the recursive call immediately after \textbf{WpPrefix}($X_kY_kZ_ku_k'', \overline{X_kY_k}v_k', p_k$) occurs in one of lines 24, 25 or 33. In case the call occurs in line 33 the result holds by symmetry with line 31. If the recursive call occurs in line 25, then $u_k''$ is not $\hat{Z}_{k+1}$-active for some complement $\hat{Z}_{k+1}$ of $Z_{k+1}$ and hence $a_{k+1}$ has the same form as in the case of lines 28, 29 and 31. Hence the same argument can show that the result holds in this case.

If $X_{k+1}''$ is a proper suffix of $X_{k+1}$ then $u_k''$ is $\hat{Z}_{k+1}$-active for some complement $\hat{Z}_{k+1}$ of $Z_{k+1}$, otherwise no word equivalent to $u$ would contain $X_{k+1}Y_{k+1}$ as a factor in this position. It follows that in this case the algorithm makes a recursive call in line 24 and $\hat{Z}_{k+1}u_k''$ has the clean overlap prefix $bX_{k+1}Y_{k+1}$ for some $b \in A^*$ with $|b| < |\hat{Z}_{k+1}|$. It follows that the algorithm makes $|b|$ recursive calls in line 15 and the result holds for $Y_{k+1}$. If $u_k''$ is $\hat{Z}_{k+1}$-active but the clean overlap prefix of $\hat{Z}_{k+1}u_k''$ is $cXY$ with $Y \neq Y_{k+1}$, then $a_{k+1}$ has the same form as in the cases of lines 25, 28, 29 and 31 and the result holds in this case.
\end{proof}

The following result holds as a corollary of \cref{smw-wpprefix}. 

\begin{cor}\label{cor-calls}
Let $u, v \in A^*$ be such that $u \equiv v$.  Assume that $Y_k$ and $Y_{k+1}$ are consecutive special middle words of $u$. Assume that $u_k, u_{k+1}, v_k, v_{k+1}$ are such that \textbf{WpPrefix}($u_k, v_k, p_k$), \textbf{WpPrefix}($u_{k+1}, v_{k+1}, p_{k+1}$) are the recursive calls corresponding to $Y_k, Y_{k+1}$ from within \textbf{WpPrefix}($u, v, \varepsilon$) from \cref{smw-wpprefix}. If \textbf{WpPrefix}($u_j, v_j, p_j$) is a recursive call from within \textbf{WpPrefix}($u, v, \varepsilon$) that occurs after \textbf{WpPrefix}($u_k, v_k, p_k$) and before \textbf{WpPrefix}($u_{k+1}, v_{k+1}, p_{k+1}$) and it is not the recursive call that occurs immediately after \textbf{WpPrefix} ($u_k, v_k, p_k$), then \textbf{WpPrefix}($u_j, v_j, p_j$) occurs either in line 15 or line 28 of \textbf{WpPrefix}.
\end{cor}

\begin{proof}
All line numbers in this proof refer to \textbf{WpPrefix} in \cite{kambites2011}. Assume that \textbf{WpPrefix}($u_j, v_j, p_j$) is the recursive call from within \textbf{WpPrefix}($u, v, \varepsilon$) as described in the statement of this lemma. If \textbf{WpPrefix}($u_j, v_j, p_j$) occurs in one of lines 31, 33 or 42 then $u_j = XYu'$ where $XY$ is a clean overlap prefix of $u_j$ and $v_j = \overline{XY}v'$ where $\overline{XY}$ is a clean overlap prefix of $v_j$ and $u', v' \in A^*$ and  $\overline{XY}$ is a proper complement of $XY$.
It follows by \cref{wpprefix-smw} that there exists a special middle word in a word equivalent to $u$ that occurs between $Y_k$ and $Y_{k+1}$, a contradiction. In addition, if \textbf{WpPrefix}($u_j, v_j, p_j$) occurs in lines 24, 25 or 29 then either $u_j = XYZu''$ for some relation word $XYZ$ and some $u'' \in A^*$ or $u_j = XYu'$ and $Z$ is a possible prefix of $u'$. By \cref{suffix} there exists a word equivalent to $u$ containing a relation word $XYZ$ as a factor and hence $Y$ is a special middle word. It follows that there exists a special middle word in a word equivalent to $u$ that occurs between $Y_k$ and $Y_{k+1}$, a contradiction. It follows that \textbf{WpPrefix}($u_j, v_j, p_j$) can only occur in line 15 or 28.
\end{proof}

We are now ready to prove the following lemma. \cref{theorem} will hold as a corollary of \cref{same-prefixes}. In addition, \cref{same-prefixes} will be used as a tool to compare prefixes of equivalent words when proving the correctness of the algorithm.

\begin{lemma}\label{same-prefixes}
Suppose that $u, v\in A ^ *$ are such that $u \equiv v$,
that $u = u_0Y_{0} \cdots u_mY_{m}u_{m + 1}$, and that 
$v = v_0 \overline{Y_{0}} \cdots v_n\overline{Y_{m}}v_{m - 1}$ where $Y_{i}$ are the special middle words in $u$ and $\overline{Y_{i}}$ are the special middle words in $v$. 
If $Y_{0}= \overline{Y_{0}},\ldots, Y_{k} = \overline{Y_{k}}$ for some $k$, then $u_0Y_{0} \cdots u_kY_{k}= v_0 \overline{Y_{0}} \cdots v_k\overline{Y_{k}}$.
\end{lemma}

\begin{proof}
All line numbers in this proof refer to \textbf{WpPrefix} in \cite{kambites2011}.
It follows by \cref{lem-3}(iv) and \cref{form-of-word},  that $u_0Y_{0} = a_0X_{0}Y_{0}$ and $a_0$ is a prefix of every word equivalent to $u$. Since $Y_{0} = \overline{Y_{0}}$, $u_0Y_{0} = v_0\overline{Y_{0}}$ by \cite[Lemma 2]{kambites2011}. 

Assume that the result holds for $Y_{0}= \overline{Y_{0}},\ldots, Y_{ k-1} = \overline{Y_{k-1}}$ for some $k \geq 1$.
Since $u \equiv v$ and $Y_{k}$ is a special middle word of $u$, 
it follows by \cref{smw-wpprefix} that 
there exist recursive calls to \textbf{WpPrefix}($u_{k-1}, v_{k-1}, p_{k-1}$) and \textbf{WpPrefix}($u_{k}, v_{k}, p_{k}$) from within \textbf{WpPrefix}($u, v, \varepsilon$) such that 
$u_{k-1} = X_{k-1}Y_{k-1}u_{k-1}'$, that $v_{k-1} = \overline{X_{k-1}Y_{k-1}}v_{k-1}'$, $u_{k} = X_{k}Y_{ k}u_{k}'$ and $v_{k} = \overline{X_{k}Y_{k}}v_{k}'$. Since $Y_{k-1} = \overline{Y_{k-1}}$ and $Y_{k} = \overline{Y_{k}}$ the next recursive call to \textbf{WpPrefix} within \textbf{WpPrefix}($u_{k-1}, v_{k-1}, p_{k-1}$) must occur in one of lines 24, 25, 28, or 29 (in the other cases the prefix of $u_{k - 1}$ is a proper complement of the prefix of $v_{k - 1}$). The only subsequent type of recursive call that can occur between \textbf{WpPrefix}($u_{k-1}, v_{k-1}, p_{k-1}$) and  \textbf{WpPrefix}($u_{k}, v_{k}, p_{k}$) is in lines 15 and 28 by Corollary~\ref{cor-calls}. 
Hence $u_k = v_k$ since in the recursive calls of lines 15 and 28 equal prefixes of the first two arguments are deleted. It follows that $u_0Y_{0} \cdots u_kY_{k}= v_0 \overline{Y_{0}} \cdots v_k\overline{Y_{k}}$.
\end{proof}

Having established \cref{theorem}, in the next 2 lemmas we consider how the special middle words $u, v\in A ^ *$ such that $u \equiv v$ interact with the lexicographic order.

\begin{lemma}\label{lem-refactor1}
Suppose that $u, v \in A ^ *$, that $u\equiv v$, and that  
$u = pY_{k - 1}Z_{k - 1}X_k''Y_kq$ for special middle words $Y_{k-1}$ and $Y_k$ in $u$, some $p, q \in A ^*$ and some proper suffix $X_k''$ of $X_k$. If  
$Y_{k - 1}$ and a proper complement $\overline{Y_k}$ of $Y_k$ are the corresponding middle words in $v$, then there is a suffix $X_k'$ of a complement $\overline{Z_{k-1}}$ of $Z_{k-1}$ such that $X_k = X_k'X_k''$ and $X_k'$ is also a prefix of $\overline{X_k}$.
\end{lemma}
\begin{proof}
Since $Y_{k-1}, Y_k$ are special middle words of $u$,
there is a rewrite sequence $u\stackrel{*}{\rightarrow}p'X_{k-1}Y_{k-1}Z_{k-1}X_{k}''Y_{k}q \rightarrow p'\overline{X_{k-1}Y_{k-1}Z_{k-1}}X_{k}''Y_{k}q$ where $\overline{Z_{k-1}} = zX_{k}'$ for some $z\in A ^ *$.
Hence 
\[
p'\overline{X_{k-1}Y_{k-1}Z_{k-1}}X_{k}''Y_{k}Z_{k}q_0= p'\overline{X_{k-1}Y_{k-1}}zX_{k}Y_{k}Z_{k}q_0\rightarrow  p'\overline{X_{k-1}Y_{k-1}}z\overline{X_{k}Y_{k}Z_{k}}q_0
\]
for some $q_0 \in A^*$.
Since $Z_{ k-1}$ is a factor of $v$, but not of $p'\overline{X_{k-1}Y_{k-1}}z\overline{X_{k}Y_{k}Z_{k}}q_0$ by the assumption of this case, 
\[
p'\overline{X_{ k-1}Y_{ k-1}}z\overline{X_{k}Y_{k}Z_{k}}q_0 \stackrel{*}{\rightarrow} 
p' \overline{X_{ k-1}Y_{ k-1}Z_{ k-1}}\, \overline{X_{k}''Y_{k}}q_1\rightarrow
p' X_sY_sZ_{ k-1} \overline{X_{k}''Y_{k}}q_1 \stackrel{*}{\rightarrow} v
\]
where $X_sY_sZ_{k-1}$ is a complement of $X_{k-1}Y_{k-1}Z_{k-1}$ with suffix $Z_{k-1}$ and $v = p_0Z_{ k-1}\overline{X_k''Y_k}q'$.
In particular, since  
\[p'\overline{X_{ k-1}Y_{ k-1}}z\overline{X_{k}Y_{k}Z_{k}}q_0 \stackrel{*}{\rightarrow} 
p' \overline{X_{ k-1}Y_{ k-1}Z_{ k-1}}\, \overline{X_{k}''Y_{k}}q_1,\]
$\overline{X_{k}} = X_k'\overline{X_k}''$ and so $X_k' = \overline{X_k}'$ is the required common prefix.
\end{proof}

In the next lemma, we make use of the following observation: if $u, v\in A ^ *$ are such that $u < v$ and $u$ is not a prefix of $v$, then $uw < vw'$ for all $w, w' \in A ^ *$.

\begin{lemma}\label{lem-xyz}
Suppose that $u, v\in A ^ *$ are such that $u\equiv v$. If $u < v$ and $Y_{ k}$ is the left most special middle word in $u$ such that the corresponding middle word $\overline{Y_{k}}$ in $v$ is a proper complement of $Y_{k}$, then $X_{k}Y_{k}Z_{k} < \overline{X_{k}Y_{k}Z_{k}}$.
\end{lemma}
\begin{proof}
Let $u := w_0, w_1, \ldots, w_n := v$ be any rewrite sequence. By Lemmas~\ref{form-of-word} and~\ref{same-prefixes}, there exist $a_k, b_k, p, q, q' \in A^*$ such that $u = pX_{k-1}''Y_{k-1}a_kX_{k}''Y_{k}q$ and $v = pX_{k-1}''Y_{k-1}b_k\overline{X_{k}}''\overline{Y_{k}}q'$ 
where $X_{k}''$ and $\overline{X_{k}}''$ are (not necessarily proper) suffixes of $X_{k}$ and $\overline{X_{k}}$, respectively. Since $Y_k$ is a special middle word of $u$, it follows by \cref{yproperty} (ii) that there exists $j\in \{0, \ldots, n\}$ such that $X_{k}Y_{k}Z_{k}$ is a factor of $w_j$ and $\overline{X_{k}Y_{k}Z_{k}}$ is a factor of $w_{j + 1}$.
It remains to show that $X_{k}Y_{k}Z_{k} < \overline{X_{k}Y_{k}Z_{k}}$. 

It follows by \cref{suffix} and \cref{smw-wpprefix} that there exists a recursive call to \textbf{WpPrefix}($X_{k-1}Y_{k-1}a_k$ $X_{k}''Y_{k}q$, $X_{k-1}Y_{k-1}b_k\overline{X_{k}}''\overline{Y_{k}}q', t$) for some piece $t$, from within \textbf{WpPrefix}($u, v, \varepsilon$). In addition, $X_{k-1}Y_{k-1}$ is a clean overlap prefix of the first two arguments of this call by \cref{lem-3} (i) and (ii). Since the first two arguments of this call begin with the clean overlap prefix $X_{k-1}Y_{k-1}$, the recursive call occurring immediately after \textbf{WpPrefix}($X_{k-1}Y_{k-1}a_kX_{k}''Y_{k}q$, $X_{k-1}Y_{k-1}b_k\overline{X_{k}}''\overline{Y_{k}}q', t$) must occur in one of lines 24, 25, 28, 29 and the only possible recursive calls that can occur before \textbf{WpPrefix}($X_kY_kq, \overline{X_kY_k}q', t'$) for some piece $t'$, are those in lines 15 or 28 by Corollary~\ref{cor-calls}. Since in the recursive calls of lines 15 and 28 equal prefixes of the first two arguments are deleted, it follows that $a_k = b_k$ and $v = pX_{k-1}Y_{k-1}a_k\overline{X_{k}''Y_{k}}q'$. Since $u < v$ and $X_kY_k\neq \overline{X_kY_k}$, it follows that $X_{k}''Y_{k} < \overline{X_{k}''Y_{k}}$.

We will show that this implies that $X_{k}Y_{k} < \overline{X_{k}Y_{k}}$ and hence $X_{k}Y_{k}Z_k < \overline{X_{k}Y_{k}Z_k }$.
There are two cases to consider: when $X_{k}'' = X_{k}$ and when $X_{k}''$ is a proper suffix of $X_{k}$. If $X_k'' = X_k$ then it follows that $\overline{X_{k}''} = \overline{X_{k}}$, since otherwise there would not be a recursive call to \textbf{WpPrefix}($X_kY_kq, \overline{X_kY_k}q', t'$) from within \textbf{WpPrefix}($u, v, \varepsilon$), which contradicts \cref{smw-wpprefix}. It follows that $X_kY_k < \overline{X_kY_k}$.
But $X_{k}Y_{k}$ is not a prefix of $\overline{X_{k}Y_{k}}$ because $Y_{k}$ is not a piece, and so $X_{k}Y_{k}Z_{k} < \overline{X_{k}Y_{k}Z_{k}}$.

Suppose that $X_{k}''$ is a proper suffix of $X_{k}$. 
It follows by \cref{form-of-word} that $a_k = b_k = Z_{k-1}$,
and so $u = pX_{k-1}''Y_{k-1}Z_{k-1}X_{k}''Y_{k}q$ and $v = pX_{k-1}''Y_{k-1}Z_{k-1}\overline{X_{k}''Y_{k}}q'$. In this case, as in the previous case, it follows that $X_{k}''Y_{k} < \overline{X_{k}''Y_{k}}$. By \cref{lem-refactor1}, if $X_{k}'$ and $\overline{X_{k}'}$ are such that $X_{k} = X_{k}'X_{k}''$ and $\overline{X_{k}} = \overline{X_{k}'}\overline{X_{k}''}$, then $X_{k}' = \overline{X_{k}'}$.  Hence 
 $X_kY_kZ_k = X_k'X_{k}''Y_{k}Z_k < X_k'\overline{X_{k}''Y_{k}Z_k} = \overline{X_{k}Y_{k}Z_k}$.
 \end{proof}

We are now ready to prove the correctness of the \textbf{NormalForm} algorithm. We have already shown that the output of \textbf{NormalForm}($w_0$) is a word equivalent to $w_0$ in \cref{equivalence}. By \cref{theorem} it suffices to show that if $v_n$ is the output of \textbf{NormalForm}($w_0$), then the sequences of special middle words of $v_n$ and $\min w_0$ are identical. We accomplish this in the next two lemmas.

\begin{lemma}\label{part1}
Let $w_0$ be the input to \textbf{NormalForm}($w_0$) and let $v_i$ with $i\geq 0$ be the value of $v$ after the $i$-th iteration of the while loop starting in line 2. Then for every special middle word $Y_k$ in $w_0$ there exists an $i$ such that $v_i$ contains a complement of $Y_k$.
\end{lemma}
\begin{proof}
All line numbers in this proof refer to \textbf{NormalForm} in \cref{algorithm-normalform}.
We will show that there exists an $i$ such that during the $i$-th iteration of the while loop in \textbf{NormalForm} one of the following holds:
\begin{enumerate} [\rm (i)]
    \item $\overline{Y_k} = Y_s$ in line 3 and all conditions of line 3 are satisfied; or
     \item $\overline{Y_k} = Y$  in line 18.
 \end{enumerate}
 If (i) holds, then $v_{i + 1}$ is assigned in line 11 or 15 and $v_{i + 1}$ contains a complement of $Y_k$.
 If (ii) holds, then $v_{i + 1}$ is assigned in line 21 or 25 and $v_{i + 1}$ contains a complement of $Y_k$.

 We proceed by induction on the number of special middle words in $w_0$.
By \cref{theorem}, if there are no special middle words in $w_0$, then the only
word equivalent to $w_0$ is itself. In particular, if there are no special
middle words in $w_0$, then neither $w_0$ nor any word equivalent to $w_0$
contains a relation word as a factor, by \cref{yproperty}. It follows that
either $w_0$ does not have a clean overlap prefix or if $w_0$ has a clean overlap prefix $cXY$ for some $c \in A^*$ such that $w_0 = cXYw'$ for $w' \in A^*$, then \textbf{WpPrefix}$(w', w', Z)$=No because otherwise there exists a word equivalent to $w_0$ that contains $XYZ$ as a factor in the obvious place. This is a contradiction since there are no special middle words in $w_0$. It follows that in every iteration of the while loop in line 2 the conditions of line 3 are not satisfied. If the conditions of line 18 are satisfied, then the condition of line 19 is satisfied as well and if the conditions of line 18 are not satisfied then we have the case of lines 29-30. In particular, in every iteration of the while loop in line 2 $v_iw_i = w_0$ and hence the algorithm returns $w_0$, as required.
 
Suppose that $w_0$ contains at least one special middle word. Let $w_0 = a_0X_0Y_0q$ where $Y_0$ is the left most special middle word of $w_0$ and $a_0, q \in A^*$. Since $Y_0$ is a special middle word, \textbf{WpPrefix}($q, q, Z_0$)=Yes by definition and hence $X_0Y_0$ is a clean overlap prefix of $X_0Y_0q$ by \cref{lem-3} (ii). 
Since $Y_0$ is the left most special middle word of $w_0$, the algorithm begins by finding the clean overlap prefix $aXY$ of $w_0$ such that $w_0 = aXYw'$. If $aXY = a_0X_0Y_0$, then (ii) holds for $Y_0$. If $aXY \neq a_0X_0Y_0$, then $|aXY| < |a_0X_0Y_0|$ because otherwise $X_0Y_0$ would be a factor of $aXY$, which contradicts the fact that $X_0Y_0$ is a clean overlap prefix of $X_0Y_0q$. In addition, $X_0Y_0$ does not overlap with a suffix of $aXY$ because $aXY$ is clean. In this case, $aXY$ satisfies the conditions of line 18 and 19 and $w_1 = w'$. Since $W$ is assigned to be equal to $\varepsilon$ in line 20 and since no clean overlap prefix of $w'$ overlaps with $X_0Y_0$, the same steps as in the first iteration of the while loop are repeated until $w_i = bX_0Y_0w_i'$ for some $b, w_i'$ such that (ii) is satisfied for $Y_0$. 

We assume that (i) or (ii) holds for the special middle words $Y_0, \ldots, Y_{k - 1}$ of $w_0$ for some $k\geq 1$. We will show that (i) or (ii) holds for $Y_k$ also. 

Let $v_jw_j$ be the word equivalent to $w_0$ after the $j$-th iteration of the while loop in line 2. 

Suppose that either (i) or (ii) was satisfied for $Y_{k-1}$ during the $j$-th iteration of the while loop.
In this case,
$v_j$ is defined in one of lines 11, 15, or 25, and in any of these cases:
$$v_j = p\overline{X_{k-1}''Y_{k-1}} \text{ and } w_j =  b_k\overline{X_k''Y_k}q$$ 
by \cref{form-of-word} for some $p, b_k, q\in A ^ *$,  $\overline{X_{k-1}''}$ and $\overline{X_{k}''}$ are suffixes of $\overline{X_{k-1}}$ and $\overline{X_{k}}$, respectively, and $\overline{Z_{k-1}}$ is a prefix of $b_k$ since $w_j$ was assigned in one of lines 12, 16 or 26.
Since $\overline{Y_k}$ is a special middle word of $v_jw_j$ it follows by definition that \textbf{WpPrefix}$(q, q, \overline{Z_k})=$Yes. In addition, by \cref{form-of-word}, either $\overline{X_k''} = \overline{X_k}$; or $\overline{X_k''}$ is a proper suffix of $\overline{X_k}$ and $b_k = \overline{Z_{k-1}}$. 
If $\overline{X_k''}$ is a proper suffix of $\overline{X_k}$ and $b_k = \overline{Z_{k-1}}$, then, since $\overline{Y_k}$ is a special middle word of $v_jw_j$, $\overline{X_k''Y_k}q$ is either $\overline{Z_{k-1}}$-active or $Z$-active for some proper complement $Z$ of $\overline{Z_{k-1}}$ since there exists a word equivalent to $v_jw_j$ containing $\overline{X_kY_kZ_k}$ as a factor in the obvious place. If the latter holds, then the conditions of line 3 are satisfied. In particular, we have that $W = \overline{X_{k-1}Y_{k-1}Z_{k-1}}$ and that $\overline{Z_{k-1}}$ is a prefix of $w_j$ since $v_j$ was assigned in one of lines 11, 15 or 25. In addition \textbf{WpPrefix}$(q, q, \overline{Z_k})$ = Yes since $\overline{Y_k}$ is a special middle word. Hence (i) holds for $\overline{Y_k}$. If the former holds, then the conditions of line 3 are not satisfied and since $\overline{X_k''Y_k}q$ is $\overline{Z_{k-1}}$-active it follows by definition that $z\overline{X_kY_k}$ is a clean overlap prefix of $w_j = b_k\overline{X_k''Y_k}q$ for some $z \in A^*$ with $|z| < |b_k|$. Hence (ii) holds for $\overline{Y_k}$.

In the case that $X_k'' = X_k$ then
$w_j = \overline{Z_{k-1}}c_k\overline{X_kY_k}w'$ for some $c_k \in A^*$. If the conditions of line 3 are satisfied for $w_j$ then $c_k\overline{X_kY_k}w'$
is $Z$-active for some proper complement $Z$ of $\overline{Z_{k-1}}$,
$c_k\overline{X_kY_k}w' = X_s''Y_sw''$ for some $w'' \in A^*$ and \textbf{WpPrefix}$(w'', w'', Z_s)$= Yes and hence there exist $p, q \in A^*$ such that $v_jw_j = pY_sq$ and a word $p'X_sY_sZ_sq'$ with $p'X_s \equiv p, Z_sq' \equiv q$. But this implies that $Y_s$ is a special middle word of $w_0$ that occurs between $Y_{k-1}$ and $Y_k$, a contradiction. It follows that in this case the conditions of line 3 cannot be satisfied. Since $w_j = \overline{Z_{k-1}}c_k\overline{X_kY_k}w'$ and $\overline{Y_k}$ is a special middle word of $v_jw_j$, \textbf{WpPrefix}($w', w', \overline{Z_k}$) = Yes and hence $\overline{X_kY_k}$ is a clean overlap prefix of $\overline{X_kY_k}w'$ by \cref{lem-3} (ii). It follows by the same argument applied to prove the base case that either $c_k\overline{X_kY_k}$ is a clean overlap prefix of $w_j$ or $w_j$ has a clean overlap prefix $cXY$ such that $|cXY| \leq c_k$ and hence (ii) holds for $Y_k$.
\end{proof}

\begin{lemma}\label{W-is-smw}
Let $v_i$, $w_i$ and $W_i$ be the values of $v$, $w$ and $W$ after the $i$-th iteration of the while loop of line 2 in \cref{algorithm-normalform}. If $W_i = XYZ$ then $Y$ is a special middle word of $v_iw_i$.
\end{lemma}

\begin{proof}
All line numbers in this proof refer to \textbf{NormalForm} in \cref{algorithm-normalform}. The value of $W_i$ gets assigned in one of lines 10, 14, 20 and 24. In line 24, $W \gets \varepsilon$ and hence it suffices to examine the cases of lines 10, 14 and 24. In each of these cases, $W_i \gets XYZ$ for some relation word $XYZ$ and $v_i \gets v'Y$ for some $v' \in A^*$. It follows that $Y$ is a factor of $v_iw_i$.

We prove that $Y$ is a special middle word of $v_iw_i$ by induction. Assume that the $k$-th iteration of the while loop of line 2 is the first iteration of \textbf{NormalForm}($w_0$) such that a value not equal to $\varepsilon$ gets assigned to $W_k$. It follows that $W_{k-1}$ did not satisfy the condition of line 3 and the values of $W_k, v_k, w_k$ get assigned in lines 24, 25 and 26, respectively. Hence $w_{k-1}$ has a clean overlap prefix $aXY$ for some $a \in A^*$, $w_{k-1} = aXYw'$ and $Z$ is a possible prefix of $w'$. It follows that $Y$ is a factor of $v_{k-1}w_{k-1} \equiv v_{k-1}aXYZq \equiv v_{k-1}aX'Y'Z'q = v_kw_k$ for $X'Y'Z'$ the lexicographically minimal equivalent of $XYZ$ and for some $q \in A^*$. Hence, by definition, $Y'$ is a special middle word of $v_kw_k$.

We now assume that the result holds for the first $j$ iterations of the while loop of line 2 and assume that $m \in \mathbb{N}$ is such that the $(j+m)$-th iteration of the while loop of line 2 is the first iteration after the $j$-th iteration of \textbf{NormalForm}($w_0$) such that a value not equal to $\varepsilon$ gets assigned to $W_{j+m}$. Then the value of $W_{j+m}$ gets assigned in one of lines 10, 14 or 24. In the cases of lines 10 and 14, $W_{j+m-1} \neq \varepsilon$ and hence $W_{j+m-1} = W_{j-1}$ and $W_{j+m} = W_j$. It follows that $W_{j-1} = X_rY_rZ_r$ and $v_{j-1}w_{j-1} = pY_rq$ for some $p, q \in A^*$ and $v_{j-1}w_{j-1} \equiv p'X_rY_rZ_rq'$ for some $p', q' \in A^*$. Since the value of $W_j$ gets assigned in one of lines 10 and 14, $w_{j-1}$ satisfies the conditions of line 3. In particular, $w_{j-1} = Z_rw'$, $w'$ is $\overline{Z_r}$-active, for $\overline{Z_r}$ a proper complement of $Z_r$ and $\overline{Z_r}w' = aX_sY_sw''$ and $Z_s$ is a possible prefix of $w''$. It follows that $v_{j-1}w_{j-1} \equiv p'\overline{X_rY_rZ_r}q' = p'\overline{X_rY_r}aX_sY_sw''$. Since $Z_s$ is a possible prefix of $w''$, it follows that $X_sY_sZ_s$ is a factor of a word equivalent to $v_{j-1}w_{j-1}$ and hence $Y_s$ is a special middle word of $v_{j-1}w_{j-1}$. If the values of $W_j, v_j, w_j$ get assigned in lines 14, 15 and 16, respectively, then $v_{j-1}w_{j-1} = v_jw_j$, $W_j = X_sY_sZ_s$ and hence $Y_s$ is a special middle word of $v_jw_j$. If the values of $W_j, v_j, w_j$ get assigned in lines 10, 11 and 12, respectively, then
$W \gets X_tY_tZ_t$ is a complement of $X_sY_sZ_s$ and since $Y_s$ is a special middle word of $v_{j-1}w_{j-1}$, it follows by \cref{yproperty} that $Y_t$ is a special middle word of $v_jw_j$.

In the case of line 24 the result follows by an argument that is identical to the argument in the proof of the base case of this proof.
\end{proof}

\begin{lemma}\label{part2}
Let $w_0$ be the input to \textbf{NormalForm}($w_0$) and let $v_i$ with $i\geq 0$ be the value of $v$ after the $i$-th iteration of the while loop in line 2. Then $v_i$ is a prefix of the lexicographically minimal word $\min w_0$ equivalent to $w_0$ for every $i$.
\end{lemma}
\begin{proof}
All line numbers in this proof refer to \textbf{NormalForm} in \cref{algorithm-normalform}.
Certainly, $v_0 = \varepsilon$ is a prefix of the lexicographically least word $\min w_0$ equivalent to $w_0$.
Assume for $j\geq 1$ that $v_{j-1}$ is a prefix of $\min w_0$.
We will show that $v_j$ is also a prefix of $\min w_0$.
Since $v_{j - 1}$ is a prefix of $\min w_0$,
the special middle words in $v_{j - 1}$ are the initial $k$ special middle words in $\min w_0$ for some $k$. 
The value of $v_j$ is assigned in one of lines 11, 15, 21, 25, and 29 and in every case $v_{j-1}$ is a prefix of $v_j$.
We consider each of these cases separately.
\medskip

\noindent\textbf{line 11:}
If $v_{j}$ is defined in line 11, then the conditions of lines 3, 4 and 9 are satisfied. Since the conditions of line 3 are satisfied, $W_{j-1} = X_rY_rZ_r$ and by \cref{W-is-smw}, $Y_r$ is a special middle word of $v_{j-1}w_{j-1}$. Since the value of $W_{j-1}$ is assigned such that $Y_r$ is a suffix of $v_{j-1}$, it follows that $Y_r = Y_{k-1}$. In addition, since the value of $W_j$ gets assigned in line 10, it follows by \cref{W-is-smw} that $Y_s = Y_k$ in line 3 and $Y_t = \overline{Y_k}$ in line 11. It suffices by \cref{same-prefixes} to prove that the special middle word $\overline{Y_k}$ in $v_j$ is equal to the complement of $\overline{Y_k}$ in $\min w_0$.

 Since $\min w_0 \equiv v_j w_j$, it follows that $\min w_0 \leq v_jw_j$. If the $k+1$-th special middle word $\overline{\overline{Y_k}}$ in $\min w_0$ is 
 not $\overline{Y_k}$, then $\min w_0 < v_jw_j$, and so, by \cref{lem-xyz}, $\overline{\overline{X_kY_kZ_k}} < \overline{X_kY_kZ_k}$.  If $a$ is the suffix of $\overline{Z_{k-1}}$ given in line 3, then 
 $\overline{X_kY_kZ_k}$ is chosen in line 5 to be the least complement of $X_kY_kZ_k$ with prefix $a$. Seeking a contradiction we will show that $\overline{\overline{X_kY_kZ_k}}$ also has a prefix $a$. In order to accomplish this, we show that $\min w_0$ and $v_jw_j$ satisfy the assumption of \cref{lem-refactor1}. In other words, we will show that $v_jw_j = p Y_{k-1}Z_{k-1}\overline{X_k''Y_k}q$ for some $p, q\in A ^*$ and some suffix $\overline{X''_k}$ of $\overline{X_k}$. 
 
 The word $v_{j - 1}$ was defined to be $v' Y_{k -1}$ for some $v'$.
 Since the condition in line 3 holds, $w_{j -1} = Z_{k - 1}w'$ for some $w'$, and $\overline{Z_{k-1}}w' = bX_kY_kw''$ for some $b\in A ^ *$ such that $|b|<|\overline{Z_{k-1}}|$.
 This implies that $X_k = aX_k''$ where $a$ is the suffix of $\overline{Z_{k-1}}$ given in line 3, and $X_k''$ is a prefix of $w'$. Hence, since $w'$ is $Z_{k-1}$-active,
 $w' = X_k''Y_kw''$ and so
 \[
 v_{j-1}w_{j-1} 
 = v' Y_{k - 1}Z_{k - 1}w'
 = v'Y_{k-1}Z_{k-1}X_k''Y_kw'',
 \]
 and 
 \[
 v_jw_j = v' Y_{k-1}Z_{k-1}\overline{X_k''Y_k}q
\] 
for some $q\in A ^ *$. Hence, by \cref{lem-refactor1}, 
the word $a$ is a prefix of both $\overline{X_k}$ and $\overline{\overline{X_k}}$, giving the required contradiction. 
\medskip

\noindent \textbf{line 15:}
Similar to the case of line 11, $W_{j-1} = X_{k-1}Y_{k-1Z_{k-1}}$, $v_j = v_{j-1}Z_{k-1}X_{k}''Y_k$, and we must show that $Y_k$ is in $\min w_0$.
If the conditions of line 3 are satisfied but the conditions of line 4 are not satisfied, then $X_kY_kZ_k$ is the lexicographically minimum relation word with prefix $a$.  If $\min w_0$ does not contain $Y_k$, then it contains a proper complement $\overline{Y_{k}}$. 
As in the previous case, $v_jw_j = v' Y_{k-1}Z_{k-1}\overline{X_k''Y_k}q$, and so by \cref{lem-refactor1} (applied to $v_jw_j$ and $\min w_0$) the word $a$ is a prefix of both $X_k$ and $\overline{X_k}$. Thus it follows by \cref{lem-xyz} that $\overline{X_kY_kZ_k} < X_kY_kZ_k$. But in this case $\overline{X_kY_kZ_k}$ is a proper complement of $X_kY_kZ_k$ that has prefix $a$ and the condition of line 4 is satisfied, a contradiction.

If the conditions of lines 3 and 4 are satisfied but the condition of line 9 is not satisfied, then \textbf{WpPrefix}($w_0, v_{j-1}Z_{k-1}\overline{X_k''Y_kZ_k}t, \varepsilon$) = No. Hence no word equivalent to $w_0$ contains $Y_{k-1}$ and a proper complement $\overline{Y_k}$ of $Y_k$ where $\overline{X_kY_kZ_k}$ has prefix $a$. 
But, by \cref{lem-refactor1}, every word equivalent to $w_0$ that contains $Y_{k-1}$ and a proper complement of $Y_k$, has the property that the proper complement of $Y_k$  is the middle word of a relation word with prefix $a$. Hence no word equivalent to $w_0$ contains both $Y_{k -1}$ and a proper complement of $Y_k$.
In particular, $\min w_0$ contains $Y_k$, as required.
\medskip

\noindent \textbf{line 21:} In this case, $v_j = v_{j -1}aXY$, $w_{j-1} = aXYw'$ and \textbf{WpPrefix}($w', w', Z$)=No. It follows that $Y$ is not a special middle word of $v_{j-1}w_{j-1}$. By assumption $v_{j-1}$ is a prefix of $\min w_0$ and by \cref{lem-2} $aXY$ is a prefix of all words equivalent to $w_{j-1}$. It follows that $v_j = v_{j-1}aXY$ is a prefix of $\min w_0$.
\medskip

\noindent \textbf{line 25:}
In this case, $v_j = v_{j -1}a\overline{XY}$, $w_{j-1} = aXYw'$ and \textbf{WpPrefix}($w', w', Z$) = Yes. It follows that there exists a word equivalent to $v_{j-1}w_{j-1}$ containing $XYZ$ as a factor and hence $Y$ is a special middle word of $v_{j-1}w_{j-1}$. Since $v_{j-1}$ contains the initial $k$ special middle words of $\min w_0$ and $Y$ is a special middle word occurring after $Y_{k-1}$, it follows that $Y = Y_k$.
In this case, $v_j = v_{j - 1}a \overline{X_kY_k}$ and by \cref{same-prefixes} it suffices to show that $\overline{Y_k}$ is a factor of $\min w_0$.
In this case, $w_{j-1} = b_kX_kY_kw'$ and $Z_k$ is a possible prefix of $w'$. By \cref{lem-xyz}, $\min w_0$ must contain the middle word in $\min X_kY_kZ_k$. 
Hence $\min w_0$ contains $\overline{Y_k}$, as required. 
\medskip

\noindent \textbf{line 29:}
In this case, neither of the conditions in lines 3 or 18 are satisfied. Since the condition of line 18 is not satisfied, $w_{j-1}$ contains no relation words as factors and it is only equivalent to itself. Hence $v_j = v_{j-1}w_{j-1}$ is the normal form of $w_0$.
\end{proof}

The proof of the correctness of \textbf{NormalForm} is concluded in the following proposition.

\begin{prop}
If $w_0\in A^*$ is arbitrary, then the word $v$ returned by \textbf{NormalForm}($w_0$) is the lexicographical least word equivalent to $w_0$.
\end{prop}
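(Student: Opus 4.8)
The plan is to assemble the preceding results into a short two-sided inequality. Write $v$ for the word returned by \textbf{NormalForm}($w_0$). By Proposition~\ref{equivalence} we have $v\equiv w_0$, and since $\mathcal{P}$ satisfies $C(4)$ and hence $C(3)$, the $\equiv$-class of $w_0$ is finite (by~\cite[Corollary~5.2.16]{higgins}), so its lexicographically least element $\min w_0$ exists. By the very definition of $\min w_0$ we get $\min w_0\le v$, so the whole content is the reverse inequality $v\le\min w_0$.

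For this I would invoke Lemma~\ref{part2}: every value taken by the variable $v$ during the run---in particular the final, returned one---is a prefix of $\min w_0$. The while loop of line~2 terminates (as already noted, since $|vw|\le\delta|w_0|$ throughout while $|v|$ strictly increases at each iteration), so the algorithm does exit with $w=\varepsilon$; hence the returned $v$ really is $v\cdot\varepsilon\equiv w_0$. A word that is simultaneously a prefix of $\min w_0$ and equivalent to $w_0$ must equal $\min w_0$: a proper prefix is strictly smaller in the lexicographic order, contradicting the minimality of $\min w_0$ within its $\equiv$-class. Therefore $v=\min w_0$.

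Alternatively, one can route the conclusion through Theorem~\ref{theorem}, as announced in the text before Lemma~\ref{part1}. Lemma~\ref{part2} (together with Lemmas~\ref{form-of-word} and~\ref{same-prefixes}) shows that the special middle words of the returned $v$ form an initial segment of the special middle words of $\min w_0$; the bijection $\Psi$ of Lemma~\ref{bijection-smw}, applied to $v\equiv\min w_0$, shows the two words carry the same number of special middle words (and Lemma~\ref{part1} confirms that the special middle words occurring in intermediate values of $v$ already exhaust those of $w_0$). Hence the ordered lists of special middle words of $v$ and of $\min w_0$ coincide, and Theorem~\ref{theorem} gives $v=\min w_0$.

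The genuine difficulty lies entirely in the lemmas already proved, so this final step is essentially bookkeeping; the one place that repays care is the termination-and-equality step above---one must check that the algorithm exits the while loop with $w=\varepsilon$ (so that Proposition~\ref{equivalence} applies to give $v\equiv w_0$) and that ``prefix of $\min w_0$'' together with ``equivalent to $w_0$'' really forces equality and not merely $v\le\min w_0$. The true obstacle behind the proposition is Lemma~\ref{part2}: its proof must work through the five branches of \textbf{NormalForm} in which $v$ is updated and, in each, use Lemma~\ref{lem-xyz} and Lemma~\ref{lem-refactor1} to rule out that a lexicographically smaller complement of a special middle word was skipped.
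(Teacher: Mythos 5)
Your proposal is correct, and your primary route is a genuinely different (and leaner) final assembly than the paper's. The paper concludes by combining Proposition~\ref{equivalence} with \emph{both} Lemma~\ref{part1} (every special middle word of $w_0$ eventually acquires a complement in some $v_i$) and Lemma~\ref{part2} ($v_i$ is always a prefix of $\min w_0$), and then invokes Theorem~\ref{theorem} to deduce that the returned word and $\min w_0$ have the same special middle words and hence coincide. You instead observe that once termination is known, the returned word is simultaneously equivalent to $w_0$ (Proposition~\ref{equivalence}) and a prefix of $\min w_0$ (Lemma~\ref{part2}), and that in the lexicographic order defined in the paper a \emph{proper} prefix is strictly smaller, which would contradict the minimality of $\min w_0$ within its finite $\equiv$-class; so equality is forced. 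This sidesteps Lemma~\ref{part1} and Theorem~\ref{theorem} entirely in the final step, at the cost of leaning on the (correct) observation that proper prefixes are lexicographically smaller -- worth stating explicitly, since for other orders (e.g.\ shortlex restricted to words of equal length) the inference would not be available. Your alternative route through the special-middle-word machinery is essentially the paper's own argument. You are also right that the real content sits in Lemma~\ref{part2}; the proposition itself is bookkeeping on either route.
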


\begin{proof}
All line numbers in this proof refer to \textbf{NormalForm} in \cref{algorithm-normalform}.
 In \cref{equivalence} it is shown that the word returned by \textbf{NormalForm} is equivalent to $w_0$. We will use the same notation as in the proof of \cref{equivalence}; $v_i, w_i$ will be used to denote the values of $v$ and $w$ after the $i$-th iteration of the while loop in line 2. 
 
  In \cref{part1}, we showed that for every special middle word $Y_k$ in $w_0$ there exists an $i$ such that $v_i$ contains a complement of $Y_k$. Since $v_{i}$ is a proper prefix of $v_{i + 1}$ for every $i$, it follows that eventually $v_i$ contains a complement of every special middle word in $w_0$.
  In \cref{part2}, we showed that  $v_i$ is a prefix of $\min w_0$ for all $i$. Together these two statements imply that when \textbf{NormalForm} terminates, the middle words in $v_i$ coincide with the special middle words in $\min w_0$ 
 and hence $v_i$ is the lexicographically least word  equivalent to $w_0$ by \cref{theorem}.
 \end{proof}

\subsection{Complexity}
In this section we analyze the complexity of \textbf{NormalForm}. 
Throughout this section we suppose that the maximal piece prefix $X$, suffix $Z$, and middle word $Y$ has been computed already for every relation word in the given presentation $\langle A | R \rangle$. The time complexity for doing this is discussed in Section~\ref{section-uniform-wp}.
As such we do not include the complexity of determining that the presentation satisfies $C(4)$, nor that of finding the $X$, $Y$, and $Z$, in the statements in this section. 
We start with two results regarding the complexity of finding a clean overlap prefix for a word $w$ and deciding if a word $w$ is $p$-active for a piece $p$. Finally, we show that for a given C(4) presentation $\langle A \mid R \rangle$, the complexity of \textbf{NormalForm}$(w)$ is $O(|w|^2)$ where $w \in A^*$ is the input.

\begin{lemma} \label{cop - time}
If $w \in A ^ *$ is arbitrary, then the clean overlap prefix of $w$, if any, can be found in time linear to the length of $w$.
\end{lemma}
\begin{proof}
Let $M$ denote the number of relation words and let $\delta$ be the length of the longest relation word in our $C(4)$ presentation. According to Lemma 7 in \cite{overlaps1} to check if a word $v$ has a clean overlap prefix of the form $X_iY_i$ where $X_iY_iZ_i = W_i$, $1 \leq i \leq r$ it suffices to check if $v'$ has a clean overlap prefix of this form, where $v'$ is a prefix of $v$ such that $|v| = 2\delta$. 

Hence, in order to find the clean overlap prefix of $w$ that has the form $sX_iY_i$ for $s \in A^*$ it suffices to check at most $|w|$ suffixes of $w$ for clean overlap prefixes of the form $X_iY_i$. This can be done in $O(|w|)$ time.
\end{proof}

\begin{lemma}
If $w \in A ^ *$ is arbitrary and $p$ is a piece, then deciding if $w$ is $p$-active takes constant time.
\end{lemma}

\begin{proof}
Again, let $M$ denote the number of relation words and let $\delta$ be the length of the longest relation word in our $C(4)$ presentation.
According to Lemma 7 in \cite{overlaps1} it suffices to check if $w'$ is $p$-active, where $w'$ is a prefix of $w$ of length $2\delta$. Since $p$ is a piece, then clearly $|p| < \delta$. A string searching algorithm, such as, for example, Boyer-Moore-Horspool~\cite{Horspool},  can check if there exists some $i$, $1 \leq i \leq M$ such that the factor $X_iY_i$ occurs in $pw'$ before the end of $p$. This takes $O(M\delta |pw'|) = O(3M\delta^2)$ time.
\end{proof}

\begin{prop}
The complexity of \textbf{NormalForm} is $O(|w_0|^2)$ where $w_0\in A ^ *$ is the input, given that the decompositions of the relation words in the presentation into $XYZ$ are known.
\end{prop}
\begin{proof} 
Let $\left\langle A \, | \, R \right\rangle$ be the presentation, let $M$ be the number of distinct relation words in $R$ and let $\delta$ be the length of the longest relation word in $R$. We have already shown that the while loop of line 2 will be repeated at most $|w_0|$ times.
We analyze the complexity of each step of the procedure in the loop.

     In line 3 the algorithm tests if the word $w'$ is $\overline{Z_r}$-active for $\overline{Z_r}$ some complement of $Z_r$ in constant time. In addition, checking if $Z_r \neq \overline{Z_r}$ requires comparing at most $\delta$ characters. Finding the suffix $a$ of $\overline{Z_r}$ such that $aw'=X_sY_sw''$ also requires checking at most $\delta$ characters, hence these checks can be performed in constant time.
    
    In lines 3,9 and 19, \textbf{WpPrefix}$(u, v, p)$ is called. According to \cite{overlaps1}, the algorithm can be implemented with execution time bounded above by a linear function of the length of the shortest of the words $u$ and $v$. Since every time \textbf{WpPrefix}$(u, v, p)$ is called either $u = w_0$ or $u$ is a suffix of some word equivalent to $w_0$, this step can be executed in $O(\delta|w_0|)$ time. 
    
    In lines 4-5 we search for proper complements of $X_sY_sZ_s$ that have the prefix $a$. Since $a$ is a piece, this step also requires constant time.
    
    In line 9 the algorithm finds the suffix $b$ of $\overline{X_s}$ such that $\overline{X_s}=ab$ and the suffix $t$ of \textbf{ReplacePrefix}($w'', Z_s$) that follows $Z_s$. This is also done in constant time since $a$ and $Z_s$ are pieces.
    
    In lines 9, 16 and 26 \cref{algorithm-prefix} is called. Each time, \cref{algorithm-prefix} takes as input a suffix of some word $s$ equivalent to $w_0$. Since $|s|< \delta |w_0|$, this step can be completed in $O(\delta|w_0|)$ time.
    
   In lines 5 and 25 we search for the lexicographically minimal complement of some relation word $X_iY_iZ_i$. Clearly, this check can be done by comparing at most $\delta$ characters $M$ times, hence it is constant for a given presentation. 
    
    In line 18 the algorithm finds the clean overlap prefix of $w$. As shown in \cref{cop - time}, this can be done in $O(|w|)$ time. Since $w$ is always a suffix of some word equivalent to $w_0$, this step can be executed in $O(\delta|w_0|)$ time.
    
    In lines 11-12, 15, 21-22, 25-26 and 29 \cref{algorithm-normalform} concatenates $v$ with a word of length at most $\delta |w_0|$ and deletes a prefix of length at most $\delta |w_0|$ from $w$. Hence these steps require at most $2\delta|w_0|$ time.
\end{proof}

We end the paper with an example of the application of \textbf{NormalForm} to specific $C(4)$ presentation.

   \begin{ex}
  	Let $$\left\langle a,b,c,d \, |\,ab^3a=cdc \right\rangle $$ be the presentation and let $w_0=cdcdcab^3ab^3ab^2cd$. The set of relation words of the presentation is $\{ab^3a, cdc\}$ and each relation word has a single proper complement. The set of pieces of $\mathcal{P}$ is $P =\{\varepsilon, a, b, c, b^2\}$. Let $W_0 = ab^3a, \, W_1 = cdc$. Clearly, $X_{W_0} = a, \, Y_{W_0} = b^3, \, Z_{W_0}  = a$ and $X_{W_1}  = c, \, Y_{W_1} =d, \, Z_{W_1}  = c$.
  	
  	\cref{algorithm-normalform} begins with $v \gets \varepsilon$, $W \gets \varepsilon$ and $w \gets cdcdcab^3ab^3ab^2cd$. Since $u= \varepsilon$ the conditions of line 3 are not satisfied. The word $w$ has a clean overlap prefix $X_{W_1}Y_{W_1}=cd$ followed by $Z_{W_1}=c$ hence \textbf{WpPrefix}($cdcab^3ab^3ab^2cd, cdcab^3ab^3ab^2cd,c$) returns Yes and \textbf{ReplacePrefix}($cdcab^3ab^3ab^2cd,c$) returns $cdcab^3ab^3ab^2cd$. Since $W_0 < W_1$, $v \gets X_{W_0}Y_{W_0}=ab^3$, $w \gets adcab^3ab^3ab^2cd$ and $W \gets ab^3a$ in lines 24-26.
  	
  	Now $W=ab^3a$, $w$ begins with $Z_{W_0}=a$, $w'=dcab^3ab^3ab^2cd$ is $\overline{Z_{W_1}}$-active and the prefix $Y_{W_1}=d$ of $w'$ is followed by $Z_{W_1}=c$, hence the conditions in line 3 are satisfied. In addition, $ab^3a < cdc$ but $X_{W_0}$ and $X_{W_1}$ do not have a common prefix. Hence, in lines 14-16 $v \gets ad$, $w\gets$ \textbf{ReplacePrefix}($cab^3ab^3ab^2cd, c$) = $cab^3ab^3ab^2cd $ and $W \gets cdc$.
  	
  	At this point, $W=cdc$ and $w=cab^3ab^3ab^2cd$ begins with $Z_{W_1}=c$ but $ab^3ab^3ab^2cd$ is not $Z_{W_0}$-active. The word $w$ has the clean overlap prefix $cab^3$ that is followed by $a$, hence in lines 24-25, $v \gets vcab^3$, $w \gets$ \textbf{ReplacePrefix}($ab^3ab^2cd,a)=ab^3ab^2cd$ and $W \gets ab^3a$.
  	
  	Next, $W=ab^3a$ and $w=ab^3ab^2cd$ begins with $Z_{W_0}$ and $b^3ab^2cd$ is not $Z_{W_1}$-active but it has the clean overlap prefix $ab^3$. The clean overlap prefix is followed by $a$, hence in lines 24-25 $v \gets vab^3$, $w\gets ab^2cd$ and $W \gets ab^3a$.
  	
  	Finally, $W=ab^3a$, $w=ab^2cd$ begins with $Z_{W_0}$ but $b^2cd$ is not $Z_{W_1}$-active. Now $w$ has the clean overlap prefix $ab^2cd$ that is followed by $\varepsilon$, hence in line 19 \textbf{WpPrefix}($\varepsilon, \varepsilon, c$) returns No and in lines 20-22 $v \gets vab^2cd$, $W \gets \varepsilon$ and $w \gets \varepsilon$.
  	Since $w=\varepsilon$, the algorithm returns $v=ab^3adcab^3ab^3ab^2cd$. \medskip	\\	
  Next, we will apply \textbf{NormalForm}, to find the normal form of $w_0=cdab^3cdc$. 
  	We begin with $v \gets \varepsilon$, $W \gets \varepsilon$ and $w \gets cdab^3cdc$. Since $ W = \varepsilon$ we do not have the case of line 3. The word $w$ has the clean overlap prefix $cd$ of the form $X_{W_1}Y_{W_1}$ and \textbf{WpPrefix}($ab^3cdc, ab^3cdc, c$)=Yes, and \textbf{ReplacePrefix}($ab^3cdc$, $c$) returns $cdcb^3a$. In lines 24-26 $v\gets ab^3$, $w\gets adcb^3a$ and $W \gets ab^3a$.
  	
  	For this iteration, $u=ab^3a$ and $w$ begins with $Z_{W_0}=a$, $dcb^3a$ is $Z_{W_1}$-active and clearly \textbf{WpPrefix}($cb^3a, cb^3a, c$)=Yes. In addition, $ab^3a<cdc$ but $X_{W_0}$ and $X_{W_1}$ do not have a common prefix, hence in lines 14-16 $v\gets vad$, $w\gets$ \textbf{ReplacePrefix}($cb^3a, c$)= $cb^3a$ and $W \gets cdc$.
  	
  	At this stage, $ W = cdc$, $w=cb^3a$ begins with $Z_{W_1}=c$ and $b^3a$ is $Z_{W_0}$-active but $X_{W_0}$ and $X_{W_1}$ do not have a common prefix hence $v\gets cb^3$, $w\gets$\textbf{ReplacePrefix}($a,\, a$) $=a$, $W \gets ab^3a$ in lines 14-16.
  	
  	Finally, $a$ begins with $Z_{W_0}$ but clearly $\varepsilon$ is not $Z_{W_1}$-active. Also $a$ does not have a clean overlap prefix and in lines 29-30 $v\gets va$, $w\gets \varepsilon$ and the algorithm returns $v=ab^3adcb^3a$. \\
\end{ex}

\section*{Acknowledgements}
The second author would like to thank the School of Mathematics and Statistics of the University of St Andrews and the Cyprus State Scholarship Foundation for their financial support.
The authors would also like to thank Mark Kambites for his correspondence on the topic of this paper, and the anonymous referee for their comments that helped to improve the paper. 

\bibliographystyle{halpha}
\bibliography{main}{}

\appendix
\section{Experimental results}\label{appendix-experimental}

In this section we present some empirical evidence related to the performance of our implementation of, and the complexity of, the algorithms presented in this paper. The implementation is that in the C++ library \textsf{libsemigroups}~\cite{libsemigroups}. The benchmarks presented in this section are not intended to be exhaustive, but rather to give a taste of what the implementation provides. 
The timings in this section were generated on an 2020 Apple M1 MacBook Air.

In \cref{figure-c4-check}, we plot the mean time to find the largest $k$ such that the $C(k)$ condition holds for all of the $2$-generated $1$-relation semigroups 
where the maximum length of a relation word is $n\in \{4, \ldots, 12\}$.

\begin{figure}[h]
\centering
\includegraphics[width=0.48\textwidth]{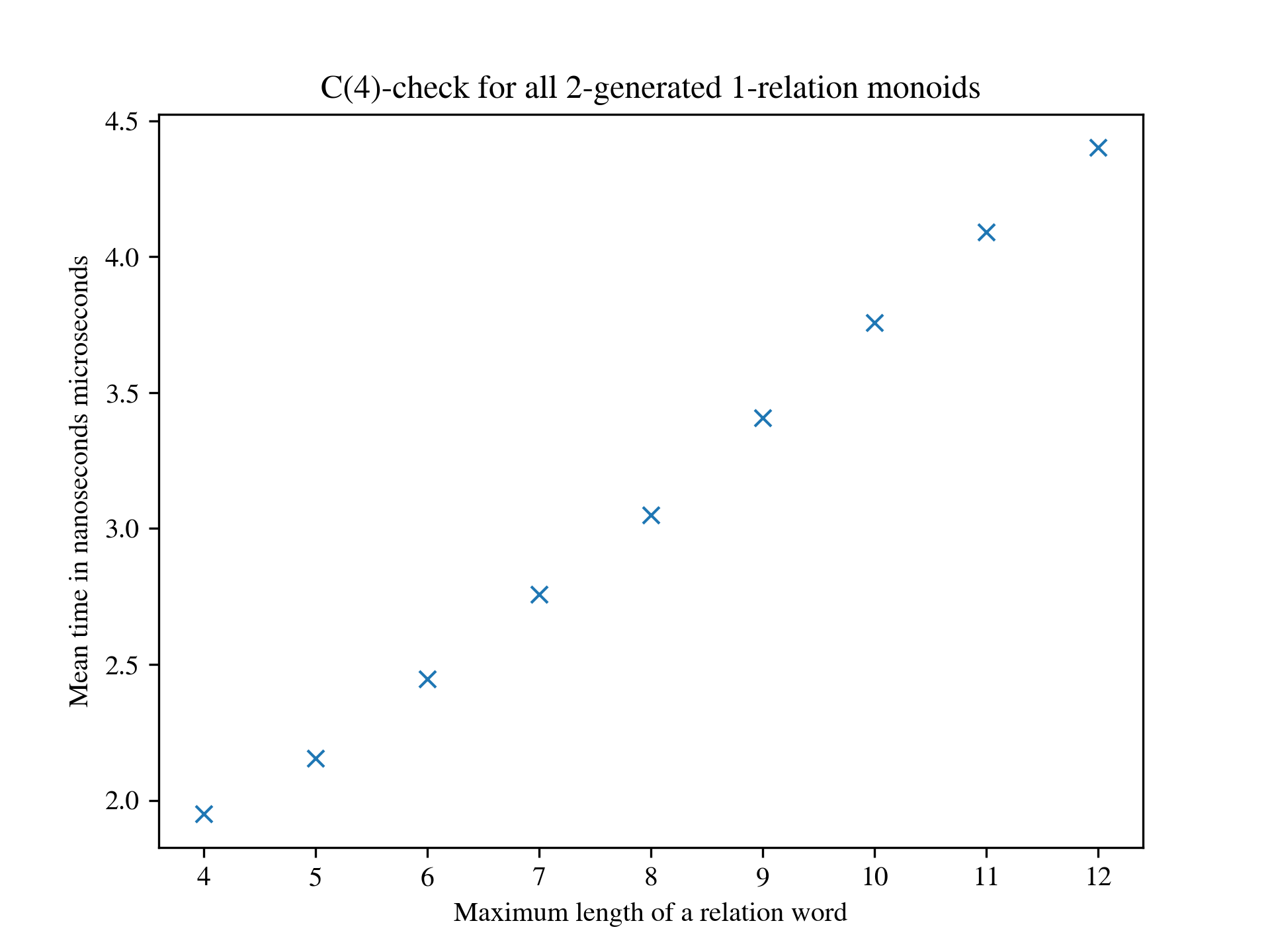}
    \includegraphics[width=0.48\textwidth]{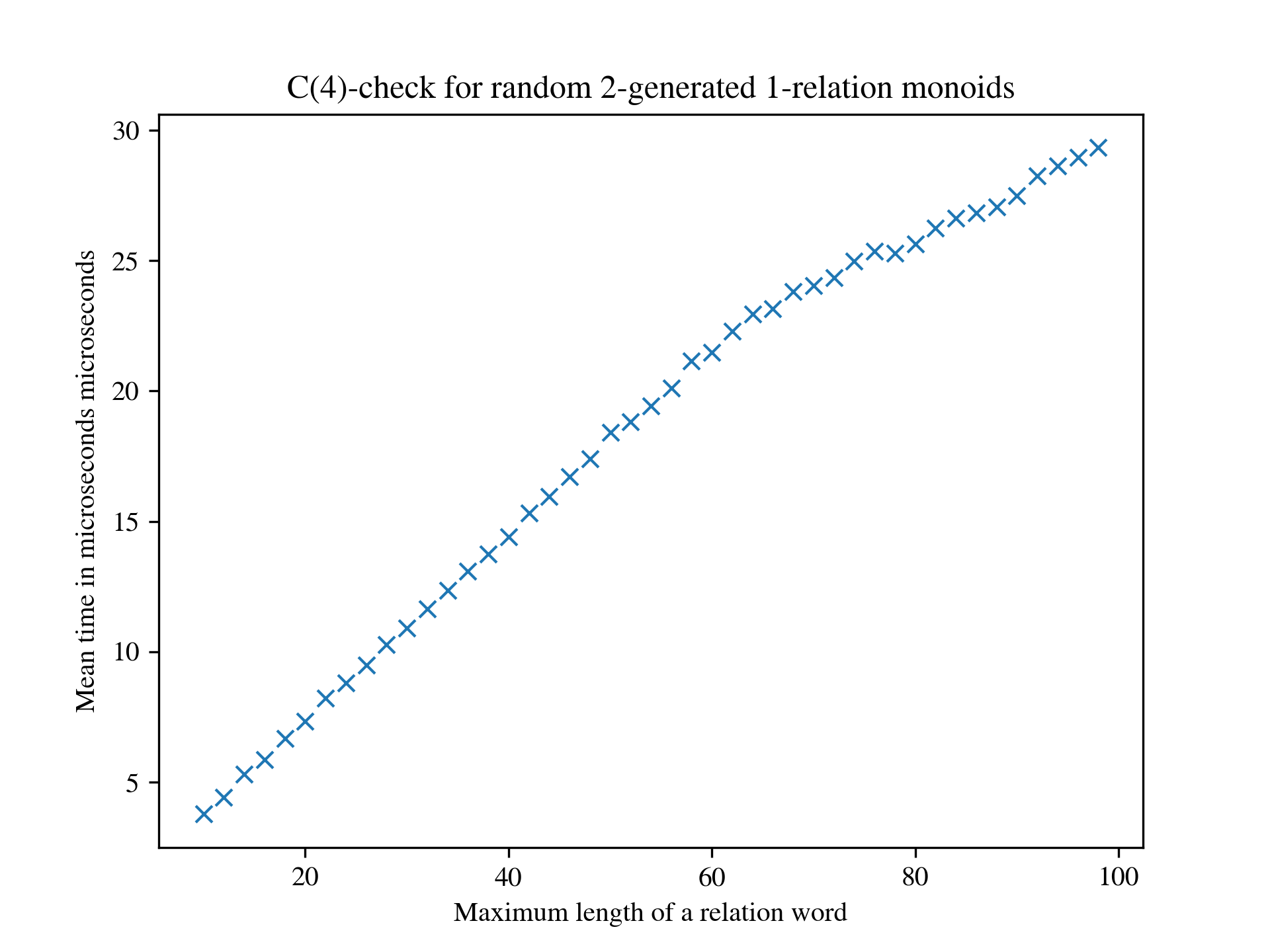}
    \caption{Mean time to check the $C(4)$ condition for: all $2$-generated $1$-relation semigroups where the maximum length of a relation word is $n\in \{4, \ldots, 12\}$ (left); and for a random sample of size $1000$ of $2$-generated $1$-relation semigroups where the maximum length of a relation word is $n\in \{10, 12, \ldots, 100\}$ (right).}
    \label{figure-c4-check}
\end{figure}

In Figures~\ref{figure-equal-to-all-2-gen-1-rel} and~\ref{figure-equal-to-2-gen-2-rel-100}, we provide some empirical data about the implementation in \textsf{libsemigroups}~\cite{libsemigroups} of \textbf{WpPrefix}. 
 If $\langle A|R\rangle $ is a monoid presentation, then the input words $u$ and $v$ were generated as follows: for some $N\in \N$, we choose words $w_0, \ldots, w_{N-1}\in A ^*$  uniformly at random among all words of length $0$ to $4N + 4$. If $W_0$ and $W_1$ are arbitrary relation words in the same relation in $R$ and $(s(0), s(1), \ldots, s(N-1)), (t(0), t(1), \ldots, t(N-1))\in \{0, 1\} ^ N$ are chosen uniformly at random from $\{0, 1\} ^ N$, then $u$ and $v$ are defined to be $w_0W_{s(0)}w_1W_{s(1)} \cdots w_{N-1} W_{s(N-1)}$ and $w_0W_{t(0)}w_1W_{t(1)} \cdots w_{N-1} W_{t(N-1)}$, respectively. In this way, we are guaranteed that $u$ equals $v$ in the presented semigroup, but the words $u$ and $v$ are far from being identical. For comparison, we have also included the time for comparing words chosen uniformly at random, but, as can be seen in the figures, it appears that the probability that two words chosen at random from $A ^ +$ are equal in the presented semigroups is very small. As such, \textbf{WpPrefix} can terminates without considering all letters in either $u$ or $v$. 
 
In \cref{figure-normal-form-all-2-gen-1-rel}, and~\cref{figure-normal-form-rand-2-gen-2-rel} we plot the time taken to compute normal forms of words, chosen in the same way as for \cref{figure-equal-to-all-2-gen-1-rel} and~\ref{figure-equal-to-2-gen-2-rel-100}, against the length of these words for each of the presentations defined above. 
 
\begin{figure}
    \centering
    \includegraphics[width=0.48\textwidth]{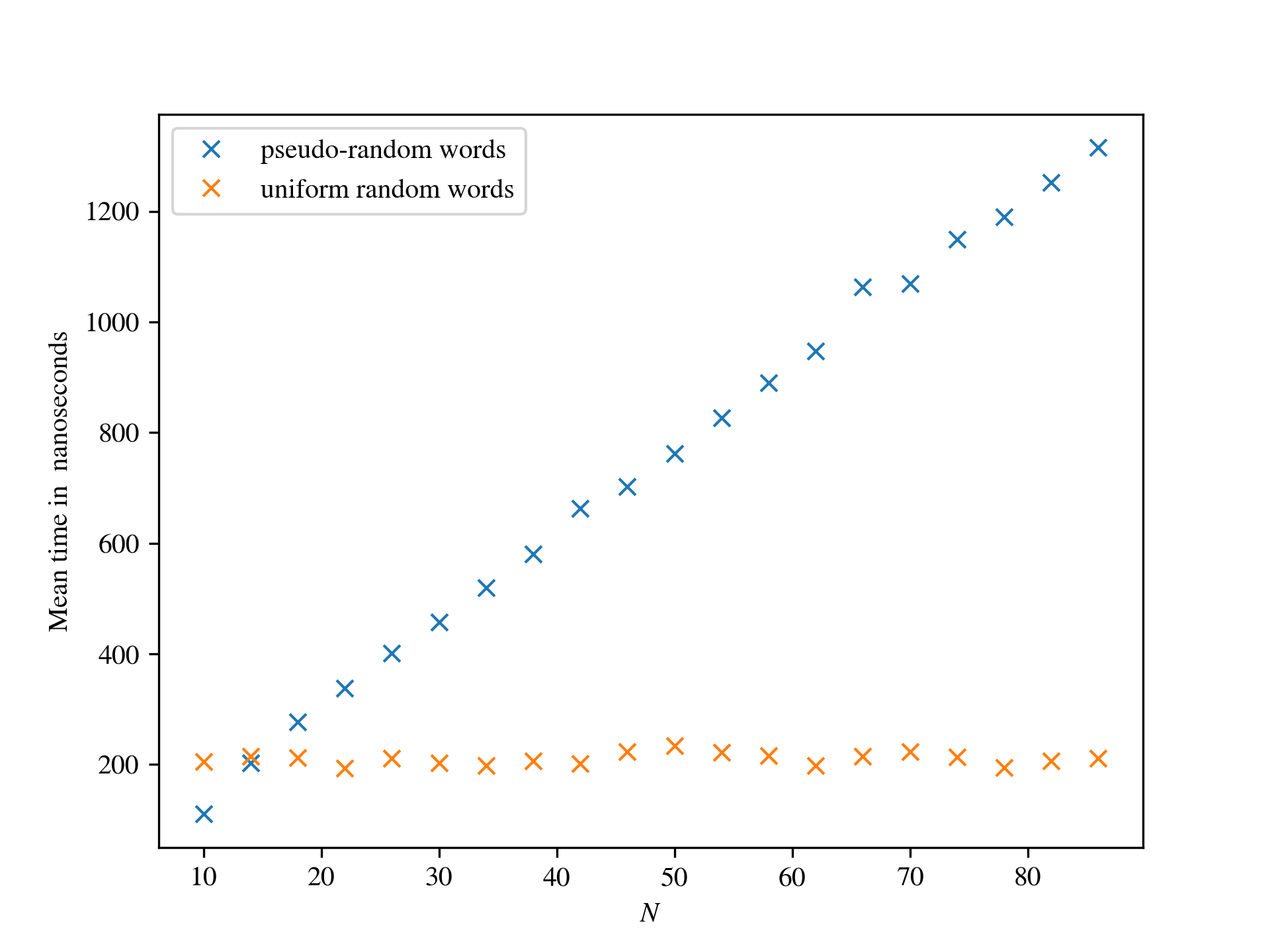}
    \includegraphics[width=0.48\textwidth]{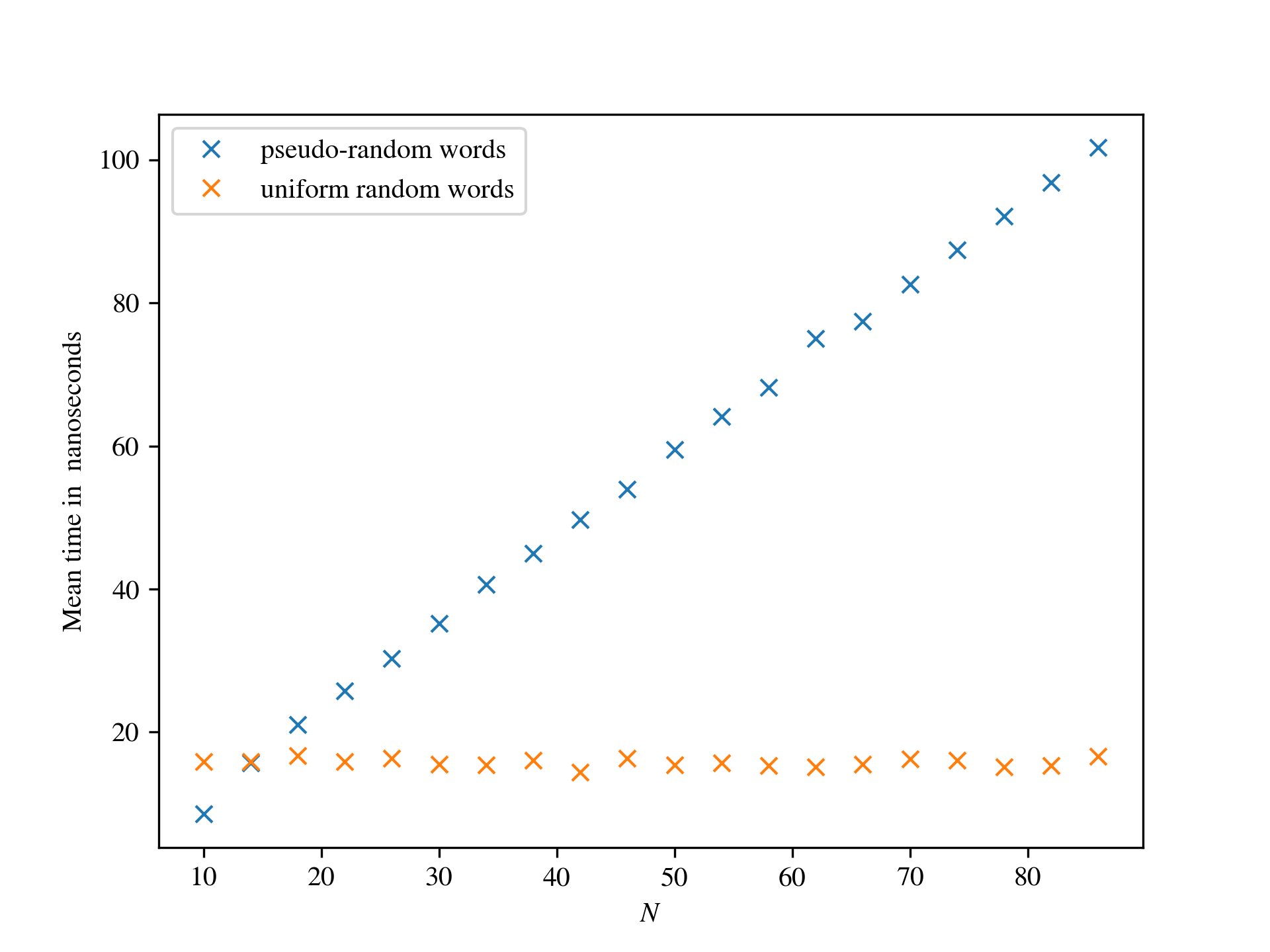}
    \caption{The mean time, over all 2-generated 1-relation $C(4)$ monoids where the maximum length of a relation word is $7$ (left) or $8$ (right), to check equality of 10 random pairs of words of the form
    $w_0W_{s(0)}w_1W_{s(1)} \cdots w_{N-1} W_{s(N-1)}$ and $v = w_0W_{t(0)}w_1W_{t(1)} \cdots w_{N-1} W_{t(N-1)}$, and 10 pairs of words of length $4N^2 + 7N + 4$ chosen uniformly at random,
    for $N \in \{10, 14, \ldots, 86\}$.}
    \label{figure-equal-to-all-2-gen-1-rel}
\end{figure}

\begin{figure}
    \centering
    \includegraphics[width=0.48\textwidth]{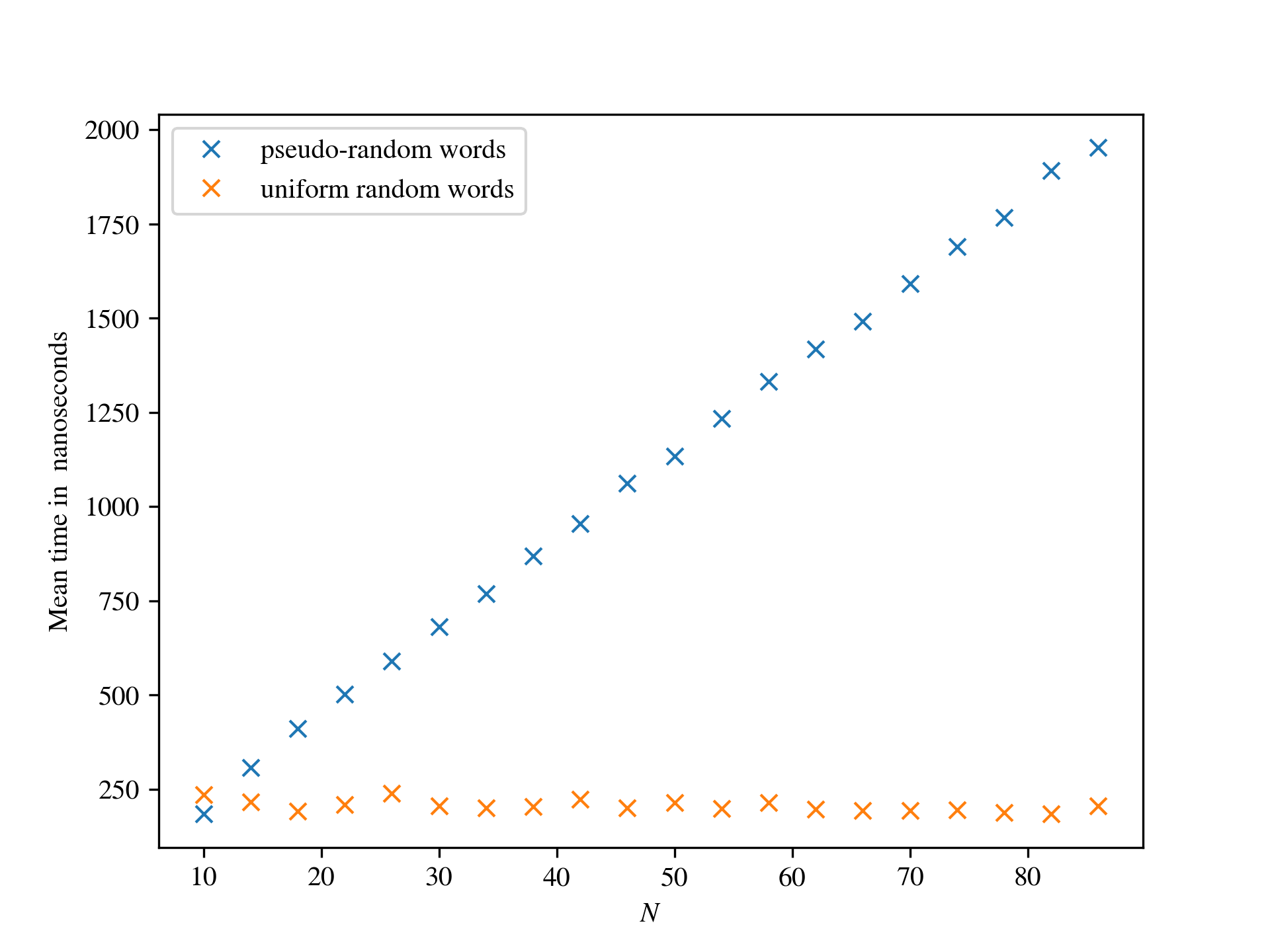}
    \includegraphics[width=0.48\textwidth]{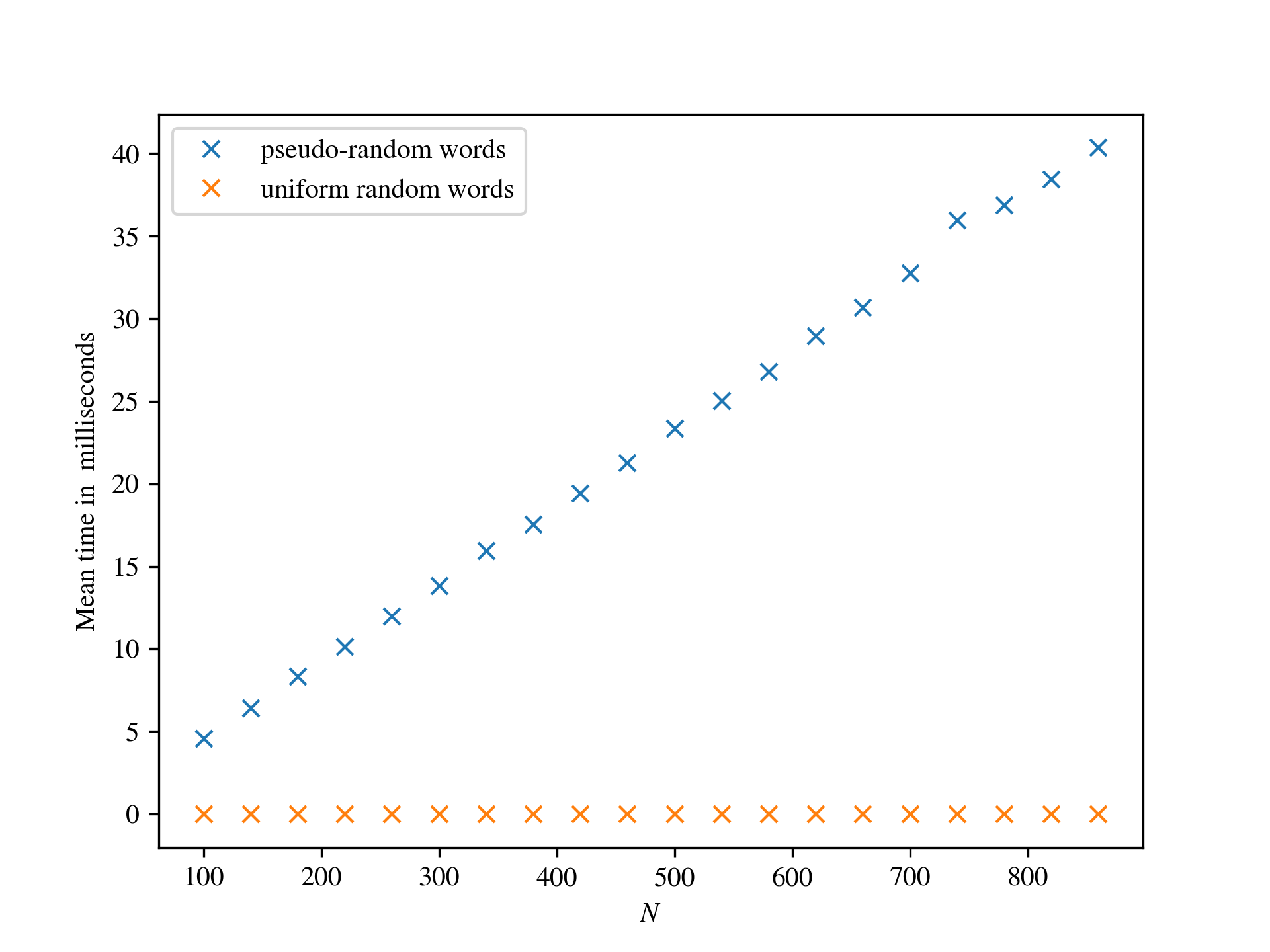}
    \caption{The time, for a randomly chosen 2-generated 2-relation $C(4)$ monoid where the maximum length of a relation word is $100$, to check equality of 10 random pairs of words of the form
    $w_0W_{s(0)}w_1W_{s(1)} \cdots w_{N-1} W_{s(N-1)}$ and $v = w_0W_{t(0)}w_1W_{t(1)} \cdots w_{N-1} W_{t(N-1)}$, and 10 pairs of words of length $4N^2 + 7N + 4$ chosen uniformly at random,
    for $N \in \{10, 14, \ldots, 86\}$ (left) and $N \in \{100, 140, \ldots, 860\}$ (right).}
    \label{figure-equal-to-2-gen-2-rel-100}
\end{figure}

\begin{figure}
    \centering
    \includegraphics[width=0.48\textwidth]{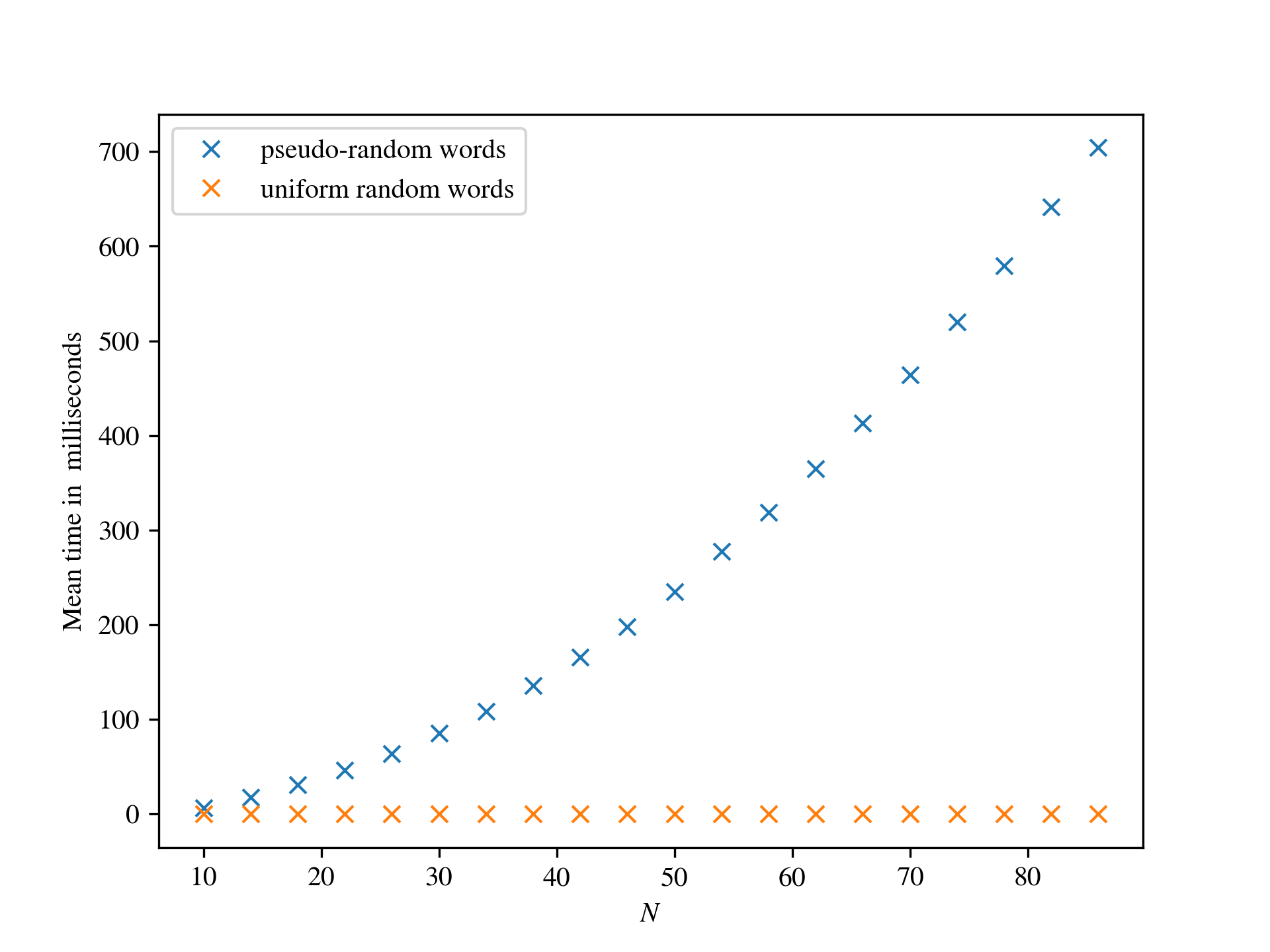}
    \includegraphics[width=0.48\textwidth]{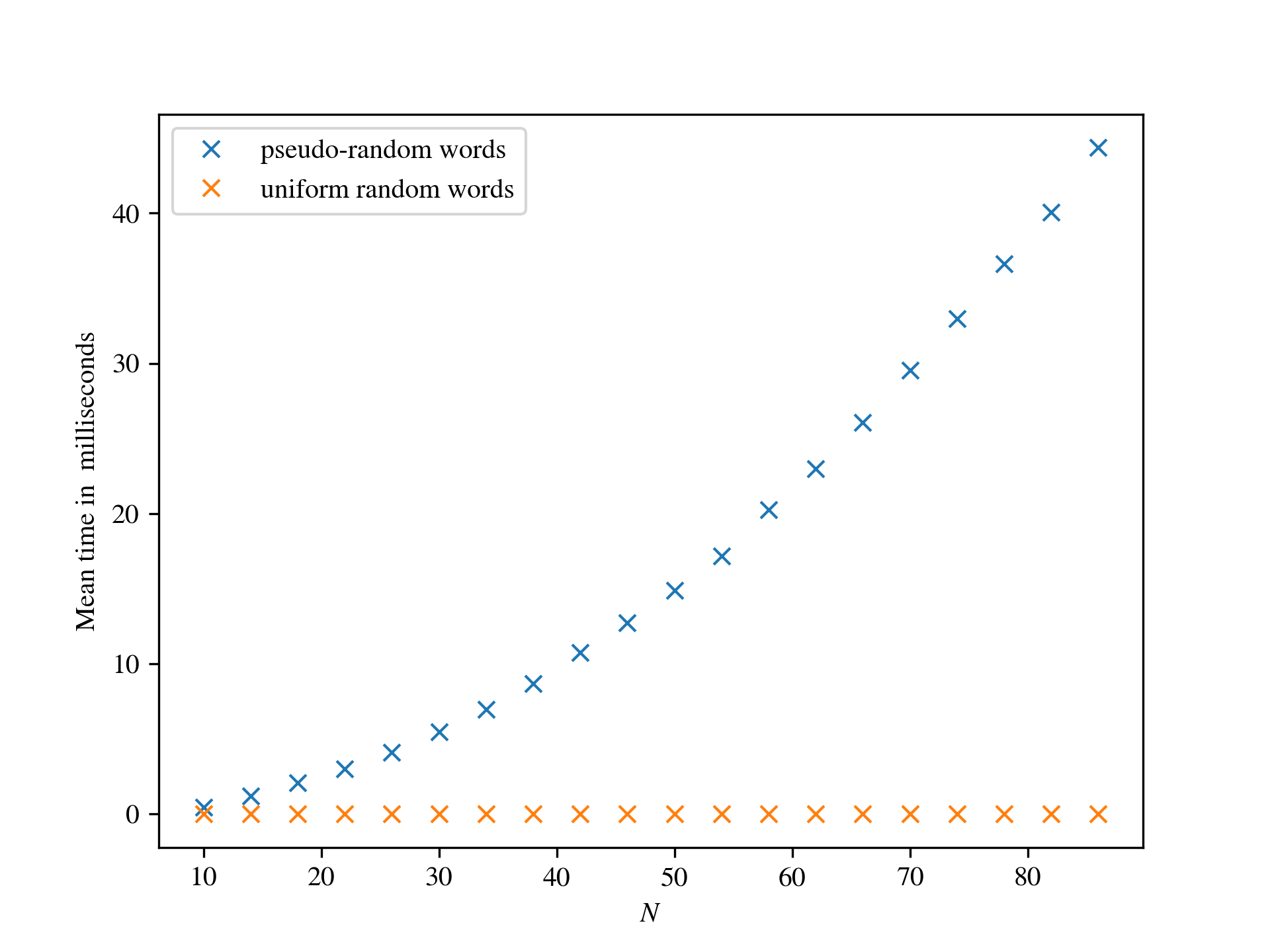}
    \caption{The mean time, over all 2-generated 1-relation $C(4)$ monoids where the maximum length of a relation word is $7$ (left) or $8$ (right), to compute normal forms for 20 random words of the form
    $w_0W_{s(0)}w_1W_{s(1)} \cdots w_{N-1} W_{s(N-1)}$, and $20$ words of length $4N^2 + 7N + 4$ chosen uniformly at random,
    for $N \in \{10, 14, \ldots, 86\}$. Performing a linear regression on a log-log plot of the data in these figures indicates a time complexity of $O(N ^ {2.08831})$ (left) and $O(N ^ {2.03774})$ (right).}
    \label{figure-normal-form-all-2-gen-1-rel}
\end{figure}

\begin{figure}
    \centering
    \includegraphics[width=0.48\textwidth]{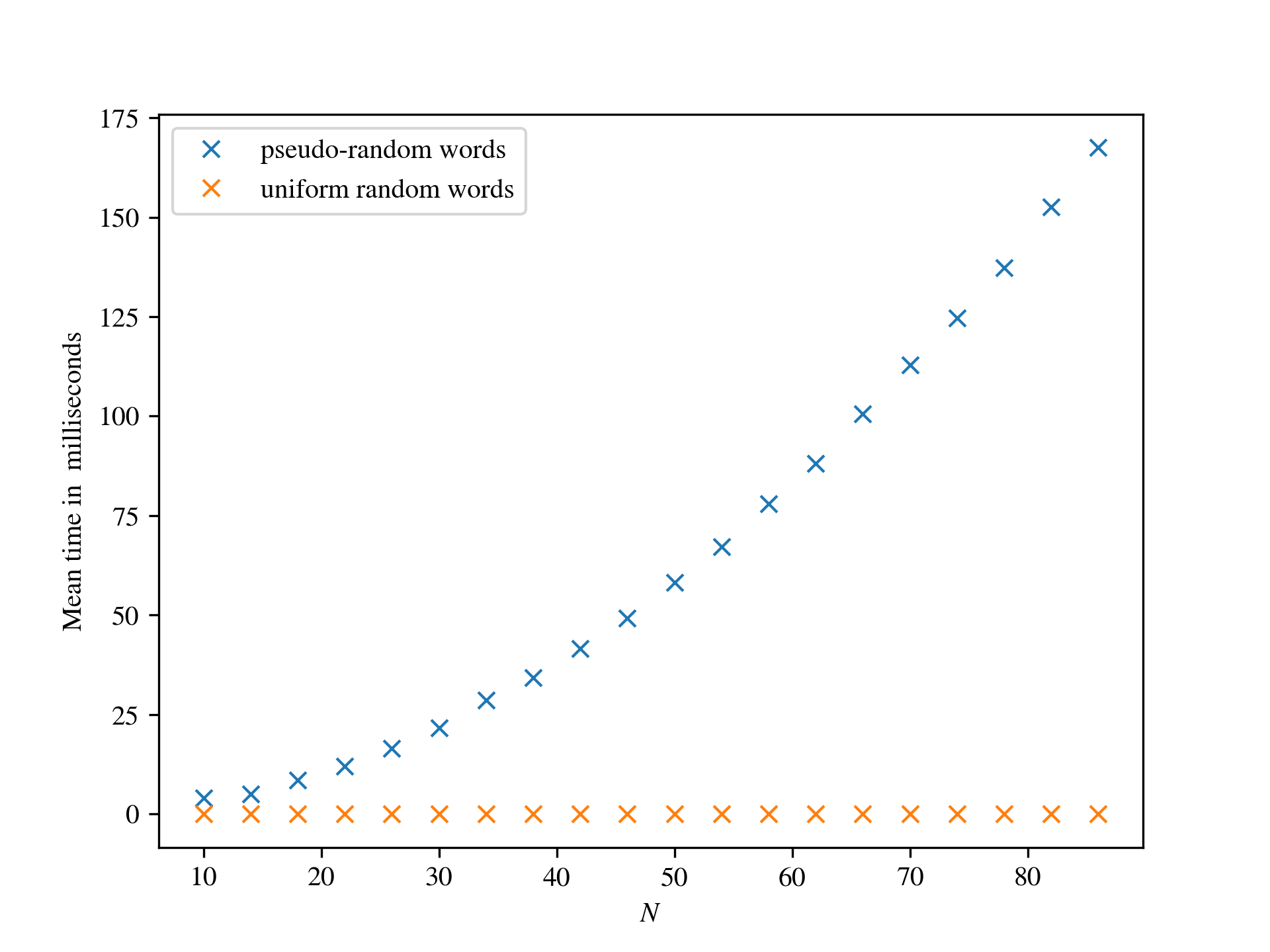}
    \includegraphics[width=0.48\textwidth]{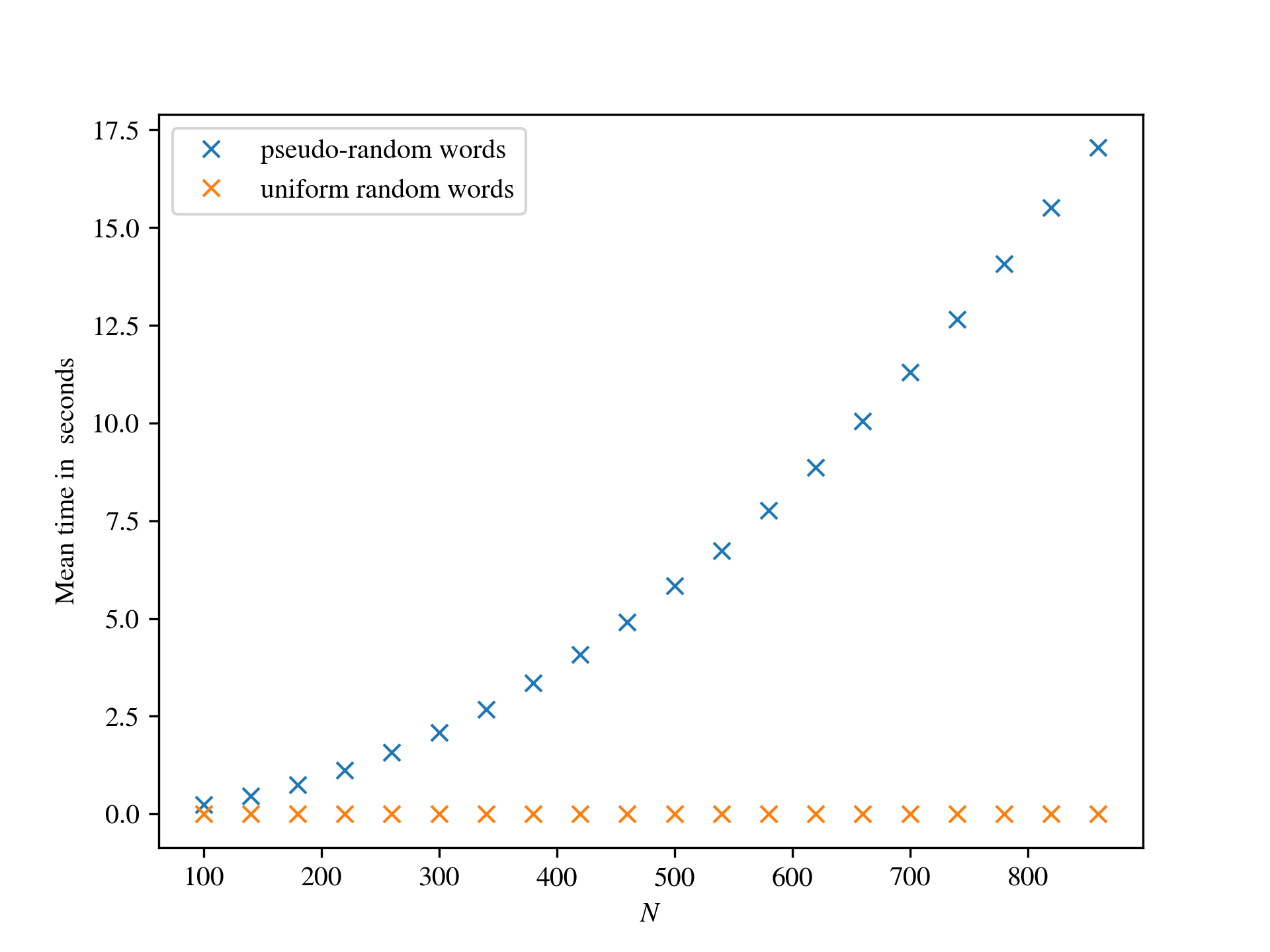}
    \caption{The mean time for a randomly chosen 2-generated 2-relation $C(4)$ monoid where the maximum length of a relation word is $100$ to compute normal forms for $20$ random words of the form
    $w_0W_{s(0)}w_1W_{s(1)} \cdots w_{N-1} W_{s(N-1)}$, and 20 words of length $4N^2 + 7N + 4$ chosen uniformly at random,
    for $N \in \{10, 14, \ldots, 86\}$ (left) and $N \in \{100, 140, \ldots, 860\}$ (right). Performing a linear regression on a log-log plot of the data in these figures indicates a time complexity of $O(N ^ { 1.83817})$ (left) and $O(N ^ {1.99715})$ (right).}
    \label{figure-normal-form-rand-2-gen-2-rel}
\end{figure}

\end{document}